\newcommand{\F}{\mathbb F}
\renewcommand{\H}{\mathbb H}
\renewcommand{\P}{\mathbb P}
\newcommand{\N}{\mathbb N}
\newcommand{\Part}{\operatorname{Part}}
\newcommand{\rk}{\operatorname{rk}}
\newcommand{\dist}{\operatorname{d}}
\newcommand{\coeff}{\operatorname{coef}}
\newcommand{\Supp}{\operatorname{S}}
\newcommand{\supp}{\operatorname{s}}
\newcommand{\poly}{\operatorname{poly}}
\newcommand{\monomials}{\operatorname{monom}}
\newcommand{\bS}{\operatorname{bS}} %block support
\newcommand{\bs}{\operatorname{bs}}
\newcommand{\descend}{\operatorname{descend}}
\renewcommand{\sp}{\mathsf{s}} %sparsity
\newcommand{\Sp}{\mathsf{S}}
\newcommand{\Span}{\operatorname{span}}
\newcommand{\paths}{\operatorname{paths}}
\newcommand{\p}{polynomial }
\newcommand{\ml}{multilinear }
\newcommand{\n}{[n]}
\renewcommand{\d}{\delta}
\renewcommand{\o}[1]{\overline{#1}}
\newcommand{\lis}[3]{{#1}_1 #2 {#1}_2 #2\dots #2 {#1}_{#3}}
\newcommand{\Dx}{D(\overline{x})}
\newcommand{\lc}{$\ell$-concentration }
\newcommand{\abs}[1]{\lvert #1 \rvert}
\newcommand{\norm}[1]{\lVert #1 \rVert}
\newcommand{\ceil}[1]{\lceil #1 \rceil}
\newcommand{\phim}{\phi^m_{\delta \ell_0}}
\newcommand{\xe}{{\o{x}}^e}
\newcommand{\xei}{{\o{x}_i}^{e_i}}
\newcommand{\Null}{\operatorname{Null}}
\newtheorem{theorem}{Theorem}
\newtheorem{lemma}{Lemma}
\newtheorem{definition}[lemma]{Definition}
\newtheorem{claim}[lemma]{Claim}
\newtheorem{corollary}[lemma]{Corollary}
\newtheorem{observation}[lemma]{Observation}
\newtheorem{remark}[lemma]{Remark}
\newtheorem*{rep@theorem}{\rep@title}
\newcommand{\newreptheorem}[2]{
    \newenvironment{rep#1}[1]{
         \def\rep@title{#2 \ref{##1} (restated)}
         \begin{rep@theorem}}
{\end{rep@theorem}}}
\newcommand{\C}{\mathcal{C}}
\newcommand{\Hit}{\mathcal{H}}
\newcommand{%
  \begin{picture}(0,0)%
  \includegraphics{Figures/.pdf}%
  \end{picture}%
  \input{Figures/.figtex}%
}[1]{%
  \begin{picture}(0,0)%
  \includegraphics{Figures/#1.pdf}%
  \end{picture}%
  \input{Figures/#1.figtex}%
}
\newcommand{\variables}{\operatorname{var}}
\newcommand{\pivot}{\operatorname{pivot}}
\newcommand{\Dpx}{D'(\overline{x})}
\newcommand{\Epx}{E'(\overline{x})}
\newcommand{\Ex}{E(\overline{x})}
\newcommand{\Fpx}{F'(\overline{x})}
\newcommand{\Fx}{F(\overline{x})}
\newcommand{\DpRx}{D'_{R}(\o{x})}
\newcommand{\EpRx}{E'_{R}(\o{x})}
\newcommand{\DpR}{D'_{R}}
\newcommand{\EpR}{E'_{R}}
\newcommand{\ER}{E_{R}}
\newcommand{\lzc}{$\ell_0$-concentration }
\newcommand{\lzcd}{$\ell_0$-concentrated }
\newcommand{\phase}{\operatorname{phase}}
\newcommand{\phaseh}{\text{phase-}}
\newcommand{\phasesh}{\text{phase-}}
\newcommand{\tower}{\operatorname{tower}}
\newcommand{\const}{\operatorname{const}}
\newcommand{\color}{\operatorname{color}}
\newcommand{\nbd}{\operatorname{nbd}}
\newcommand{\mnl}{M_{n, \ell}}
\renewcommand{\S}{\mathcal S}
\newcommand{\T}{\mathcal T}
\newcommand{\X}{\mathcal X}
\newcommand{\mnlst}[2]{\mnl\left({\S}_#1, {\T}_#2\right)}
\newcommand{\TM}{\mathcal{M}} %transfer matrix
\newcommand{\nul}{\mathcal{N}} % null vectors
\newcommand{\Mbreak}[2]
{
\begin{pmat}({|})
#1 & #2 \cr
\end{pmat}
} 
\begin{document}

\title[Low-distance multilinear depth-$3$]{Hitting-sets for low-distance multilinear depth-$3$}

\author[Agrawal]{Manindra Agrawal}
\address{Indian Institute of Technology, Kanpur, India}
\email{manindra@cse.iitk.ac.in}

\author[Gurjar]{Rohit Gurjar}
\address{IIT Kanpur}
\email{rgurjar@iitk.ac.in}

\author[Korwar]{Arpita Korwar}
\address{IIT Kanpur}
\email{arpk@cse.iitk.ac.in}

\author[Saxena]{Nitin Saxena}
\address{IIT Kanpur}
\email{nitin@cse.iitk.ac.in}

\begin{abstract}
The depth-$3$ model has recently gained much importance, as it has become a stepping-stone to understanding general arithmetic circuits. Its restriction to {\em multilinearity} has known exponential lower bounds but no nontrivial blackbox identity tests. In this paper we take a step towards designing such hitting-sets. We define a notion of {\em distance} for multilinear depth-$3$ circuits (say, in $n$ variables and $k$ product gates) that measures how far are the partitions from a mere {\em refinement}. The $1$-distance strictly subsumes the set-multilinear model, while $n$-distance captures general multilinear depth-$3$. We design a hitting-set in time poly($n^{\delta\log k}$) for $\delta$-distance. Further, we give an extension of our result to models where the distance is large (close to $n$) but it is small when restricted to certain variables. This implies the first subexponential whitebox PIT for the sum of constantly many set-multilinear depth-$3$ circuits.

We also explore a new model of read-once algebraic branching programs (ROABP) where the factor-matrices are {\em invertible} (called invertible-factor ROABP). We design a hitting-set in time poly($\text{size}^{w^2}$) for width-$w$ invertible-factor ROABP. Further, we could do {\em without} the invertibility restriction when $w=2$. Previously, the best result for width-$2$ ROABP was quasi-polynomial time (Forbes-Saptharishi-Shpilka, arXiv 2013).

The common thread in all these results is the phenomenon of low-support `rank concentration'. We exploit the structure of these models to prove rank-concentration after a `small shift' in the variables. Our proof techniques are stronger than the results of Agrawal-Saha-Saxena (STOC 2013) and Forbes-Saptharishi-Shpilka (arXiv 2013); giving us quasi-polynomial-time hitting-sets for models where no subexponential {\em whitebox} algorithms were known before.  
\end{abstract}

\maketitle
\tableofcontents
%\vspace{-0.85cm}

\section{Introduction}
%{\em Polynomial Identity Testing} is one of the most interesting 
%problems in complexity theory. 
The problem of {\em Polynomial Identity Testing} is that of
deciding if a given polynomial is nonzero. The complexity of 
the question depends crucially on the way the polynomial is input to the PIT test.
For example, if the polynomial is given as a set of coefficients
of the monomials, then we can easily check whether the polynomial is nonzero
in polynomial time. 
The problem has been studied for different input models.
Most prominent among them is the model of arithmetic circuits. 
Arithmetic circuits are arithmetic analog of boolean circuits
and are defined over a field $\F$.
They are directed acyclic graphs, where every node
is a `$+$' or `$\times$' gate and each input gate is 
a constant from the field $\F$ or a variable from
$\o{x} = \{\lis{x}{,}{n}\}$. 
Every edge has a weight from the underlying field $\F$.
The computation is done in the natural way:
Starting with the input nodes and proceeding towards the output node.
At each step, the weight of the edge $(u, v)$ is multiplied with the output of the previous gate $u$ and then input to the next gate, $v$.
Clearly, the output gate computes a polynomial
in $\F[\o{x}]$. 
We can restate the PIT problem as: Given an arithmetic circuit $\C$, 
decide if the polynomial computed by $\C$ is nonzero in time polynomial
in the circuit size. 
Note that, given a circuit, computing the polynomial explicitly
is not possible, as it can have exponentially many monomials. 
However, given the circuit, 
it is easy to compute an evaluation of the polynomial
by substituting the variables with constants. 

Though there is no known {\em deterministic} algorithm for PIT,
there are easy randomized algorithms, e.g.\ \cite{Sch80}.
%The open question is to find a {\em deterministic} algorithm for PIT. 
These randomized algorithms are based on the theorem:
A nonzero polynomial,
evaluated at a random point, gives a nonzero value with a good probability. 
Observe that such an algorithm does not need to see 
the structure of the circuit, it just uses the evaluations;
it is a {\em blackbox} algorithm.
The other kind of algorithms, where the structure of the
input is used, are called {\em whitebox} algorithms. 
Whitebox algorithms for PIT have many known applications. E.g.\ graph matching reduces to PIT.
On the other hand, blackbox algorithms (or \emph{hitting-sets}) have connections to circuit lower bound proofs. Arguably, this is currently the only concrete approach towards lower bounds, see \cite{mul12}. See the surveys by Saxena \cite{Sax09} and Shpilka \& Yehudayoff \cite{SY10} for more
%varied references.
applications.

The PIT problem has been studied for various restricted classes of circuits. 
One such class is depth-$3$ circuits. A depth-$3$ circuit is usually
defined as a $\Sigma \Pi \Sigma$ circuit: the circuit gates are in 
three layers, the top layer has an output gate which is $+$, second layer has
all $\times$ gates and last layer has all $+$ gates. 
In other words, the polynomial computed by a $\Sigma\Pi\Sigma$ circuit is of the form 
$C(\overline{x}) = \sum_{i=1}^k a_i \prod_{j=1}^{n_i}  \ell_{ij}$, 
where $n_i$ is the number of input lines to the $i$-{th} product gate 
and $\ell_{ij}$ is a linear polynomial of the form $b_0 + \sum_{r=1}^n b_r x_r$.
An efficient solution for depth-$3$ PIT is still not known.
Recently, it was shown by Gupta et al.\ \cite{GKKS13}, that depth-3 circuits are almost as powerful as general circuits.
%finding hitting-sets or lower bounds for 
%depth-$3$ circuits is enough.
A polynomial time hitting-set for a depth-$3$ circuit implies a quasi-poly-time hitting-set for general circuits.
% and that an exponential lower bound for depth-$3$ circuits will imply a super-\p lower bound for general circuits. Hence, it will be difficult to find a PIT for depth-$3$ circuits.

On the other hand, there are exponential lower bounds for depth-$3$ {\em multilinear} circuits \cite{RY09}. Since there is a connection between lower bounds and PIT \cite{Agr05}, we can hope that solving PIT for depth-$3$ \ml circuits should also be feasible. This should also lead to new tools for general depth-$3$.

A polynomial is said to be multilinear if 
the degree of every variable in every term is at most $1$.
The circuit $C(\o{x})$ is a multilinear circuit 
if the polynomial computed at every gate is multilinear.
A polynomial time algorithm is known only for a sub-class of multilinear depth-$3$ circuits, called {\em depth-$3$ set-multilinear circuits}.
This algorithm is due to Raz and Shpilka \cite{RS05} and is whitebox.
In a depth-$3$ multilinear circuit, 
since every product gate computes a multilinear polynomial, 
a variable occurs in at most one of the $n_i$ 
linear polynomials input to it. 
Thus, each product gate naturally induces a 
{\em partition} of the variables, where
each {\em color} (i.e.~part) of the partition contains the variables
present in a linear polynomial $\ell_{ij}$.  
Further, if the partitions induced by all the 
$k$ product gates are the same then the circuit is
called a depth-$3$ set-multilinear circuit.

Agrawal et al.\ \cite{ASS13} gave a blackbox 
algorithm for this class
which works in quasi-polynomial time. 
Their approach is to view the product 
$\Dx := (\prod_{j=1}^{n_i}  \ell_{ij})_{i=1}^k$
as a polynomial over the Hadamard algebra of dimension 
$k$, $\H_k(\F)$, 
and to achieve a \emph{low-support concentration} in it. 
Low-support concentration means that all the coefficient vectors
in $\Dx$
are linearly dependent on low-support coefficient vectors. 
We define a new class of circuits called
{\em multilinear depth-$3$ circuits
with $\delta$-distance}. We show low-support concentration for this model. 
To achieve that we use some
deeper combinatorial properties of our model.
We also use an improved version of a combinatorial property 
of the transfer matrix 
%, which is an improve version of a similar result
from \cite[Theorem 13]{ASS13} and we present it with a simplified proof.
Recently, Forbes et al.\ \cite{FSS13} have also improved 
\cite{ASS13}. But their methods apply only to set-\ml formulas, and not even to $1$-distance circuits.

A multilinear depth-$3$ circuit has $\delta$-distance
if there is an ordering 
on the partitions induced by the product gates, 
say $(\lis{\P}{,}{k})$, such that for any color 
in the partition $\P_i$, there exists a set of $\delta-1$ 
other colors in $\P_i$ such that the set of variables in 
the union of these $\delta$ colors are {\em exactly} partitioned
in the upper partitions, i.e.\ $\{\lis{\P}{,}{i-1}\}$.
Intuitively, the distance measures how far away are the partitions from a mere \emph{refinement} sequence of partitions, $\lis{\P}{\le}{k}$.
Our first main result gives a blackbox test for this class of circuits (Section~\ref{sec:dDist}).
%\vspace{-0.1cm}
\begin{theorem}\label{thm:dDistHS}
Let $C(\o{x})$ be a $\d$-distance depth-$3$, $n$-variate \ml circuit with top fan-in $k$. Then
there is a $n^{O(\d \log k)}$-time hitting-set for $C(\o{x})$. 
\end{theorem}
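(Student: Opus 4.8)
The plan is to follow the "shift then low-support interpolation" paradigm of \cite{ASS13}, but to prove low-support rank concentration by exploiting the $\delta$-distance structure rather than the set-multilinearity. Fix an ordering $(\P_1,\dots,\P_k)$ witnessing $\delta$-distance. View $\Dx = \left(\prod_{j=1}^{n_i}\ell_{ij}\right)_{i=1}^k$ as a polynomial with coefficients in the Hadamard algebra $\H_k(\F)$, so that $C(\o x)$ is zero iff the image of $\Dx$ under the "sum of coordinates" map is zero, and a hitting-set for $\Dx$ over $\H_k$ yields one for $C$. The goal is: after substituting $x_i \mapsto x_i + t^{w_i}$ for a suitable small shift (a univariate monomial substitution in a fresh variable $t$, with exponents $w_i$ drawn from a small set), the shifted polynomial $\Dx$ becomes $\ell$-concentrated for $\ell = O(\delta\log k)$, meaning every coefficient vector is an $\F$-linear combination of the coefficient vectors supported on $\le \ell$ variables. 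Once that is established, a standard interpolation over all monomials of support $\le \ell$ in a generic shift gives the claimed $n^{O(\delta\log k)}$ hitting-set.

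The core of the argument is the concentration step, and here is where the distance hypothesis enters. I would proceed by induction along the ordering $\P_1,\dots,\P_k$, or more precisely on the number of product gates, peeling one partition at a time. The key structural fact is that if a color $c$ of $\P_i$ together with $\delta-1$ companion colors has its union of variables exactly partitioned by the earlier partitions $\P_1,\dots,\P_{i-1}$, then the linear forms of the $i$-th gate lying in those colors interact with the first $i-1$ gates in a controlled, "set-multilinear-like" way on that block of variables. So within each such block of $\le \delta$ colors one can run the ASS-style local rank-concentration argument (which costs a shift whose support is logarithmic in the number of gates acting there), and because the block is exactly refined by the upper partitions, the gains compose across blocks without blowup. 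Summing the contributions of the $k$ gates, each contributing a factor that is $\delta$ colors wide and $\log k$ deep, yields a total support bound of $O(\delta\log k)$; this is exactly the place where I would invoke the improved transfer-matrix / base-change lemma (the strengthened \cite[Theorem 13]{ASS13}) to argue that a small shift simultaneously works for all blocks and that the local concentrations patch together into a global one.

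The main obstacle I anticipate is precisely this patching: showing that one fixed small shift achieves concentration for \emph{every} gate simultaneously, and that the "exact partition in the upper partitions" condition is strong enough to prevent the supports from accumulating multiplicatively (which would give $n^{\delta^{O(k)}}$ rather than $n^{O(\delta\log k)}$). This requires setting up the induction so that the invariant maintained is not just "gate $i$ is concentrated" but something like "the algebra generated by gates $1,\dots,i$ admits a spanning set of coefficient vectors of support $O(\delta \log i)$," and checking that adding gate $i+1$ — whose partition is $\delta$-close to a refinement of the current one — only adds $O(\delta)$ to the support per doubling of the gate count. A secondary technical point is choosing the exponents $w_i$ of the shift from a set of size $\poly(n)$ so that the relevant determinants (coming from the transfer matrices) stay nonzero; this is handled by the usual argument that these determinants are low-degree nonzero polynomials in $t$, so a generic monomial shift avoids their roots. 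Once the concentration lemma is in place with parameter $\ell = O(\delta\log k)$, Theorem~\ref{thm:dDistHS} follows by brute-force interpolation over the at most $n^{O(\delta\log k)}$ monomials of support $\le \ell$, combined with a hitting-set for the resulting low-support (hence low-dimensional) polynomial.
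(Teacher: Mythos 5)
Your high-level framing is correct and matches the paper: shift by a univariate Kronecker map, prove $\ell$-concentration with $\ell = O(\delta\log k)$, then interpolate over the $(<\ell)$-support monomials (the paper's Lemmas~\ref{lem:hsFromlConc}, \ref{lem:finalHS}, \ref{lem:kronecker}). You also correctly identify that the strengthened transfer-matrix lemma (Lemma~\ref{lem:rowComb}/\ref{lem:invertible}) is the combinatorial engine, and you correctly pin the central danger as supports compounding multiplicatively across the $k$ gates.

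But the proposal stops exactly where the real work begins, and the resolution you gesture at is not the one that works. You propose inducting ``on the number of product gates, peeling one partition at a time,'' with an invariant of the form ``the algebra generated by gates $1,\dots,i$ admits a spanning set of coefficient vectors of support $O(\delta\log i)$.'' This invariant does not survive the inductive step as stated: when gate $i+1$ is added, a monomial can pull $\Theta(n)$ variables from a single color of $\P_{i+1}$, so its support is unbounded even though each preceding gate was concentrated. The $\delta$-distance condition by itself does not say those variables come from $\delta$ colors; it says $\delta$ colors of $\P_{i+1}$ close up in the upper partitions, which is a statement about neighborhoods, not about monomial degree. Nothing in your proposal controls why a dependency established for such a high-degree monomial involves only coefficients that are themselves low-support.

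The paper's actual mechanism is qualitatively different and you do not anticipate it. First, monomials are classified into \emph{phases} (Definition~\ref{def:phase}): the phase of $S$ is the lowest partition $\P_j$ in which every neighborhood meets $S$ in $\le\delta$ variables. Crucially, a phase-$j$ monomial $S$ has a \emph{pivot tuple} $R\subseteq S$ of $\delta+1$ variables sharing a neighborhood of $\P_{j+1}$, which forces $\coeff_D(x_S)$ to vanish in coordinates $j+1,\dots,k$ (Observation~\ref{obs:chopping}). This is the structural fact that kills the accumulation problem: high-support monomials are automatically confined to an upper slice of the algebra. Second, for each pivot $R$ the paper passes to the factor $E'_R$ obtained by a partial derivative with respect to $x_R$ and proves $\ell_0$-concentration there \emph{modulo the phase-$(<j)$ coefficients} (Lemmas~\ref{lem:subcircuitl0}, \ref{lem:subcircuit}, \ref{ConcInMonomialsIncludingR}); the induction across phases (Lemma~\ref{lem:j-1Impliesj}) then lets one discharge the ``modulo'' terms. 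Third, the concentration within a subcircuit needs a separate inner induction on the number of neighborhoods, with the multiplicativity of tower coefficients (Observation~\ref{obs:CoeffMulInTowers}) doing the patching you worried about. None of these three ideas — phase classification with pivots, the partial-derivative isolation, the inner neighborhood induction — appears in your proposal; without them there is no argument that a single shift of support $O(\delta\log k)$ suffices rather than the multiplicative blowup you correctly fear.
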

%\vspace{-0.1cm}
Note that the running time becomes quasi-polynomial 
 when $\delta$ is poly-logarithmic in $nk$.
Till now, no subexponential time test was known for this class, 
even in the whitebox setting.
Also, observe that the set-multilinear class is strictly subsumed 
in the class of $1$-distance circuits. E.g.\ a circuit, whose product gates
induce two different partitions
$\P_1 = \{\{1\}, \{2\}, \dots, \{n\}\}$ and 
$\P_2 = \{ \{1,2\} , \{3,4\} , \dots, \{n-1,n\} \}$, 
has $1$-distance but is not set-multilinear. 
So, poly-logarithmic $\delta$ is a significant improvement 
from set-multilinear. 
On the other hand, general multilinear depth-$3$ circuits can have at most
$n$-distance.
So, our result is also a first step towards \ml depth-$3$ circuits.

Our second result further generalizes this class
to {\em multilinear depth-$3$ circuits having $m$ base sets
with $\delta$-distance}. A circuit is in this class
if we can partition the set of variables
into $m$ base sets, such that when restricted to any of
these base sets, the circuit has $\delta$-distance.
E.g.\ consider a circuit $C$, whose product gates induce two partitions
$\P_1 = \{ \{1,2\} , \{3,4\} , \dots, \{n-1,n\}  \}$
and $\P_2 = \{ \{ 2,3 \} , \{ 4,5 \} , \dots, \{n, 1\}  \}$. Clearly, $C$ is $(n/2)$-distance. But, when restricted to any of the two base sets
 $B_1 = \{ 1,3, \dots, n-1 \}$ and 
$B_2 = \{ 2,4, \dots, n\}$, $C|_{B_i}$ has $1$-distance 
(in fact, it is set-multilinear).
We give a quasi-polynomial time blackbox test
for this class, when $m$ and $\delta$ are poly-logarithmic 
and the base sets are {\em known} (Section~\ref{sec:baseSets}).
%\vspace{-0.1cm}
\begin{theorem}
\label{thm:baseSetsHS}
If $C(\o{x})$ is a depth-$3$ multilinear circuit, with top fan-in $k$, having $m$ base sets (known) with $\delta$-distance, then there is a $n^{O(m \delta \log k)}$-time hitting-set for $C$. 
\end{theorem}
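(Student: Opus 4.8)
The plan is to reduce Theorem~\ref{thm:baseSetsHS} to a \emph{relativized} low-support rank concentration for the Hadamard-algebra polynomial $\Dx := \left(\prod_{j}\ell_{ij}\right)_{i=1}^{k}\in\H_k(\F)[\o{x}]$ attached to $\C$, and then to run the same endgame as in the proof of Theorem~\ref{thm:dDistHS}: once a polynomial-degree shift of the variables makes every coefficient vector of $\Dx$ an $\F$-linear combination of its coefficient vectors of support at most $\ell'$, combining that shift with a low-support generator (of Shpilka--Volkovich type) yields an $n^{O(\ell')}$-time hitting-set. Thus it suffices to reach $\ell'$-concentration with $\ell' = O(m\delta\log k)$.

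Fix the (known) decomposition $\o{x} = \o{x}_{B_1}\sqcup\cdots\sqcup\o{x}_{B_m}$ and fix a block $B_j$. Splitting each linear form $\ell_{i,c}$ on a color $c$ of $\P_i$ as $\ell_{i,c} = \ell_{i,c}^{\,\mathrm{out}} + \sum_{r\in c\cap B_j}b_r x_r$ with $\ell_{i,c}^{\,\mathrm{out}}\in\F[\o{x}_{[n]\setminus B_j}]$ exhibits $\Dx$ as a $\delta$-distance object \emph{in the variables $\o{x}_{B_j}$} over the commutative ground ring $A_j := \H_k(\F)[\o{x}_{[n]\setminus B_j}]$: its partition on $\o{x}_{B_j}$ at the $i$-th product gate is exactly $\P_i|_{B_j}$, which by hypothesis satisfies the $\delta$-distance condition for some ordering of the product gates (the ordering may depend on $j$). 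Re-running the argument behind Theorem~\ref{thm:dDistHS} over $A_j$ — its combinatorial input, including the transfer-matrix property in the style of \cite[Theorem 13]{ASS13}, only ever refers to the colors restricted to $B_j$, and the span relations it produces are $\F$-linear — gives a shift $\sigma_j$ of the $\o{x}_{B_j}$-variables after which $\Dx$ is \emph{$\ell$-concentrated with respect to the block $B_j$}: for every $\o{x}_{B_j}$-monomial $\mu$, its coefficient $\coeff_\mu(\Dx)\in A_j$ is an $\F$-linear combination of the coefficients $\coeff_\nu(\Dx)$ with $\nu$ an $\o{x}_{B_j}$-monomial of support at most $\ell = O(\delta\log k)$.

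I would then apply the $m$ shifts $\sigma_1,\dots,\sigma_m$ simultaneously. Each $\sigma_j$ touches only the $\o{x}_{B_j}$-variables and does not change which $\o{x}$-variables occur in which $\ell_{ij}$, hence leaves every $\P_i$ (and so every restricted partition) intact, so the block-$B_j$ analysis above is legitimate for each $j$ in turn; and since $\sigma_{j'}$ for $j'\ne j$ is an invertible substitution on a disjoint block, applying it to the identity $\coeff_\mu(\Dx) = \sum_\nu\lambda_\nu\coeff_\nu(\Dx)$ (with $\lambda_\nu\in\F$ and $\mu,\nu$ ranging over $\o{x}_{B_j}$-monomials) simply carries it to $\coeff_\mu(\sigma_{j'}\Dx) = \sum_\nu\lambda_\nu\coeff_\nu(\sigma_{j'}\Dx)$ with the \emph{same} scalars. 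So after all $m$ shifts, $\Dx$ is $\ell$-concentrated with respect to each of $B_1,\dots,B_m$ at once. Now peel blocks off one at a time: writing a monomial as $\o{x}^e = \prod_{j}\o{x}_{B_j}^{e_j}$, use block-$B_1$ concentration to express $\coeff_{\o{x}^e}(\Dx)$ as an $\F$-combination of $\coeff_{\o{x}^{e'}}(\Dx)$ with the $B_1$-part of $e'$ of support at most $\ell$ and the $B_2,\dots,B_m$-parts unchanged; then apply block-$B_2$ concentration to each such term (which does not touch the already-reduced $B_1$-part), and so on through $B_m$. After $m$ rounds every surviving monomial has total support at most $m\ell = O(m\delta\log k)$, which is the required $\ell'$-concentration, and Theorem~\ref{thm:baseSetsHS} follows.

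The main obstacle is the second step. Because a single linear form $\ell_{i,c}$ may involve variables from several base sets, $\Dx$ does \emph{not} factor as a Hadamard product over $B_1,\dots,B_m$, so one cannot simply invoke a ``Hadamard product of concentrated polynomials is concentrated'' lemma; instead one must verify that the proof of Theorem~\ref{thm:dDistHS} goes through verbatim with the enlarged ground ring $A_j$ in place of $\H_k(\F)$, producing span relations that are $\F$-linear and uniform over $\o{x}_{[n]\setminus B_j}$ — which is precisely what the peeling in the third step needs in order not to re-inflate the support of the blocks already processed. Establishing the relativized transfer-matrix rank bound (the analogue of \cite[Theorem 13]{ASS13} over $A_j$) is where the real work lies; everything else is bookkeeping.
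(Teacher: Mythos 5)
Your second step asserts more than the machinery behind Theorem~\ref{thm:dDistHS} can deliver, and the gap is structural rather than ``just the real work.'' Running the $\delta$-distance argument over the enlarged ground ring $A_j=\H_k(\F)[\o{x}_{[n]\setminus B_j}]$ (really over its field of fractions $\tilde\F_j$, adjoined with $t_j$) does give $\ell$-concentration for $D'$ viewed as a polynomial in $\o{x}_{B_j}$, but the scalars in the resulting span relations live in $\tilde\F_j(t_j)$: the transfer-matrix argument solves a linear system over whatever field the coefficient matrix $U'$ sits in, so the $\alpha_T$ you get are rational functions in $\o{x}_{[n]\setminus B_j}$ and $t_j$, \emph{not} elements of $\F$ (or even $\F(t_j)$). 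That is precisely what your peeling step needs and does not have: if the scalars depend on the other blocks' variables, you cannot extract the coefficient of a fixed $B_1$-monomial from the block-$B_2$ identity, and the already-reduced $B_1$-support re-inflates. You flag ``establishing the relativized transfer-matrix rank bound over $A_j$'' as the open issue, but the transfer-matrix lemma (Lemma~\ref{lem:rowComb}/\ref{lem:invertible}) relativizes harmlessly; the true obstruction is the field of the scalars, and it does not go away by reproving the combinatorics over $A_j$.

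The paper gets around this by a genuinely different route. It does use Lemma~\ref{lem:dDistlConc} over the big field $\tilde\F(t_m)$ to concentrate in the $\o{x}_m$-block, but then invokes Lemma~\ref{lem:polyToCoeffs} (which you never use) to convert the $\tilde\F(t_m)$-linear dependence among polynomial-valued coefficient vectors into an $\F_t$-linear dependence among the underlying $\H_k(\F_t)$-coefficients. Crucially, Lemma~\ref{lem:polyToCoeffs} produces dependencies of each $u'_{T,V}$ on the \emph{entire} family $\{u'_{S,U}:\ S\text{ small},\ U\text{ arbitrary}\}$ --- i.e.\ it sacrifices exactly the ``uniformity over the other blocks'' your peeling needs. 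The paper then recovers control of $U$ not by another round of block concentration but by Lemma~\ref{lem:recursive}: each coefficient polynomial $\hat u_S$ (a derivative-cum-evaluation of $D$) inherits $(m-1)$-base-sets-$\delta$-distance, so an induction on $m$ applies. So where you propose ``$m$ independent block concentrations, then simultaneous shift, then peel,'' the paper does ``one block concentration, one non-uniform $\F$-linear downgrade, then recurse on the inner coefficient polynomials.'' To repair your argument you would need to import precisely these two lemmas and re-organize it as a recursion; as written, the peeling step has no justification.
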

%\vspace{-0.1cm}
These results generalize to suitable higher-depth circuit models. 
But we focus, in this paper, only on the depth-$3$ case to avoid unnecessary
complicated notations. 
Theorem~\ref{thm:baseSetsHS} also implies a whitebox PIT, for the
sum of $c$ (constant) set-multilinear depth-$3$ circuits (top fan-in $k$), with time complexity
$n^{O(n^{1-\epsilon} \log k)}$, where $\epsilon := 1/2^{c-1}$ % and $k$ is the top fan-in
(see Appendix~\ref{sec:sumSetMult}).

Our third result expands the realm of low-support concentration to
\ml variants of Arithmetic Branching Programs (ABP).
An ABP is another interesting model of computing polynomials. 
It consists of a ditected acyclic graph with a source and a sink. 
The edges of the graph have polynomials as their weights.
The weight of a path is the product of the weights of the edges
present in the path. 
The polynomial computed by the ABP
is the sum of the weights of all the paths from the source to the sink.
It is well known that for an ABP, the underlying graph 
can seen as a layered graph such that all paths from the source to
the sink have exactly one edge in each layer. 
And the polynomial computed by the ABP can be written as 
a \emph{matrix product}, where each matrix corresponds to a layer. 
The entries in the matrices are weights of the corresponding edges. 
The maximum number of vertices in a layer, 
or equivalently, the dimension of the corresponding matrices 
is called the \emph{width }of the ABP.
Ben-Or \& Cleve \cite{BOC92} have shown that 
a polynomial computed by a formula of logarithmic depth and constant fan-in,
can also be computed by a width-$3$ ABP.
Moreover, Saha et al.\ \cite{SSS09} showed that PIT for depth-$3$ 
circuits reduces to PIT for width-$2$ ABP.
Hence, constant width ABP is already a strong model. 
Our results are for constant width ABP with some natural restrictions. 
 
An ABP is a read once ABP (ROABP) if the entries in the 
different matrices come from disjoint sets of variables. 
Forbes et al.\ \cite{FSS13} recently gave a quasi-polynomial time
blackbox test for ROABP, when 
the entries in each matrix are essentially {\em constant}-degree univariate polynomials. 
Their approach, too, involves low-support concentration. 

Our result is for ROABP with a further restriction. 
We assume that all the matrices in the matrix product, except
the left-most and the right-most matrices, are invertible. 
We give a blackbox test for this class of ROABP. 
In contrast to \cite{FSS13}, our test works in \emph{polynomial time}
if the dimension of the matrices is constant;
moreover, we can handle univariate factor matrices with any degree.
%\vspace{-0.2cm}
\begin{theorem}[Informal version]
\label{thm:ROABPHS}
Let $C(\o{x}) = D_0^T (\prod_{i=1}^{d} D_i ) D_{d+1}$
be a polynomial such that $D_0 \in \F^w[{x}_{j_0}]$ and $D_{d+1} \in \F^w[{x}_{j_{d+1}}]$
and for all $i \in [d]$, $D_i \in \F^{w \times w}[x_{j_i}]$ is an invertible matrix (order of the variables is unknown).
Let the degree bound on $D_i$ be $\delta$ for $0 \leq i \leq d+1$.
Then there is a $\poly((\delta n)^{w^2})$-time hitting-set for $C(\o{x})$. 
\end{theorem}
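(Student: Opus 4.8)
The plan is to reduce the hitting-set construction to proving \emph{low-support rank concentration} for the polynomial $C(\o{x})$, viewed over a suitable $w^2$-dimensional algebra, and then to apply a standard coefficient-extraction hitting-set once concentration is in hand. Concretely, I would first observe that $C(\o{x}) = D_0^T(\prod_{i=1}^d D_i)D_{d+1}$ can be thought of as a polynomial whose ``coefficients'' (of monomials in $\o x$) are vectors in $\F^{w^2}$ (or $\F^w$), obtained by recording the contribution of each matrix slot; the key is that the linear span of these coefficient vectors is what controls whether a shift-and-sparse-evaluation scheme works. So Step~1 is to set up this algebra and to recall (from the ROABP literature / the Forbes--Saptharishi--Shpilka style argument sketched earlier in the paper) that if, after a small shift $x_j \mapsto x_j + t_j$ of the variables by the elements of a small set of field constants (a ``Kronecker-type'' or Vandermonde-based shift), the shifted polynomial becomes \emph{$\ell$-concentrated} --- meaning every coefficient vector lies in the span of the coefficient vectors of monomials of support $\le \ell$ --- then sparse PIT techniques give an $n^{O(\ell)}$-time (times $\poly(\delta)$) hitting-set. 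Here we will aim for $\ell = O(\log w) = O(1)$ when $w$ is constant, which is what yields the claimed $\poly((\delta n)^{w^2})$ bound.

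Step~2 is the heart: proving that an invertible-factor ROABP achieves $\ell_0$-concentration with $\ell_0 = O(\log w)$ after a small shift, where the shift has only $\poly(\delta n)$ possible values that can be searched over. The structural point to exploit is invertibility of $D_1,\dots,D_d$: because each interior factor is an invertible matrix, the partial products $D_0^T D_1 \cdots D_i$ and $D_{i+1}\cdots D_{d+1}$ are ``as non-degenerate as possible,'' so the rank of the coefficient space cannot collapse as we move along the ABP. I would proceed by induction on the layers, maintaining the invariant that the coefficient space of the prefix polynomial $D_0^T D_1 \cdots D_i$ is spanned by its low-support coefficients; when we multiply by the next factor $D_{i+1}(x_{j_{i+1}})$, which is univariate of degree $\le \delta$, the support can grow by at most a bounded amount, and invertibility ensures we never need to ``reach back'' more than $O(\log w)$ variables to re-express a new coefficient vector --- essentially because a space of dimension $\le w^2$ inside the Hadamard-type algebra, once shifted generically, concentrates at support $\log$ of its dimension (this is the improved transfer-matrix / rank-concentration lemma the paper advertises as a strengthening of \cite[Theorem~13]{ASS13}). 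The shift needed at each step is governed by a polynomial of bounded degree in the $t_j$'s being nonzero, and a union bound over the $d+1$ steps keeps the total number of bad shifts $\poly(\delta n)$, so a Vandermonde-style family of $\poly(\delta n)$ shifts contains a good one.

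Step~3 is bookkeeping: combine the good shift from Step~2 with the sparse-PIT hitting-set from Step~1. After shifting, $C$ becomes $\ell_0$-concentrated with $\ell_0 = O(\log w)$, so its nonzeroness is witnessed by a monomial of support $\le \ell_0$; a hitting-set that, for every set of $\le \ell_0$ variables, plugs in all low-degree (degree $\le \delta$) assignments to those variables and a generic assignment elsewhere has size $n^{O(\ell_0)} \cdot \poly(\delta) = (\delta n)^{O(\log w)}$. Finally, merging with the $\poly(\delta n)$-sized shift family gives the overall bound $\poly((\delta n)^{w^2})$ (the exponent $w^2$ rather than $\log w$ in the theorem statement is the safely-loose bound, since $\log w \le w^2$; a tighter analysis would give $(\delta n)^{O(\log w)}$). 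The main obstacle I anticipate is Step~2 --- specifically, arguing that invertibility of the interior factors really does force concentration at support only $O(\log w)$, uniformly across all $d$ layers, and controlling the cumulative shift requirement so that a single $\poly(\delta n)$-size shift family works for every prefix simultaneously; the invertibility is exactly what prevents the pathological rank collapse that a general width-$w$ ROABP could exhibit, and making that intuition into a clean inductive invariant (and quantifying the degree of the ``bad shift'' polynomial at each layer) is where the real work lies.
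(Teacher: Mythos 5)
Your high-level plan (shift to induce rank concentration over a $w^2$-dimensional algebra, then use a sparse-PIT hitting set on the concentrated polynomial) matches the paper's framework, but the heart of Step~2 is wrong, and it is exactly the place where the real work happens.

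The paper does \emph{not} prove $O(\log w)$-support concentration, nor does it use the improved transfer-matrix lemma (Lemma~\ref{lem:rowComb}) for the ROABP case. That lemma lives in the \emph{Hadamard} algebra $\H_k(\F)$ and underlies the depth-$3$ $\delta$-distance result (Theorem~\ref{thm:dDistHS}); the paper is explicit that the ROABP argument is ``very different'' because the relevant algebra $\F^{w\times w}$ is \emph{non-commutative}, so shifting-and-lifting dependencies via the transfer matrix $\TM$ simply does not go through. What the paper actually proves is \emph{$w^2$-block-support} concentration: decompose every coefficient as $D_e=\prod_i D_{ie_i}$, define the block-support $\bS(e)$, and introduce a parent/child relation where a parent $D_{e^*}$ replaces a constant factor $D_{j\mathbf{0}}$ by $D_{je^*_j}$ for some $j$ strictly to the left or right of $\bS(e)$ (this one-sidedness is forced by non-commutativity). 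Invertibility of the $D_{i\mathbf 0}$'s is used precisely to write $D_{e^*}=D_e\cdot D_{e^{-1}e^*}$ (or the symmetric left version), so that a dependence of $D_e$ on its descendants lifts to a dependence of $D_{e^*}$ on its descendants (Lemma~\ref{lem:parentchild}); then a chain of length $w^2+1$ forces a dependence (Lemma~\ref{lem:blockSupportk}), and induction gives $w^2$-block-concentration (Lemma~\ref{lem:bSConc}). Finally, block-concentration is \emph{composed} with $\ell'$-support concentration of each sparse factor $D_i$ (from \cite{ASS13}) to get $\ell' w^2$-support concentration (Lemma~\ref{lem:ll'}), and the shift also has to make each $D_{i\mathbf 0}$ invertible via $\det D_i$ being a sparse nonzero polynomial. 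None of this is in your sketch.

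Two concrete errors that would sink the write-up: (i) claiming the $w^2$ exponent is ``safely-loose'' and that a tighter analysis gives $(\delta n)^{O(\log w)}$ --- the $w^2$ comes from the dimension of the matrix algebra and the chain-length argument, not from a $\log$-of-dimension bound, and the paper gives no route to $\log w$; (ii) invoking ``the improved transfer-matrix lemma'' to justify $\log$-support concentration --- that lemma proves a row-weight property of a $0/1$ inclusion matrix over $\F$ for the commutative Hadamard setting and does not apply to coefficient vectors living in $\F^{w\times w}$ with matrix multiplication. Your intuition that invertibility prevents rank collapse is right, but the actual mechanism is the one-sided parent/child lifting over block-support, and you need to supply it.
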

%\vspace{-0.1cm}
The proof technique here is very different from the first two theorems (now, we show rank concentration over a {\em non-commutative} algebra).
Our algorithm works even when the factor matrices have their entries as general sparse 
polynomials (still over disjoint sets of variables) instead of 
univariate polynomials (see detailed version in Section~\ref{sec:ROABP}).
%Theorem~\ref{thm:ROABPHS-b}). 
Running time in this case is quasi-polynomial.
We remark that the invertibility seems to restrict the computing power of ABP.
If the matrices are $2 \times 2$, we do not need the assumption
of invertibility (see Theorem~\ref{thm:ROABP22HS}, Appendix~\ref{app:2ROABP}). 
So, for width-$2$ ROABP our results are strictly stronger than \cite{FSS13}.
 
%\vspace{-0.25cm}
\subsection{Main idea of Theorem~\ref{thm:dDistHS}}
As mentioned earlier, the basic idea is to show
low-support concentration in $D(\o{x})$, but by an 
efficient shift (Lemma~\ref{lem:dDistlConc}).
While showing low-support concentration in
set-multilinear case, the key idea of \cite{ASS13}
was to identify low degree subcircuits of $D(\o{x})$
which have `true coefficients', i.e.\
each of these subcircuits is such that, 
when multiplied by an appropriate constant vector, its coefficients 
become the coefficients of $\Dx$. 
%only a constant vector needs to be multiplied with all of the coefficients in a subcircuit
%to obtain the coefficients of the original circuit.
%their coefficients are different from 
%the original circuit only by a multiplicative
%factor.
Then they show that
an efficient shift can ensure low-support concentration
in these subcircuits. 
The final step is to argue that low-support concentration
in the subcircuits translates to low-support concentration in 
the actual circuit. 

Such {\em true} subcircuits,
in a general multilinear circuit,
may have a high degree. % (hence, a large support).
In the case of small distance circuits, there exist true subcircuits
with low degree in the last partition.
But they may have many high degree monomials in the upper (other)
partitions.
That is not good, because the degree of the subcircuit affects the hitting-set size.
%To prove low-support concentration, all of these high-support coefficients from upper partitions need to be shown to be low-support concentrated. 
%And all of these high-support coefficients are from the upper partitions.
%We would use the observation that the high degree coefficients are only from the upper partitions. 
%Then, the monomials with low-support in all 

This inspired us to prove concentration in phases, i.e.\ by induction on $k$.
We divide the coefficients 
into $k$ phases (one corresponding to each partition).
Phase-$j$ coefficients are those which have nonzero values only in 
the coordinates $\{1, 2, \dots, j\}$.
We show concentration in successive phases (Lemma~\ref{lem:j-1Impliesj}). 
In each phase, the concentration in the previous phases is assumed. 

How do we isolate phase-$j$ coefficients?
We take an appropriate partial derivative of the circuits,
which ensures that the values in the coordinates $\{j+1, j+2, \dots, k\}$ are 
zero. 
Since the partial derivatives add to the complexity, they should be of small order
(Observation~\ref{obs:pivot}).

Now, we identify subcircuits of this \emph{derivative polynomial},
which have low degree in the $j$-th coordinate.
The high degree in other coordinates does not matter as we assume that there is already concentration in the previous phases. 
We show low-support concentration in these subcircuits (Lemma~\ref{lem:subcircuitl0}; it is the most technical part of the proof). 
This
in turn implies low-support concentration in the derivative polynomial 
(Lemma~\ref{lem:subcircuit}) by another induction on the {\em neighborhoods} (defined in Section \ref{subsec:deltaDistance}). 
To show low-support concentration in a subcircuit
we need to show some combinatorial properties of the {\em transfer matrix} (Lemma~\ref{lem:rowComb}).
Finally, the concentration in the derivative 
polynomial implies concentration among the phase-$j$ coefficients
of $D(\o{x})$ (Lemma~\ref{ConcInMonomialsIncludingR}).

%Thus, 
%the concentration we get is of a bit higher support. 
Carrying the argument from phase-$1$ to phase-$k$ finishes the proof. The cost of this is only quasipoly in $k$ (because the dimension of the underlying vector space is $k$ and there is an implicit `doubling effect' in Lemma \ref{lem:rowComb}), but {\em exponential} in $\delta$ (because the true subcircuits have degree $\delta$, which makes the Kronecker map expensive). Any improvement in the latter would lead to the first nontrivial PIT for multilinear depth-$3$ circuits.

%We should also point out that in the
%intemediate step, we prove concentration in a general low degree
%polynomial instead of a set-multilinear low degree polynomial, as done by \cite{ASS13}. 
%This is another place where we generalized the ideas of \cite{ASS13}. 

%\vspace{-0.27cm}
\section{Preliminaries}
%%\vspace{-0.27cm}
$\n$ denotes the indices $1$ to $n$. Let $2^{[n]}$ denote the set of all subsets of $[n]$.
$\Part(S)$ denotes the set of all possible partitions of the set $S$. Elements in a partition are called {\em colors}. The \emph{support} of a monomial is the set of variables that have degree $ \ge 1 $ in that monomial. The \emph{support size }of the monomial is the cardinality of its support. $\F^{m \times n}$ represents the set of all $m \times n$ matrices over the field $\F$.
$\F^{S \times T}$, where $S$ and $T$ are sets, represents the set of all $\abs{S} \times \abs{T}$ matrices over the field $\F$, indexed by the elements of $S$ and $T$.
The matrices in this paper are often indexed by subsets of $\n$.

%The non-trivial depth-$3$ circuits are $\Sigma\Pi\Sigma$ circuits. They compute the summation 
%of product of linear terms. A polynomial computed by a $\Sigma\Pi\Sigma$ circuit is of the form 
%$C(\overline{x}) = \sum_{i=1}^k a_i \prod_{j=1}^{n_i}  \ell_{ij}$, where $n_i$ is the number of input lines to $i$-th product gate and $\ell_{ij}$ is a linear polynomial of the form $b_0 + \sum_{i=1}^n b_i x_i$. 
%Since we will study only polynomial sized circuits, $k$ and $n_i$ are both polynomial in $n$. 
%A polynomial is said to be \emph{multilinear} if the degree of every variable in every term is at most $1$. 
A multilinear \p can be represented as $\sum_{S\subseteq [n]} a_Sx_S$, where $x_S$ is the monomial $\prod_{i\in S}x_i$. We will sometimes use the notation $ S $ for the multilinear monomial $ x_S $. 
%The circuit $C(\overline{x})$ is a \emph{multilinear circuit} if the polynomial computed at every gate is multilinear. Hence, in  a depth-$3$ multilinear circuit, since every product gate computes a multilinear polynomial, a variable occurs in at most one of the $n_i$ linear polynomials input to it. Thus, each product gate naturally induces a {\em partition} of the variables. Each \emph{color} of the partition corresponds to the variables in a single linear \p. E.g.\ Let $\P = \lis{X}{\sqcup}{n_i}$ be a partition on $[n]$. It partitions the $n$ variables into $n_i$ colors: $\lis{X}{,}{{n_i}}$. Each product gate in a multilinear circuit introduces one such partition.

%Let $\P$ be a partition on $\{x_1, x_2, \dots, x_n\}$. 
%A multilinear polynomial $Q(\overline{x})$ is \emph{$\P$-set-multilinear }if there is at most one variable from each color $X_j$ in the support of each monomial of the \p $Q$. E.g.\ if $\P = \{x_1, x_2\} \sqcup \{x_3, x_4\} \sqcup \{x_5, x_6\}$, then $ x_1x_4x_5$ and $x_1x_4$ are $\P$-set-multilinear; $x_1x_2$ is not. A circuit is $\P$-set-multilinear if the \p computed at each of its gates is $\P$-set-multilinear. A depth-$3$ \ml circuit is $\P$-set-multilinear if every product gate respects the same partition $\P$. Observe that every %$\P$-set-multilinear \p is multilinear. Every 
%multilinear polynomial is a $\P$-set-multilinear \p for the partition $\P = \{x_1\}\sqcup\{x_2\}\sqcup \dots \sqcup \{x_n\}$.

Henceforth, we will only discuss polynomials computed by depth-$3$ \ml circuits, unless explicitly stated otherwise.
The \p computed by a depth-$3$ circuit, $C(\overline{x}) = \sum_{i=1}^k a_i \prod_{j=1}^{n_i}  \ell_{ij}$ can also be written as the inner product of the vector ${\o{a}} = (\lis{a}{,}{k})^T $ and $D(\overline{x})$, where $D(\overline{x})$ is a polynomial over the $k$-dimensional Hadamard algebra $\H_k(\F)$.
The Hadamard algebra $\H_k(\F)$ is defined as $(\F^k, +, \star)$, where 
$+$ and $\star$ are coordinate-wise addition and multiplication.
%$D(\overline{x})\in \H_k(\F)[\lis{x}{,}{n}]$. 
The $i$-th coordinate of $D(\overline{x})$ is $\prod_{j=1}^{n_i}  \ell_{ij}$. Hence, $C(\overline{x}) = \overline{a}^T D(\overline{x}) $. Let $ \coeff_D(x_S) $ denote the coefficient of the monomial $ x_S $ in the \p $ \Dx $. The coefficients $\{\coeff_D(x_S) \mid S \subseteq \n\}$ of $D(\o{x})$ form a $(\le  k) $-dimensional vector space over the base field $ \F $. Our rough plan is to show that this vector space is spanned by the coefficients of the `low-degree' monomials. We thus define \emph{$\ell$-concentration} for a polynomial $ \Dx $ whose coefficients are vectors from a $ k $-dimensional vector space.

\begin{definition}[$\ell$-concentration]
Polynomial $ \Dx \in \H_k(\F)[\lis{x}{,}{n}]$ is $ \ell $-concentrated if
$
\rk_\F \{\coeff_D (x_S) \mid S \subseteq [n], \abs{S} < \ell\} = \rk_\F \{\coeff_D (x_S) \mid S \subseteq [n]\}.
$
\end{definition}

The following Lemma from \cite{ASS13} says that a \p $C(\o{x})$, with an $\ell$-concentrated polynomial $\Dx$, has a hitting set
%gives a hitting-set for the polynomial $C(\o{x})$ 
(for a proof, see Section~\ref{app:dDist}).
%\vspace{-0.1cm}
\begin{lemma}\label{lem:hsFromlConc}%1
If $D(\overline{x})$ is $\ell$-concentrated, then there is a $n^{O(\ell)}$-time hitting-set for $C(\overline{x})$.
\end{lemma}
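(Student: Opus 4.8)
The plan is to reduce PIT for $C(\o{x}) = \o{a}^T D(\o{x})$ to the case where $D$ has low-support coefficients, and then apply a standard Kronecker-type substitution that preserves all low-support monomials. The starting observation is that, since $C(\o{x}) = \o{a}^T D(\o{x})$, the coefficient of $x_S$ in $C$ is $\o{a}^T \coeff_D(x_S)$. If $D$ is $\ell$-concentrated, then $\rk_\F\{\coeff_D(x_S) : |S| < \ell\} = \rk_\F\{\coeff_D(x_S) : S \subseteq [n]\}$, which means every coefficient vector $\coeff_D(x_S)$ lies in the span of the low-support ones. Consequently, if $\o{a}^T \coeff_D(x_S) = 0$ for all $S$ with $|S| < \ell$, then $\o{a}^T \coeff_D(x_S) = 0$ for \emph{all} $S$, i.e.\ $C \equiv 0$. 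Contrapositively: if $C(\o{x}) \neq 0$, then $C$ has a nonzero monomial of support size less than $\ell$.

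Given this, the second step is a monomial-preserving variable substitution. First I would substitute $x_i \mapsto y_i z^{w_i}$ for a suitable Kronecker weight assignment $w : [n] \to \N$ that is a \emph{perfect hash for low-support sets}: we want the map $S \mapsto \sum_{i \in S} w_i$ to be injective on $\{S : |S| < \ell\}$. Such weights with $w_i \le (n\ell)^{O(1)}$ (hence $z$-degree at most $(n\ell)^{O(1)} \cdot \ell \cdot$, a polynomial bound) can be found deterministically by iterating over a small family of prime moduli, as in \cite{ASS13}; this collapses the multilinear polynomial to a polynomial in $z$ (with $y$-monomials tagging which variables appear) in which the image of a minimal-support nonzero monomial survives and does not cancel against any other low-support monomial. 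A cleaner variant: pick $\binom{[n]}{<\ell}$-many substitutions directly — for each candidate support set $S$ with $|S| < \ell$, set the variables outside $S$ to $0$ and the variables in $S$ to fresh indeterminates (or to the points of a small hitting set for multilinear polynomials on $<\ell$ variables, e.g.\ a grid of size $2^{|S|} \le 2^{\ell}$, or a $|S|+1$-point set via Combinatorial Nullstellensatz). Since $C$ restricted to $S$ is a nonzero polynomial in at most $\ell$ variables of individual degree $\le 1$, a grid of size $2$ in each of the $\le \ell$ surviving variables is a hitting set. The total number of evaluation points is then $\sum_{j < \ell} \binom{n}{j} 2^{j} = n^{O(\ell)}$, as claimed.

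The main thing to be careful about — though it is not really an obstacle — is ensuring the chosen substitution actually \emph{isolates} a low-support monomial rather than merely not killing the polynomial: with the ``set outside $S$ to zero'' approach this is automatic, since zeroing out $[n] \setminus S$ exactly kills every monomial not contained in $S$, and for the particular $S$ that is the support of a surviving nonzero monomial of $C$, the restriction $C|_{x_i = 0,\, i \notin S}$ is a nonzero multilinear polynomial in the variables $\{x_i : i \in S\}$, which the $2^{|S|}$-grid detects. Enumerating $S$ over all sets of size $< \ell$ and taking the union of the corresponding grids gives a hitting set of size $n^{O(\ell)}$, and each evaluation costs $\poly(n, |C|)$ time; combining, the hitting-set construction and test run in $n^{O(\ell)}$ time. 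This completes the argument.
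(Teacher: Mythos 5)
Your proposal is correct and takes essentially the same route as the paper. The reduction ``$D$ is $\ell$-concentrated $\Rightarrow$ if $C\neq 0$ then some monomial of support $<\ell$ has nonzero coefficient'' is exactly the paper's Claim (argued there by expressing every $\coeff_C(x_S)$ as a linear combination of low-support ones, which is the same linear-algebra step you phrase contrapositively), and your ``cleaner variant'' hitting-set --- the union over all $|S|<\ell$ of the $\{0,1\}^S$-grid extended by zeros --- is precisely the paper's set of $0/1$-vectors of Hamming weight $<\ell$, of size $\sum_{i<\ell}\binom{n}{i}=n^{O(\ell)}$. (The Kronecker/$z^{w_i}$ digression at the start of your second step is unnecessary and can be dropped; the restriction-to-$S$ argument is what actually carries the proof.)
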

%\vspace{-0.1cm}
However, observe that low-support concentration does not exist in all polynomials $\Dx$. E.g.\ in the polynomial $ \Dx = \overline{c}\cdot x_1x_2\dots x_n$, there are no low-support monomials. To counter this problem, the polynomial is \emph{shifted}. Each input $ x_i $ to the polynomial is replaced with $ x_i + t_i $, where $ t_i $s are symbolic constants adjoined to the base field $ \F $. Now, the input field is considered to be the field of fractions $ \F(\overline{t}) $, where $\o{t}= \{ t_1, t_2, \dots, t_n \}$. Since after shifting, the coefficients of high-support monomials contribute an additive term to the coefficients of the low-support monomials, we can hope to prove \lc over this field of fractions. We will use the notation $ D'(\overline{x}) $ as well as $ D(\o{x} + \o{t}) $ to mean $ D(x_1 + t_1, x_2 + t_2, \dots, x_n + t_n)$. For the example above, $ D'(\o{x}) = \sum_{S \subseteq [n]} \overline{c} \cdot t_{\bar{S}} x_S$. The dependence of $\coeff_D(x_S)$
% the coefficient of $ x_S $ 
 over the field $ \F(\overline{t}) $ is given by $\coeff_{D'}(x_S) =  \overline{c}\cdot t_{\bar{S}} = t_{\bar{S}}\coeff_{D'}(x_{\varnothing})$. Thus, in the above example, $D'(\overline{x})$ is $ \ell $-concentrated for $\ell = 1$.

%In this paper, $\ell = O(\log n)$. 
We conjecture that $O(\log n + \log k)$-concentration can be proven for all multilinear circuits after an appropriate shift. Agrawal, Saha \& Saxena (\cite{ASS13}) have proven $O(\log k)$-concentration of set-multilinear circuits after an efficient shift.
Here, we study low-support concentration for more general models, by developing stronger techniques.
%Here we follow their approach and prove low-support concentration for a more general class. 
%In this paper, we will prove \lc for $\ell = O(\log n)$ in some special circuits. %In this paper,
%We prove \lc for a subclass of \ml circuits

%\vspace{-0.2cm}
\subsection{General Approach}\label{subsec:generalApproach}
How do we prove that the high-support coefficients of $D'$ are dependent on the low-support coefficients? Do we even know that a given high-support coefficient is dependent on other coefficients? The reason we believe such a dependency exists is because of shifting. A monomial $x_S$, when the circuit is shifted, contributes its coefficient, denoted by $u_S$, to the coefficients of all of its subsets: $u'_T := \coeff_{D'}(x_T)= \sum_{S \supseteq T} u_S t_{S \setminus T}$. The \p~$D'(\o{x} - \o{t}) = \Dx$, i.e.\ by shifting every variable $x_i$ by $-t_i$ in $\Dpx$, we get back $\Dx$. Thus, $u_T = \sum_{S \supseteq T} u'_S t_{S \setminus T} (-1)^{\abs{S \setminus T}}$. 
This can be represented as
$
U = U' \cdot \TM 
$, 
where $U \in \F^{[k] \times 2^{[n]}}$ and $U' \in (\F(\o{t}))^{[k] \times 2^{[n]}}$ represent the coefficients in the polynomials $\Dx$ and $\Dpx$ respectively. The matrix $\TM$ has $(S, T)$-th entry
\[
\TM(S, T) =
\begin{cases}
t_{S \setminus T} (-1)^{\abs{S \setminus T}} & \text{if } T \subseteq S,\\
0 & \text{otherwise.}
\end{cases}
\]

Equivalently, we can write $\TM = A^{-1} M A$, where 
$
M(S, T) =
1 \text{ if } T \subseteq S 
\text{ and }
0\text{ otherwise,}
$
and
$A$ is a diagonal $2^{[n]} \times 2^{[n]}$ matrix where the $(T, T)$-th entry is $(-1)^{\abs{T}}t_T^{-1}$.
% and the $(S, T)$-th entry of the matrix $M$ is

To analyze the dependencies among vectors in $U'$
we take a dependency for the vectors in $U$ and \emph{lift} it. 
Suppose a dependency for the $U$ vectors is: $\sum_T \alpha_T u_T = 0, \alpha_T \in \F$.
Now, replace the coefficients $u_T$ with an equivalent expression in terms of the coefficients $u'_S$. 
We get $\sum_T \alpha_T \sum_{S \supseteq T} u'_S t_{S \setminus T} (-1)^{\abs{S \setminus T}} = 0$.
% Observe that the high-support coefficients in $\Dpx$ participate in this dependency. 
The coefficient of $u'_S$ in the dependency is $\sum_{T \subseteq S} \alpha_T t_{S \setminus T} (-1)^{\abs{S \setminus T}}$. The coefficient for each $u'_S$ is nonzero, and thus, it participates non-trivially in a dependency.

There are two problems here. First, we do not directly get dependencies which show that
high-support coefficients are in the span of low-support coefficients. 
%The first problem is that only a single high-support coefficient should participate in a single dependency, so that it is dependent on the low-support coefficients. 
There must be one such dependency for each high-support $u'_S$, which
shows that it is in the span of low-support coefficients. 
Existence of such dependencies will be shown in the later sections,
by considering all the dependencies of the $U$ vectors (null vectors of $U$) and their lifts.
%The solution to this problem will be given in subsection \ref{fullRankToIdentity}.

The other problem is that, even if low-support concentration does exist, we are substituting $(x_i + t_i)$s instead of $x_i$s and 
%then substituting $x_i$s with constants(Lemma~\ref{lem:hsFromlConc}). 
then computing low-support coefficients (Lemma~\ref{lem:hsFromlConc}).
The coefficients themselves can be exponentially large polynomials in the variables $\o{t}$, and thus cannot
be computed efficiently.
%The variables $t_i$s replace the variables $x_i$s. 
%This does not reduce the complexity of polynomial testing. 
%We use the standard approach to resolve this problem \ref{kronecker}.
It turns out that we can substitute the shift variables with an 
efficient univariate map and show low-support concentration. 

%\vspace{-0.2cm}
\subsection{Kronecker substitution}
\label{subsec:kronecker}
The shift variables $t_i$s are replaced with powers of a single variable $t$. 
Let us say, the degree of the variable $t$ is upper bounded by some function $g(n)$. Then the $t$-degree of the polynomial computed by the depth-$3$ multilinear circuit is bounded by $ng(n)$. The computation is thus efficient when the shift is univariate and `small'.
%Consider the following definition. 
%But, after applying the substitution $\phi: \{\lis{t}{,}{n}\} \rightarrow \{t^i\}_{i = 0}^{g(n)}$, the coefficient of $u_S \in \F(t)$  may become 0. The terms of the coefficient before the substitution did not cancel each other because every monomial had a different monomial in $t_i$s. If we can still ensure that the monomials map to different powers of $t$, then the terms of the coefficient cannot cancel each other. But it can easily be shown that only a single univariate substitution that separates all the monomials of an arbitrary \ml $n$-variate polynomial requires the degree of $t$ to be exponential. Hence, we use a set of small degree substitutions such that at least one of the substitutions separates the monomials of a given polynomial.
%\begin{definition}[$(f(n), g(n))$-shift set]
%A set of $f(n)$ shifts of the set of variables $\{\lis{x}{,}{n}\}$ by powers of $t$ where the highest power of $t$ is $\le g(n)$.
%\end{definition}
%\begin{definition}[$(f(n), g(n))$-monomial map]
%A map $\phi \colon \{\lis{t}{,}{n}\} \to \{t^i\}_{i=1}^{\infty}$ is called a $(f(n), g(n))$- 
%monomial map if we can generate a set of $f(n)$-many maps such that $\phi$ belongs to that set %%and the highest power of $t$ in their range is $\le g(n)$.
%\end{definition}
The univariate map we use 
will need to separate all ($\leq \ell $) support monomials for some small $\ell$, i.e.\
all ($\leq \ell $) support monomials should be mapped to distinct powers of $t$ 
(the map can be seen as acting on monomials in the natural way e.g.\ $\phi(t_1 t_2) = \phi(t_1)\phi(t_2)$). 
This map will be denoted by $\phi_{\ell}$.
%The map which separates all ($\leq \ell $) support monomials will be denoted by $\phi_\ell$. 
For the time complexity of generating such maps, see Lemma~\ref{lem:kronecker} (Appendix~\ref{app:dDist}).
%In the following lemma, we show how to get such a map. 
%In this section we are only concerned with
%multilinear monomials, but we are writing a result for general monomials 
%which will be useful in later sections. 
%Thus, if the \p is \lcd with the variables $\{\lis{t}{,}{n}\}$ and if the number of monomials involved in each dependency is $<a$ then, there exists a $(na^2, na^2\log (na^2))$-shift set that gives \lc. Each of the substitutions in the $(na^2, na^2\log (na^2))$-shift set need to be checked for \lc, after which theorem \ref{lem:hsFromlConc} can be applied.
We now describe the effect of the shifting map on the
final time complexity.
The notation $D(\o{x} + \phi(\o{t}))$ will mean $D(x_1 + \phi(t_1) , x_2 + \phi(t_2) , \dots , x_n + \phi(t_n))$. 
%\vspace{-0.2cm}
\begin{lemma}[$\ell$-Concentration to hitting-sets]
\label{lem:finalHS}
If for a polynomial $ D(\o{x}) \in \H_k(\F)[\o{x}]$, there exists a set of $f(n)$-many maps from 
$\o{t}$ to $\{t^i\}_{i=1}^{g(n)}$ such that for at least one of the maps $\phi$, the shifted polynomial
$D(\o{x}+\phi(\o{t}))$ has $\ell$-concentration, then $C(\o{x}) = \o{a}^T D(\o{x})$, for any $\o{a} \in \F^k$, has an $n^{O(\ell)} f(n) g(n)$-time hitting-set. \emph{(Proof in Appendix~\ref{app:dDist}.)}
\end{lemma}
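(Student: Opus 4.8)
The plan is to turn the promised \lcd shift into an explicit hitting-set of size $n^{O(\ell)}f(n)g(n)$ by three nested choices: the shifting map $\phi$, a specialisation $t\mapsto\beta$ of the Kronecker variable, and a low-weight Boolean point. The conceptual core is the following reduction, which I would establish first: \emph{if $D(\o x+\phi(\o t))$ is \lcd as a polynomial over the fraction field $\F(t)$, then $C(\o x+\phi(\o t))$ has a nonzero coefficient on some monomial of support $<\ell$.} To see this, set $C'(\o x):=C(\o x+\phi(\o t))=\o a^T D(\o x+\phi(\o t))$, so that $\coeff_{C'}(x_S)=\o a^T\coeff_{D'}(x_S)$ for every $S$. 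If every $\coeff_{C'}(x_T)$ with $\abs T<\ell$ vanished, then the vector $\o a\in\F^k\subseteq\F(t)^k$ would annihilate $\Span_{\F(t)}\{\coeff_{D'}(x_T):\abs T<\ell\}$; by \lcd this span equals $\Span_{\F(t)}\{\coeff_{D'}(x_S):S\subseteq[n]\}$, so $\coeff_{C'}(x_S)=0$ for all $S$, i.e.\ $C(\o x+\phi(\o t))=0$ in $\F(t)[\o x]$. Specialising $t\mapsto0$ (valid since $\phi$ sends each $t_i$ to a \emph{positive} power of $t$) would then give $C=0$, contradicting $C\neq0$.

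With this in hand, I would fix the family $\phi_1,\dots,\phi_{f(n)}$ of maps $\o t\to\{t^i\}_{i=1}^{g(n)}$ from the hypothesis together with a set $B$ of $ng(n)+1$ distinct scalars (passing to an extension of $\F$ of that size if $\F$ is too small, which costs nothing in the final bound), and define
\[
\Hit:=\big\{\ \phi_r(\o t)\big|_{t=\beta}+\mathbf 1_S\ :\ r\in[f(n)],\ \beta\in B,\ S\subseteq[n],\ \abs S<\ell\ \big\},
\]
where $\phi_r(\o t)|_{t=\beta}\in\F^n$ has $i$-th coordinate $\phi_r(t_i)|_{t=\beta}$ and $\mathbf 1_S\in\{0,1\}^n$ is the indicator vector of $S$. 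This has size $f(n)\cdot(ng(n)+1)\cdot\sum_{i<\ell}\binom ni=n^{O(\ell)}f(n)g(n)$ and is constructible within the same time bound, the only nonroutine ingredient being the generation of the $\phi_r$, which is accounted for in $f(n)$.

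For correctness, assume $C\neq0$ and take the index $r$ for which $D(\o x+\phi_r(\o t))$ is \lcd. The reduction yields a set $T$ with $\abs T<\ell$ and $\coeff_{C'}(x_T)\neq0$; since $\phi_r$ is a univariate monomial map, $\coeff_{C'}(x_T)\in\F[t]$, and since $C$ is multilinear of total degree $\le n$ its $t$-degree is at most $ng(n)$, so some $\beta\in B$ escapes its roots. Writing $\tilde C(\o x):=C(\o x+\phi_r(\o t))|_{t=\beta}$, we obtain a nonzero multilinear polynomial with a nonzero coefficient on a monomial of support $<\ell$; peeling off proper subsets of such a monomial, $\tilde C$ has an inclusion-\emph{minimal} nonzero monomial $x_{S^*}$ with $\abs{S^*}<\ell$, and evaluating at $\mathbf 1_{S^*}$ leaves only $\coeff_{\tilde C}(x_{S^*})\neq0$. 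Since $\tilde C(\mathbf 1_{S^*})=C\big(\phi_r(\o t)|_{t=\beta}+\mathbf 1_{S^*}\big)$ and this point lies in $\Hit$, we are done. I do not expect a genuine obstacle here; the only points that need care are (i) reading \lcd of the \emph{shifted} polynomial over $\F(t)$ rather than over $\F$, which is exactly what makes the annihilator argument go through, and (ii) controlling the extra variable $t$ via the degree bound $ng(n)$ before extracting a Boolean evaluation point.
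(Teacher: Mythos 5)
Your proposal is correct and takes essentially the same route as the paper: the paper's proof simply invokes Lemma~\ref{lem:hsFromlConc} to get the $n^{O(\ell)}$ Boolean evaluation points for $C(\o{x}+\phi(\o{t}))$ and then observes that each such evaluation is a degree-$\le ng(n)$ polynomial in $t$, yielding the same three-level hitting set (choice of $\phi$, choice of $t$-value, choice of low-weight Boolean point). The only cosmetic difference is that you re-derive the content of Lemma~\ref{lem:hsFromlConc} inline via the annihilator/span argument together with the inclusion-minimal nonzero low-support monomial, rather than citing it.
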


%Hence, the time complexity to check the zeroness of the \p $C(X)$ depends on $a$, the number of monomials in each dependency. If the number of monomials in each dependency were less than $n^{O(\ell_0)}$,%, i.e.\ $a = n^{O(\ell)}$,
%then the time complexity to check the zeroness of the \p $C(X)$ is $n^{O(\ell) + O(\ell_0)}$.
%In the following model of depth-$3$ \ml circuits, where the distance for the partition sequence (to be defined soon) is `small', the number of monomials involved in each dependency can indeed be bounded.

%Now, in the following section we define the class of depth-$3$ multilinear circuits
%for which we can achieve low-support concentration after an efficient shift.

Now that we are clear about the general technique, we can proceed to proving \lc in some interesting models.

%\vspace{-0.25cm}
\section{Low-distance multilinear depth-$3$ circuits: Theorem~\ref{thm:dDistHS}}
\label{sec:dDist}
The main model for which we study low-support concentration is depth-3 multilinear circuits with `small distance'.

%%\vspace{-0.25cm}
\subsection{$\delta$-distance circuits}\label{subsec:deltaDistance}

Each product gate in a depth-$3$ \ml circuit induces a partition on the variables. Let these partitions be $ \lis{\P}{,}{k} $.

\begin{definition}[Distance for a partition sequence, $ \dist(\P_1,\dots, \P_k) $]
\label{def:distance}
Let $\P_1, \P_2, \dots, \P_k \in \Part (\n)$ be the $k$ partitions of the variables $\{x_1, x_2, \dots, x_n\}$. Then $\dist(\P_1, \P_2, \dots, \P_k) =: \d$ if 
% there exists an ordering of the partitions (wlg $\lis{\P}{,}{m}$) such that 
 $ \forall i \in \{2,3,\dots , k\}, \; \forall\text{colors } Y_1 \in \P_i, \; \exists Y_2, Y_3, \dots, Y_{\d'} \in \P_i  (\d' \le \d)$ such that $\lis{Y}{\cup}{\d'}$ equals a union of some colors in $ \P_j, \forall j\in [i-1]$.
\end{definition}

In other words, in every partition $\P_i$,
each color $Y_1$ has a `friendly neighborhood' $\{\lis{Y}{,}{\d'}\}$,
consisting of at most $\d$ colors,
which is exactly partitioned in the `upper partitions'.
We call $\P_i$, an {\em upper} partition relative to $\P_j$ 
(and $\P_j$, a {\em lower} partition relative to $\P_i$),
if $i < j$.
For a color $X_a$ of a partition $\P_j$, let $\nbd_j (X_a)$ denote its friendly neighborhood.
The friendly neighborhood $\nbd_j (x_i)$ of a variable $x_i$ in a partition $\P_j$ is defined as $\nbd_j (\color_j(x_i))$, where $\color_j(x_i)$ is the color in the partition $\P_j$ that contains the variable $x_i$.
The friendly neighborhood $\nbd_j(\{x_{i}\}_{i \in {\mathcal I}})$ of a set of variables $\{x_{i}\}_{i \in {\mathcal I}}$ in a partition $\P_j$ is given by $\bigcup_{i \in {\mathcal I}} \nbd_j(x_i)$.

%The friendly neighborhood $\nbd_j(\{X_{a_i}\}_{i \in {\mathcal I}})$ of a set of colors $\{X_{a_i}\}_{i \in {\mathcal I}}$ in a partition $\P_j$ and the friendly neighborhood $\nbd_j(\{x_{i}\}_{i \in {\mathcal I'}})$ of a set of variables $\{x_{i}\}_{i \in {\mathcal I'}}$ in a partition $\P_j$ are given by $\bigcup_{i \in {\mathcal I}} \left\{\nbd_j(X_{a_i})\right\}$ and $\bigcup_{i \in {\mathcal I'}} \left\{\nbd_j(x_i)\right\}$ respectively.

% Let $ k_i $ product gates follow the partition $ \P_i $. Hence the total number of product gates is $ k:= \sum_{i = 0}^m k_i $.

\begin{definition}[$ \d $-distance circuits]
A \ml depth-$3$ circuit $ C $
% is said to have 
has $ \d $-distance if its product gates can be ordered to correspond to a partition sequence $(\P_1,\dots,\P_k)$ with $\dist (\lis{\P}{,}{k}) \le \d$.

The corresponding $\Pi\Sigma$ circuit $D(\o{x})$ over $\H_k(\F)$ is also said to have $\d$-distance.
%A multilinear \p over the Hadamard algebra of dimension $k$,
%$\Dx \in \H_k(\F)[\o{x}]$ is said to have $ \d $-distance if
%the partition sequence induced by its $k$ coordinates
%corresponds to the partition sequence  $(\lis{\P}{,}{k}) $ with $\dist (\lis{\P}{,}{k}) \le \d$.
\end{definition}

%Let there be $k_i$ product gates that correspond to the partition $\P_i$. Hence, the total number of product gates in the circuit is $\sum_{i = 1}^m k_i =: k$.

%\begin{observation}
Every depth-$3$ \ml circuit is thus an $n$-distance circuit. A circuit with a partition sequence, where the partition $\P_i$ is a refinement of the partition $\P_{i+1}, \forall i \in [k-1]$, exactly characterizes a $1$-distance circuit. All depth-$3$ \ml circuits have distance between $1$ and $n$.
%\end{observation}
Also observe that the circuits with $1$-distance subsume set-\ml circuits.

\noindent \textbf{Friendly neighborhoods -} To get a better picture, we ask: Given a color $X_a$ of a partition $\P_j$ in a circuit $\Dx$,
how do we find its friendly neighborhood $\nbd_j(X_a)$?
Consider a graph $G_j$ which has the colors of the partitions $\{ \P_1, \P_2, \dots, \P_j \}$, 
as its vertices. 
For all $i \in [j-1]$, there is an edge between the colors $X \in \P_{i}$ and 
$Y \in \P_{j}$ if they share at least one variable.
Observe that if any two colors $X_a$ and $X_b$ of partition $\P_j$
are reachable from each other in $G_j$,
then, they should be in the same neighborhood.
As reachability is an equivalence relation, {\em the neighborhoods are equivalence classes of colors}. 
%In a $\d$-distance circuit,
%given a color $X_a$ of partition $\P_j$,
%atmost $\d$ colors of the partition $\P_j$
%are reachable from the color $X_a$, in graph $G_j$.

Moreover, observe that for any two variables $x_a$ and $x_b$,
if their respective colors in partition $\P_j$, $\color_j(x_a)$ and $\color_j(x_b)$
are reachable from each other in $G_j$
then their respective colors in partition $\P_{j+1}$, 
$\color_{j+1}(x_a)$ and $\color_{j+1}(x_b)$
are also reachable from each other in $G_{j+1}$. Hence,
\begin{observation}
\label{obs:zeroBelow}
If at some partition,
the variables $x_a$ and $x_b$ are in the same neighborhood,
then, they will be in the same neighborhood in all of the lower partitions.
I.e.\ $\nbd_j(x_a) = \nbd_j(x_b) \implies \nbd_i(x_a) = \nbd_i(x_b), \forall i \ge j$.
\end{observation}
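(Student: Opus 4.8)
The paragraph just before the statement already asserts the single-step version, $\nbd_j(x_a)=\nbd_j(x_b)\implies \nbd_{j+1}(x_a)=\nbd_{j+1}(x_b)$ (for $j<k$), from which the observation follows by iterating from $j$ up to $i$; so the plan is to prove that single-step implication and then induct. For the single step I would argue entirely with the graph characterization recalled above: $\nbd_j(x)$ is the equivalence class of $\color_j(x)$ under reachability in $G_j$, so the hypothesis $\nbd_j(x_a)=\nbd_j(x_b)$ says exactly that some walk in $G_j$ joins $\color_j(x_a)$ to $\color_j(x_b)$. Since every edge of $G_j$ is incident to a color of $\P_j$ (so $G_j$ is bipartite, with the colors of $\P_j$ on one side and those of $\P_1,\dots,\P_{j-1}$ on the other), such a walk has the form $\color_j(x_a)=X_0,\,Z_1,\,X_1,\,\dots,\,Z_m,\,X_m=\color_j(x_b)$ with each $X_r\in\P_j$, each $Z_r$ a color of some $\P_{i_r}$ with $i_r<j$, and consecutive colors sharing a variable; I would fix witnesses $y_r\in X_{r-1}\cap Z_r$ and $y'_r\in Z_r\cap X_r$, and note $x_a\in X_0$ and $x_b\in X_m$.

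The next step is to lift this walk to $G_{j+1}$. Every $X_r$ and every $Z_r$ is still a vertex of $G_{j+1}$ (it is a color of some $\P_h$ with $h\le j$), and the colors of $\P_{j+1}$ are now available as new intermediate vertices. Between two consecutive old vertices sharing a variable $v$, both are adjacent in $G_{j+1}$ to $\color_{j+1}(v)$, so I would insert $\color_{j+1}(y_r)$ between $X_{r-1}$ and $Z_r$, insert $\color_{j+1}(y'_r)$ between $Z_r$ and $X_r$, prepend the edge from $\color_{j+1}(x_a)$ to $X_0$, and append the edge from $X_m$ to $\color_{j+1}(x_b)$. This produces a walk in $G_{j+1}$ from $\color_{j+1}(x_a)$ to $\color_{j+1}(x_b)$, so these two colors lie in one equivalence class of $G_{j+1}$, i.e.\ $\nbd_{j+1}(x_a)=\nbd_{j+1}(x_b)$; applying this $i-j$ times gives the observation. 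The degenerate case $m=0$ (i.e.\ $\color_j(x_a)=\color_j(x_b)$) is immediate, since then $X_0$ contains both $x_a$ and $x_b$ and is adjacent in $G_{j+1}$ to $\color_{j+1}(x_a)$ and to $\color_{j+1}(x_b)$.

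There is no real obstacle here; the single point requiring care is that the bipartite structure changes from $G_j$ to $G_{j+1}$. An edge of $G_j$ joining a $\P_j$-color to an earlier color is \emph{not} an edge of $G_{j+1}$, since every edge of $G_{j+1}$ must touch $\P_{j+1}$. This is precisely why the shared-variable witnesses $y_r,y'_r$ are needed — each such lost adjacency is replaced by a length-two detour through the matching $\P_{j+1}$-vertex — and the only bookkeeping in writing the proof is tracking which partition index each vertex of the walk carries.
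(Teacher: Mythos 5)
Your proof is correct and follows the same route the paper gestures at: the paper simply asserts the single-step implication (reachability in $G_j$ lifts to reachability in $G_{j+1}$) as something to "observe," and derives the observation by iteration. You have supplied the missing verification of that single step, via the natural walk-lifting argument that respects the changing bipartite structure of the graphs $G_j$ and $G_{j+1}$, and it checks out.
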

In other words, at the level of the variables, the neighborhoods in the upper partitions are {\em refinements} of the neighborhoods in the lower partitions. 

%For a set of variables $S$, let $\abs{\nbd_j(S)}$ be the number of colors in $\nbd_j(S)$ and $\norm{\nbd_j(S)}$ be the number of neighborhoods in $\nbd_j(S)$.
%In a $\d$-distance circuit,
%\begin{observation}\label{obs:noOfColors}
%For any set $S$ of variables, in any partition $\P_j$, the number of colors in the neighborhood %of $S$, $\abs{\nbd_j(S)} \le \d \norm{\nbd_j(S)} \le \d\abs{S}$.
%\end{observation}

%Take the partitions $\P_j$ and $\P_{j-1}$ for $2 \le j \le n$.
%Consider the colors in the partitions $\P_j$ and $\P_{j-1}$ as vertices of a bipartite graph. %Put an edge between the colors $X \in \P_j$ and $Y \in \P_{j - 1}$ if they share at least one variable.
%Let $X_a \sim X_b~(X_a, X_b \in \P_j)$ if $X_a$ is reachable from $X_b$ in the graph.
%Hence,
%$\sim$ is an equivalence relation. Observe that $X_a \sim X_b$ iff they are in each others' %neighborhood.
%
%Thus, neighborhoods are equivalence classes of colors.
%The partition $\P_{j-1}$ can be replaced with any upper partition $\P_i$ of the partition $\P_j$ in the above argument.

We now claim that any subcircuit of a $\d$-distance circuit $\Dx$, 
also has $\d$-distance. 
%Hence,
\begin{observation}[Subcircuit of $\Dx$]
\label{obs:subCktHasDeltaDist}
Let $E\in \H_j(\F)[\o{x}]$ be a subcircuit of a $\d$-distance circuit $\Dx$, obtained by replacing an arbitrary set of linear factors in each coordinate of $\Dx$ with $1$,
and restricting the circuit to the coordinates $1$ to $j$.
Then $E$ is also a $\d$-distance circuit.
\end{observation}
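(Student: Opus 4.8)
The plan is to show that taking a subcircuit only coarsens the combinatorial structure that certifies $\d$-distance, so the same neighborhoods still work. Recall that $E$ is obtained from $\Dx$ by (a) replacing some linear factors $\ell_{ij}$ by $1$ in each coordinate, and (b) discarding coordinates $j+1,\dots,k$. I will denote by $\P_1,\dots,\P_k$ the partition sequence witnessing $\dist(\P_1,\dots,\P_k)\le\d$ for $\Dx$, and by $\P'_1,\dots,\P'_j$ the partitions of the product gates of $E$. The first step is to pin down the relationship between $\P'_i$ and $\P_i$: replacing $\ell_{ij}$ by $1$ removes the variables of that factor from the $i$-th product gate. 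So $\P'_i$ is obtained from $\P_i$ by deleting the colors corresponding to the removed factors (the removed variables become ``unused'' in gate $i$). Equivalently, $\P'_i$ is a partition of a subset $V_i\subseteq[n]$ of the variables, and as a partition of $V_i$ it is precisely the restriction $\P_i|_{V_i}$ of $\P_i$ to $V_i$ — every color of $\P'_i$ is a color of $\P_i$ (just some colors of $\P_i$ are dropped entirely, none are split).

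The second step is to verify Definition~\ref{def:distance} for $(\P'_1,\dots,\P'_j)$. Fix $i\in\{2,\dots,j\}$ and a color $Y_1\in\P'_i$. Since $Y_1$ is also a color of $\P_i$, its friendly neighborhood $\nbd_i(Y_1)=\{Y_1,Y_2,\dots,Y_{\d'}\}$ in the original circuit has $\d'\le\d$ colors of $\P_i$, and $Y_1\cup\cdots\cup Y_{\d'}$ is a union of colors in each $\P_m$ for $m<i$. Now restrict attention to the colors among $Y_2,\dots,Y_{\d'}$ that actually survive in $\P'_i$, say $Y_1,Y_{a_2},\dots,Y_{a_r}$; these are at most $\d$ colors of $\P'_i$. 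The only thing to check is that $W:=Y_1\cup Y_{a_2}\cup\cdots\cup Y_{a_r}$ is still exactly a union of colors of $\P'_m$ for every $m<i$. But a color of $\P'_m$ that is removed (because the corresponding factor $\ell_{mj}$ was deleted) consists of variables that appear in no linear factor of gate $m$ in $E$, hence — being multilinear — in no surviving factor either, so such variables lie in $[n]\setminus V_m$. Thus I would argue, for each surviving color $Z$ of $\P_m$ ($m<i$) that met the original neighborhood $Y_1\cup\cdots\cup Y_{\d'}$: either $Z$ survives in $\P'_m$ and lies inside $W$ (after discarding any of its variables that were in dropped colors of $\P_i$ — here one must be a little careful, but since neighborhoods are equivalence classes of colors under the reachability relation in $G_i$, a color $Z$ of $\P_m$ either lies entirely inside a single neighborhood of $\P_i$ or... ) The clean way is: restrict everything to $V_i$. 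Because each $\P'_m = \P_m|_{V_m}$ and, using Observation~\ref{obs:zeroBelow}'s refinement structure together with the fact that the dropped variables of gate $i$ form a union of colors of $\P_i$, one checks $W = \nbd_i(Y_1)\cap V_i$ and that intersecting a ``union of colors of $\P_m$'' with $V_i$ yields a union of colors of $\P'_m$. This gives the required witness for $E$.

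For the truncation step (b), once the conditions of Definition~\ref{def:distance} hold for $(\P'_1,\dots,\P'_j)$, they hold a fortiori for the initial segment used by $E$, since the definition only quantifies over $m<i\le j$ and never references partitions with index exceeding the current one; dropping $\P'_{j+1},\dots$ changes nothing. Hence $E$ has $\d$-distance.

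The main obstacle I expect is purely bookkeeping in step two: making precise that the set of variables removed from a product gate forms a union of whole colors (rather than splitting a color), and that intersecting ``a union of colors of $\P_m$'' with the surviving variable set $V_i$ again gives a union of whole colors of $\P'_m$. Both follow from multilinearity (each variable lives in exactly one factor of each gate) plus the observation that $\P'_m$ is an honest restriction of $\P_m$, but one has to set up the notation for $V_i$ and the restricted partitions carefully so the intersection argument does not accidentally split a color. Everything else is immediate from the definitions.
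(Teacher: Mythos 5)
There is a genuine gap, and it is exactly where you flagged it but suspected it was ``purely bookkeeping'': the claim that intersecting a union of colors of $\P_m$ with the surviving variable set of gate $i$ again gives a union of colors of $\P'_m$ is simply false. Concretely, take $n=4$, $k=2$, $\P_1=\{\{1,2\},\{3,4\}\}$, $\P_2=\{\{1,3\},\{2,4\}\}$, so $\dist(\P_1,\P_2)=2$, and delete from gate~$2$ the factor corresponding to the color $\{2,4\}$. Under your interpretation $\P'_2$ becomes $\{\{1,3\}\}$ over the surviving variables $V_2=\{1,3\}$. Now the only candidate neighborhood for $Y_1=\{1,3\}$ is $\{Y_1\}$ itself, but $\{1,3\}$ is not a union of colors of $\P'_1=\{\{1,2\},\{3,4\}\}$, nor does restricting $\P'_1$ to $V_2$ help (a color of $\P_1$ such as $\{1,2\}$ gets split by $V_2$ into $\{1\}$, which is not a color of anything). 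So the restricted partition sequence fails the distance condition, and the half-sentence in your proposal that trails off (``a color $Z$ of $\P_m$ either lies entirely inside a single neighborhood of $\P_i$ or\ldots'') is precisely where the argument cannot be completed.

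The paper sidesteps all of this by \emph{not} deleting the color. Replacing a factor $b_{i_0}+\sum_r b_{i_r}x_{i_r}$ with $1$ is reinterpreted as the degenerate factor $1+\sum_r 0\cdot x_{i_r}$: the linear factor is still ``there'' with all its variable slots, just with zero coefficients. Consequently each product gate of $E$ induces \emph{exactly the same} partition of $[n]$ as the corresponding gate of $D$, so nothing about the distance witness needs to be re-checked. After that, truncating to coordinates $1$ to $j$ is indeed harmless, as you correctly observe, because Definition~\ref{def:distance} only quantifies over lower-indexed partitions. If you want to keep the ``actually delete colors'' viewpoint, you would at minimum have to redefine distance for partitions of varying sub-universes and re-prove the whole chain of neighborhood lemmas in that setting; the paper's one-line trick is the right move.
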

\begin{proof}
%The proof is in two steps.
A linear factor $ b_{i_0} + \sum_r b_{i_r} x_{i_r} $
which is replaced with $1$,
can be viewed as
$ 1 + \sum_r 0 \cdot x_{i_r} $.
Such a subcircuit induces the same partition sequence
%$(\P_1, \P_2, \dots, \P_k)$, 
as circuit $D$.
When we restrict the circuit to the coordinates $1$ to $j$, we get a subsequence of
this partition sequence. Clearly, the subsequence also has $\d$-distance.
%Now, a subsequence of a partition sequence with $\d$-distance also has $\d$-distance.
\end{proof}
Since there exists a set of colors (linear factors in the circuit $C$) in $\P_i~(i \in [j])$ that exactly contain the variables of one neighborhood,
$\nbd_j(X_a)$, % in $\P_j, \forall j \in \n$,
we can define the following subcircuit of $\Dx$.  

\begin{definition}[$\tower_j({\X})$]
For a neighborhood $\X$ in partition $\P_j$, we define a $\tower_j({\X})$ as a \p over
%over $k_1 + k_2 + \dots + k_j$ 
$\H_j(\F)$, such that its 
%the coordinates corresponding to any partition $\P_i$ ($i \leq j$) 
$i$-th coordinate ($i \leq j$) 
is the product of exactly those linear factors
(in the $i$-th product gate of the circuit $C$)
that contain the variables of the neighborhood $\X$.
\end{definition}
%$\tower_j(\X)$ forms a subcircuit of the original circuit $\Dx$.
%The product gates correspond to partitions $\lis{\P}{,}{j}$ and the linear terms are colors in $\nbd_j(X_a)$.
%Let us call this subcircuit corresponding to $\nbd_j(X_a)$ as $\tower_j(X_a)$.
%Similarly, 
We can define a tower over 
a union of neighborhoods $ (\cup_{i=1}^r \X_i )$ in partition $\P_j$ as
$\tower_j(\cup_{i=1}^r \X_i) := \tower_j(\X_1) \star \tower_j(\X_2) \star \dotsm  \star \tower_j(\X_r)$. For any such tower $E = \tower_j(\cup_{i=1}^r \X_i) $, $\nbd_j(E)$ will denote the union of neighborhoods $(\cup_{i=1}^r \X_i)$.
%The definition of a tower can be extended to a tower over a set of neighborhoods:
For a set of variables $S$,
$\tower_j(S)$, is the tower over $\nbd_j(S)$.
%For a color $X_a$ in partition $\P_j$, $\tower_j(X_a)$ is defined 
%as $\tower_j(\nbd_j(X_a))$.
For a neighborhood $\X$ in partition $\P_j$,
the variables in $\tower_j(\X)$ are the variables in $\X$, 
denoted by both $\variables(\tower_j(\X))$ and $\variables(\X)$.
Observe that the towers over any two neighborhoods in partition $\P_j$ are polynomials over
a disjoint set of variables.

%Like $\tower_j(X_a)$ is a `tower' over $\nbd_j(X_a)$,
%we can define $\tower_j(x_i)$ for a variable $x_i$, $\tower_j(\{X_{a_i}\}_{i \in {\mathcal I}})$ for a set of colors $\{X_{a_i}\}_{i \in {\mathcal I}}$ in partition $\P_j$ and $\tower_j(\{x_{i}\}_{i \in {\mathcal I'}})$ for a set of variables $\{x_{i}\}_{i \in {\mathcal I'}}$ in  partition $\P_j$ as towers over $\nbd_j(x_i)$, $\nbd_j(\{X_{a_i}\}_{i \in {\mathcal I}})$ and $\nbd_j(\{x_{i}\}_{i \in {\mathcal I'}})$ respectively.

The following observation says that the coefficient of a monomial in a product of towers is equal to the product of its `support coefficients' in the individual towers.
\begin{observation}
\label{obs:CoeffMulInTowers}
Let $R$ and $T$ be two sets of variables coming from two \emph{disjoint sets }of neighborhoods of partition $\P_j$.
%Let $\tower_j(R)$ and $\tower_j(T)$ be two different towers over the partition $\P_j$ with disjoint set of variables.
Then, the coefficients of monomial $S \subseteq R \cup T$ in $\tower_j(R) \star \tower_j(T)$ is given by $\coeff_{\tower_j(R) \star \tower_j(T)} (S)
 = \coeff_{\tower_j(R)} (S \cap R) \star \coeff_{\tower_j(T)} (S \cap T)$.
\end{observation}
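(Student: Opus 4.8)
For Observation~\ref{obs:CoeffMulInTowers}, the plan is to unwind the definitions and use the fact that a product of towers over disjoint neighborhoods is computed coordinate-wise in $\H_j(\F)$, where each coordinate is an ordinary product of polynomials over disjoint variable sets. Fix a coordinate $i \le j$. By definition of $\tower_j(R)$ and $\tower_j(T)$, the $i$-th coordinate of $\tower_j(R)$ is the product of those linear factors in the $i$-th product gate that contain variables of $\nbd_j(R)$, and similarly for $T$; since $\nbd_j(R)$ and $\nbd_j(T)$ are disjoint sets of neighborhoods, these two collections of linear factors are disjoint, and no linear factor involving a variable of $R$ shares a variable with one involving a variable of $T$ (each linear factor lies in a single color of $\P_i$, hence in a single neighborhood). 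Hence the $i$-th coordinate of $\tower_j(R)\star\tower_j(T)$ is a product of a polynomial in $\F[\variables(R)]$ with a polynomial in $\F[\variables(T)]$ over disjoint variables.

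Next I would invoke the elementary fact that for $f \in \F[\o{y}]$ and $g \in \F[\o{z}]$ with $\o{y}, \o{z}$ disjoint, and any multilinear monomial $x_S$ with $S \subseteq \variables(f) \cup \variables(g)$, we have $\coeff_{fg}(x_S) = \coeff_f(x_{S \cap \variables(f)}) \cdot \coeff_g(x_{S \cap \variables(g)})$ — the coefficient factors because monomials from $f$ and $g$ multiply into distinct monomials and the $x_S$-part of $fg$ can only arise from the $(S\cap\variables(f))$-part of $f$ times the $(S\cap\variables(g))$-part of $g$. Applying this in coordinate $i$ with $\variables(f) = \variables(R)$ (restricted to the factors appearing in coordinate $i$, but any variable not appearing forces the corresponding coefficient to vanish on both sides consistently) and $\variables(g) = \variables(T)$ gives that the $i$-th coordinate of $\coeff_{\tower_j(R)\star\tower_j(T)}(S)$ equals the product of the $i$-th coordinates of $\coeff_{\tower_j(R)}(S\cap R)$ and $\coeff_{\tower_j(T)}(S\cap T)$. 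Since $\star$ is coordinate-wise multiplication, ranging over all $i \le j$ yields exactly $\coeff_{\tower_j(R)\star\tower_j(T)}(S) = \coeff_{\tower_j(R)}(S\cap R)\star\coeff_{\tower_j(T)}(S\cap T)$.

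One bookkeeping point deserves care: the statement writes $S\cap R$ and $S\cap T$ rather than intersections with the full variable sets of the towers, so I should note that $\variables(\tower_j(R)) = \variables(R)$ (by the remark following the definition of towers) and that $S \subseteq R \cup T$ forces $S \cap \variables(R) = S \cap R$ and likewise for $T$; this is immediate from disjointness of $R$ and $T$. I do not expect a genuine obstacle here — the only mild subtlety is making sure the disjointness of the neighborhoods really does translate into variable-disjointness of the two polynomials in each coordinate, which follows because every linear factor of a multilinear product gate lies within one color of its partition, hence within one neighborhood, so a factor contributing to $\tower_j(R)$ and a factor contributing to $\tower_j(T)$ can never share a variable. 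Once that is in place, the claim is the coordinate-wise lift of the standard disjoint-variable coefficient factorization.
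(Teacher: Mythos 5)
Your proof is correct, and since the paper states this observation without supplying a proof, you are filling in exactly the argument the authors intend the reader to supply: in each of the $j$ coordinates the two towers are products of linear factors over disjoint variables (because every linear factor lies in a single color, hence a single neighborhood, of $\P_j$), so the coefficient factorizes coordinate-wise, and $\star$ then assembles these into a single Hadamard product.

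One small bookkeeping point is stated slightly loosely. The definition of $\tower_j(R)$ for a variable set $R$ is the tower over $\nbd_j(R)$, so $\variables(\tower_j(R)) = \variables(\nbd_j(R))$, which may strictly contain $R$ when $R$ is not the full variable set of its neighborhoods; thus $\variables(\tower_j(R)) = \variables(R)$ is not literally guaranteed. But your conclusion survives: the disjoint-variable factorization gives the coefficient indexed by $S \cap \variables(\tower_j(R))$, and because $S \subseteq R \cup T$ while $T$ lies entirely inside $\variables(\nbd_j(T))$, which is disjoint from $\variables(\nbd_j(R))$, we have $S \cap \variables(\tower_j(R)) = S \cap R$ (and symmetrically for $T$). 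If you replace the assertion $\variables(\tower_j(R)) = \variables(R)$ with this one-line reduction, the proof is airtight.
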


%\begin{observation}\label{towerRefinement}
%A subcircuit at level $j$, $\tower_j$ may consist of more than one subcircuits at level $j-1$.
%\[
%\abs{
%\left\{\tower_{j-1}(x_i) \mid x_i\in \tower_j
%\right\}
%} \ge 1.
%\]
%In other words, the neighborhoods in the upper partitions are refinements of the neighborhoods %in the lower partitions.
%\end{observation}

%\vspace{-0.25cm}
\subsection{True coefficients}
%Like \cite{ASS13}, w
%Since the $\Pi\Sigma$ circuit may have high-support monomials,
We will be proving low-support concentration in some special subcircuits $\Epx$ (see Observation \ref{obs:subCktHasDeltaDist}) of $\Dpx$ (the shifted polynomial).
This would eventually prove low-support concentration of $\Dpx$.
%
%
%
%We will be proving low-support concentration for low-support coefficients.
%and then it would imply \lc for $(>\ell)$-support coefficients.
%To prove \lc for $\coeff_{D'}(x_S), \abs{S} = \ell$,
%we will prove \lc for $\coeff_{D'}(x_S)$ in the subcircuit $D'_S(\o{x}) = \tower_k(S)$ (roughly).
The coefficient $\coeff_{E'}(x_S)$ of the monomial $x_S$
in the subcircuit $\Epx$ is given by $\coeff_{D'}(x_S) = \const_{E'} \star \coeff_{E'}(x_S)$, where
$\const_{E'}$ is the product of the constant parts of all linear factors not in $\Epx$. 
%$F'$ is the polynomial obtained by setting the linear terms in $\Epx$ to 1.
Thus, it is enough to prove \lc within $\Epx$ since any dependency in $\Epx$ translates to a dependency in $\Dpx$ by multiplying throughout with a constant.

All the monomials which participate in a dependency, and hence, all the monomials occurring in $\Epx$ should have {\em true coefficients}.
%I.e.\ the monomial should be present in $\DpS$ in every coordinate, or not at all.
I.e.\ each monomial should have a coefficient in $\Epx$ similar to that in $\Dpx$.
I.e.\ $\const_{E'} \star \coeff_{E'}(x_T) = \mathbf{0}$ or $\coeff_{D'}(x_T)$, for all monomials $T$.

In the next subsection, we will
study a few properties of the subcircuit $\Epx$, for which we prove low-support concentration.
%try to shortlist the properties required by the subcircuit $\Epx$.
%One requirement from $\DS$ is: Since \lc is proven after shifting, all the variables in a linear term contribute to the constant of that linear term. 
%So, if one variable from a linear term $\ell_{ij}$ is picked to be in $\DS$, all the variables in the linear term $\ell_{ij}$ should be picked to be in $\DS$.
%%So, the polynomial $\DS$ includes the whole linear terms that include the variables in $S$.
%
%We will now study another requirement from the \p $\DS$.

%\vspace{-0.25cm}
\subsection{Phases}\label{phases}
%The subcircuit $\Epx$, for which we prove low-support concentration
%would have 
Since true coefficients have to be used, $\Epx$ has to be (roughly) a tower over a (small) set of neighborhoods.
Thus, though the lowest partition in $\Epx$ will have a few linear factors,
the higher partitions may have $O(n)$ linear factors.
This blows up the support size of the monomials in $\Epx$.
%One way to proceed would be to directly show that
%there exists dependencies such that when lifted,
%coefficients of such high-support monomials is $0$.
%We use another method.
We intend to show (by induction) that these coefficients of the high-support monomials are $\ell$-concentrated.
Note that these high-support monomials come only from the upper partitions.

%\begin{observation}\label{obs:zeroBelow}
%Because of the refinement of the neighborhoods (Observation \ref{towerRefinement}),
%if at some partition,
%the variables $x_a$ and $x_b$ are in the same neighborhood,
%then, they will be in the same neighborhood in all of the lower partitions.
%I.e.\ $\nbd_j(x_a) = \nbd_j(x_b) \implies \nbd_i(x_a) = \nbd_i(x_b), \forall i \ge j$.
%\end{observation}

We use the refinement property of neighborhoods (Observation~\ref{obs:zeroBelow}) 
to categorize the monomials of the \p $\Dx$ into $k$ phases.
Roughly speaking, the \emph{phase} of a monomial $S$ is the lowest partition from which $S$
can possibly be generated.
If $(\d + 1)$ variables of $S$ belong to the same neighborhood of partition $\P_{j+1}$,
then $S$ cannot be generated from this partition, or the partitions below it, as we will soon see.
% because of Observation \ref{obs:zeroBelow}.
This motivates us to define the phase of a monomial as:

\begin{definition}[Phase-$j$]\label{def:phase}
A monomial $S$ is in phase-$j$ if the partition $\P_j$ is the lowest partition with $\le \d$ of its support variables in each of its neighborhoods.
\end{definition}
%\vspace{-0.1cm}
A phase-$ j$ monomial can be characterized by the presence of a \emph{pivot tuple}.
\begin{observation}[Pivot tuple]
\label{obs:pivot}
Let $j \in [k-1]$.
A monomial $S$ is in the \emph{$j$-th phase}
iff
\begin{enumerate}
\item There is no neighborhood of the upper partitions $\{\P_i\}_{i = 1}^j$ that contributes more than $\d$ variables to the monomial $S$, and,
\item there exist $(\d+1)$ variables in its support
(called a \emph{pivot tuple}) %, $\pivot(S)$)
that are in the same neighborhood of the partition $\P_{j+1}$.
This pivot tuple is in the same neighborhood of all the lower partitions $\{\P_i\}_{i = j + 1}^ k$.
\end{enumerate}
%$\forall i \le j$,% it is not in the $i$-th phase.
%I.e.\ 
A monomial is in the $k$-th phase if it is  not in any of the previous phases.
\emph{(Proof is by applying Observation \ref{obs:zeroBelow} on Definition \ref{def:phase}.)}
%From the definition, it is clear that a monomial $S$ in phase-$j$ has no neighborhood in the upper partitions $\P_i, \forall i \le j$, that contributes more than $\d$ variables to the monomial $S$.
\end{observation}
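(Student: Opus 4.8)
The plan is to prove both directions of the ``iff'' by appealing to the refinement structure of neighborhoods captured in Observation~\ref{obs:zeroBelow}, applied to the defining condition of phase-$j$ from Definition~\ref{def:phase}. Recall that Definition~\ref{def:phase} says $S$ is in phase-$j$ precisely when $\P_j$ is the \emph{lowest} partition in which every neighborhood contains at most $\d$ support variables of $S$. So two things must hold: (a) $\P_j$ itself has the ``$\le\d$ per neighborhood'' property for $S$, and (b) no partition strictly below $\P_j$ (i.e.\ none of $\P_{j+1},\dots,\P_k$) has this property --- equivalently, $\P_{j+1}$ fails it, and hence so do all lower partitions by monotonicity. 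The whole proof is really just unwinding what ``$\P_{j+1}$ fails the property'' means and then propagating that failure downward.

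First I would handle condition (1). Saying ``$\P_j$ has $\le\d$ support variables of $S$ per neighborhood'' is, by the refinement property (Observation~\ref{obs:zeroBelow}, read upward: neighborhoods in $\P_i$ for $i\le j$ are coarsenings of those in $\P_j$ at the level of variables), \emph{not} automatically the same as the same statement for $\P_1,\dots,\P_{j-1}$ --- a coarser partition could merge two neighborhoods each with $\le\d$ variables into one with up to $2\d$. Wait: one must be careful here. The correct reading is that in the \emph{upper} partitions the neighborhoods are coarser, so more variables could pile into one neighborhood. But Definition~\ref{def:phase} only demands that $\P_j$ be the lowest such partition, which a priori says nothing about upper partitions. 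I would argue that, in fact, the friendly-neighborhood/distance structure forces the upper partitions to also satisfy the bound: if a neighborhood of $\P_i$ ($i<j$) contained $\ge\d+1$ support variables of $S$, those variables would stay together (same neighborhood) in $\P_j$ by Observation~\ref{obs:zeroBelow}, contradicting the $\P_j$ bound. Hence (1) follows from the $\P_j$ part of Definition~\ref{def:phase}. For condition (2): $\P_j$ being the \emph{lowest} good partition means $\P_{j+1}$ is bad for $S$, i.e.\ some neighborhood of $\P_{j+1}$ contains at least $\d+1$ support variables of $S$; pick any $\d+1$ of them to be the pivot tuple. By Observation~\ref{obs:zeroBelow}, a set of variables sharing a neighborhood in $\P_{j+1}$ shares a neighborhood in every lower partition $\P_{j+2},\dots,\P_k$, which gives the last sentence of clause (2).

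For the converse, I would assume (1) and (2) and show $S$ is in phase-$j$: clause (2) exhibits a bad neighborhood in $\P_{j+1}$, and the propagation of Observation~\ref{obs:zeroBelow} makes $\P_{j+2},\dots,\P_k$ bad as well, so no partition below $\P_j$ is good; clause (1) (together with the upward-propagation argument above, which turns ``$\P_i$ good for $i\le j$'' into exactly the hypothesis of (1)) makes $\P_j$ and all partitions above it good. Hence $\P_j$ is the lowest good partition, which is the definition of phase-$j$. The final sentence --- a monomial not in phases $1,\dots,k-1$ is in phase-$k$ --- is immediate, since phase assignment partitions all monomials and $\P_k$ is the last partition available. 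The only genuinely delicate point, and the one I would write out most carefully, is the claim that the ``$\le\d$ per neighborhood'' property at $\P_j$ \emph{upgrades} to the same property at all of $\P_1,\dots,\P_{j-1}$; everything else is a direct translation through Observation~\ref{obs:zeroBelow}.
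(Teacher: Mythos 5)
Your proof is correct and takes the same route the paper indicates: apply Observation~\ref{obs:zeroBelow} to Definition~\ref{def:phase}, isolating the two consequences of ``$\P_j$ is the lowest good partition'' (namely, $\P_j$ itself is good and $\P_{j+1}$ is bad, with failure propagating downward). However, you briefly misstate the direction of the refinement. The paper says that the neighborhoods in the \emph{upper} partitions $\P_i$ (smaller index $i$) are \emph{refinements} of --- that is, finer than --- those in the lower partitions, whereas you twice assert the opposite (``coarser''). Once the direction is corrected, the step you flagged as ``genuinely delicate'' --- upgrading the $\le\d$-per-neighborhood bound from $\P_j$ to $\P_1,\dots,\P_{j-1}$ --- is actually immediate: each neighborhood at $\P_i$ with $i\le j$ is a \emph{subset} of a neighborhood at $\P_j$, so it cannot contain more support variables. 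Your contradiction argument happens to reach the same conclusion, because it correctly invokes $\nbd_i(x_a)=\nbd_i(x_b)\implies\nbd_j(x_a)=\nbd_j(x_b)$ for $j\ge i$ (which is exactly what ``upper = finer'' means), so the argument is sound; but the mislabeled direction makes the proof seem more delicate than it is and would be worth fixing before it causes trouble in a place where it actually matters.
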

\begin{observation}
A monomial $S$ belongs to exactly one of the $k$ phases, denoted by $\phase(S)$.
\end{observation}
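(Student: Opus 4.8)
The plan is to read this observation off the refinement (monotonicity) of the neighbourhood structure recorded in Observation~\ref{obs:zeroBelow}. Fix a monomial $S$ and identify it with its support set. For $j \in [k]$ set
\[
f_S(j) := \max\{\, \abs{\variables(\X) \cap S} : \X \text{ a neighbourhood of } \P_j \,\}.
\]
Observation~\ref{obs:zeroBelow} says that two variables lying in a common neighbourhood of $\P_j$ lie in a common neighbourhood of every lower partition $\P_i$ $(i \ge j)$; equivalently $\variables(\nbd_j(x)) \subseteq \variables(\nbd_{j+1}(x))$ for every variable $x$. So if $\X$ is a neighbourhood of $\P_j$ achieving the maximum and $x_0 \in \variables(\X)$, the neighbourhood $\X'$ of $\P_{j+1}$ through $x_0$ satisfies $\variables(\X) \subseteq \variables(\X')$, whence $f_S(j+1) \ge f_S(j)$. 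Thus $f_S$ is non-decreasing on $[k]$.

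Given this, I would argue existence and uniqueness together. Since $f_S$ is non-decreasing, $J := \{\, j \in [k] : f_S(j) \le \d \,\}$ is an initial segment $\{1,\dots,j^\ast\}$ of $[k]$ (read $j^\ast = 0$ if $J = \varnothing$, and $j^\ast = k$ if $J = [k]$). If $1 \le j^\ast \le k-1$, then $\P_{j^\ast}$ is the \emph{lowest} partition all of whose neighbourhoods contain at most $\d$ variables of $S$ (lower partitions fail, since $f_S(j^\ast+1) > \d$), so $S$ is in phase-$j^\ast$ by Definition~\ref{def:phase}; and by Observation~\ref{obs:pivot} any $j<k$ with $S$ in phase-$j$ forces $f_S(j) \le \d < f_S(j+1)$, which by monotonicity pins $j = j^\ast$, while phase-$k$ is excluded because $S$ already lies in the earlier phase-$j^\ast$. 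If $j^\ast = k$, then $\P_k$, being the lowest partition, is the lowest one with the property of Definition~\ref{def:phase}, so $S$ is in phase-$k$; and by Observation~\ref{obs:pivot} there is no valid pivot tuple in any $\P_{j+1}$ with $j<k$, so $S$ is in no earlier phase. If $j^\ast = 0$, then $f_S(1) > \d$, so the first condition of Observation~\ref{obs:pivot} fails for every $j \in [k-1]$ and $S$ lies in none of the phases $1,\dots,k-1$; by the catch-all clause of Observation~\ref{obs:pivot} it is then in phase-$k$, and in exactly that phase. In every case $S$ lies in exactly one phase.

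The only point that needs care is the bookkeeping at the two extremes, where Definition~\ref{def:phase} and the ``phase-$k$ if in no earlier phase'' clause of Observation~\ref{obs:pivot} have to be reconciled: for $j^\ast = 0$, Definition~\ref{def:phase} assigns nothing and the clause takes over; for $j^\ast = k$, Definition~\ref{def:phase} directly gives phase-$k$, consistently with the absence of a pivot tuple in every $\P_{j+1}$, $j<k$. Away from these, everything is an immediate consequence of the monotonicity of $f_S$, so I do not anticipate a genuine obstacle here — the substantive content sits entirely in Observation~\ref{obs:zeroBelow}, which is already in hand.
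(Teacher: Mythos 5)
Your argument is correct. The paper states this observation without proof, treating it as immediate once Definition~\ref{def:phase} and Observation~\ref{obs:pivot} are in place; your derivation of the monotonicity of $f_S$ from Observation~\ref{obs:zeroBelow}, and the resulting three-way case analysis on $j^\ast$, supplies exactly the bookkeeping the paper leaves implicit and handles the boundary cases ($j^\ast=0$ and $j^\ast=k$) cleanly.
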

%\vspace{-0.1cm}
\begin{observation}
\label{obs:phaseOfPivot}
A monomial $S$ may have more than one pivot tuples. Observe that $\phase(R) = \phase(S)$, for all pivots $R$ of a monomial $S$.
\end{observation}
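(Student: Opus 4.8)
The plan is to derive the equality from a single monotonicity property of the partition sequence, namely Observation~\ref{obs:zeroBelow}. Call a partition $\P_i$ \emph{good for} a monomial $T$ if every neighborhood of $\P_i$ contains at most $\d$ support-variables of $T$. By Observation~\ref{obs:pivot}, $\phase(T)=j$ holds exactly when $\P_1,\dots,\P_j$ are all good for $T$ while $\P_{j+1}$ is not (with the usual convention when $j=k$). So the whole statement reduces to pinning down, for the monomial $R$, the last partition that is good for it.

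First I would record that goodness propagates to upper partitions. Since at the level of variables the neighborhoods of $\P_i$ refine those of $\P_{i+1}$ (the refinement remark following Observation~\ref{obs:zeroBelow}), each neighborhood of $\P_i$ lies inside some neighborhood of $\P_{i+1}$; hence if $\P_{i+1}$ is good for $T$ then so is $\P_i$. Consequently the partitions that are good for $T$ always form a prefix $\P_1,\dots,\P_{\phase(T)}$, and $\phase(T)$ is simply the length of this prefix. This is the only structural input needed.

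Now fix a monomial $S$ with $j:=\phase(S)$ and an arbitrary pivot tuple $R\subseteq S$, so $\abs{R}=\d+1$ and all variables of $R$ lie in one neighborhood of $\P_{j+1}$. Because $R\subseteq S$, every partition that is good for $S$ is good for $R$; in particular $\P_1,\dots,\P_j$ are good for $R$, so $\phase(R)\ge j$. On the other hand, $\P_{j+1}$ has a single neighborhood containing all $\d+1>\d$ variables of $R$, so $\P_{j+1}$ is not good for $R$, so $\phase(R)\le j$. Combining, $\phase(R)=j=\phase(S)$ for every pivot $R$, which is the claim; the boundary case $j=k$ goes through verbatim using the last-phase convention. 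I do not expect a real obstacle here — the only point that needs care is invoking Observation~\ref{obs:zeroBelow} in the correct direction (upper partitions refine lower ones), since reversing it would break the ``good partitions form a prefix'' step on which everything rests.
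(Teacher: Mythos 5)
Your proposal is correct, and it is the argument the paper evidently has in mind (the paper states the observation without a written proof). You correctly reduce the claim to two facts: (i) ``goodness'' of a partition for a monomial is inherited upward, because neighborhoods in an upper partition refine those in a lower one — this is precisely the content of Observation~\ref{obs:zeroBelow} and the remark following it, applied in the right direction — and (ii) the pivot $R$ is a subset of $S$ with $\delta+1$ of its variables sharing one $\P_{j+1}$-neighborhood, which together pin $\phase(R)$ to exactly $j$ from both sides. The only loose wording is at $j=k$: there the statement is not that the two-sided argument ``goes through verbatim'' but that it is vacuous, since the pivot tuple is defined via $\P_{j+1}$ (Observation~\ref{obs:pivot} restricts to $j\in[k-1]$), so a phase-$k$ monomial has no pivot and there is nothing to check. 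This does not affect correctness. Incidentally, the $\phase(R)\ge\phase(S)$ half is exactly the later Observation~\ref{obs:supSetPhase} specialized to $T=S\supseteq R$; your self-contained derivation avoids a forward reference and is the cleaner choice here.
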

%\vspace{-0.1cm}
Since every superset of a monomial $S$ contains its pivot tuple, $\pivot (S)$, %the
%following observation can be made.
%\vspace{-0.1cm}
\begin{observation}\label{obs:supSetPhase}
For all supersets $T \supseteq S$, $\phase(T) \le \phase(S)$.
\end{observation}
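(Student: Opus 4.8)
The plan is to invoke the pivot-tuple characterization of Observation~\ref{obs:pivot} together with the refinement property of neighborhoods (Observation~\ref{obs:zeroBelow}). Write $j := \phase(S)$; I want to show $\phase(T) \le j$ for every $T \supseteq S$. The case $j = k$ is vacuous since every phase is $\le k$, so assume $j \le k-1$. By Observation~\ref{obs:pivot}, the support of $S$ contains a pivot tuple $R$ of $(\d+1)$ variables all lying in a single neighborhood of $\P_{j+1}$ (and hence, by Observation~\ref{obs:zeroBelow}, in a single neighborhood of every $\P_i$ with $i \ge j+1$). Since $T \supseteq S$, the support of $T$ also contains this same $(\d+1)$-tuple $R$, sitting inside one neighborhood of $\P_{j+1}$.

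Next I read off what this forces about $\phase(T)$. Let $j' := \phase(T)$. If $j' = k$ there is nothing to prove, so suppose $j' \le k-1$; then by Definition~\ref{def:phase}, $\P_{j'}$ is the \emph{lowest} partition having $\le \d$ support-variables of $T$ in each of its neighborhoods. But $R$ is a set of $\d+1$ variables of $T$ contained in a single neighborhood of $\P_{j+1}$, so $\P_{j+1}$ is \emph{not} such a partition for $T$; consequently the lowest good partition $\P_{j'}$ for $T$ must satisfy $j' \le j$. (Equivalently: no partition $\P_i$ with $i \ge j+1$ can be a ``good'' partition for $T$, because the $(\d+1)$ pivot variables $R$ stay together in the neighborhoods of all such $\P_i$ by Observation~\ref{obs:zeroBelow}; hence the smallest good index for $T$ is at most $j$.) This gives $\phase(T) = j' \le j = \phase(S)$, as required.

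The only point requiring care — and the place I'd expect a referee to look — is the boundary bookkeeping between the ``first $k-1$ phases'' (defined via a pivot tuple) and the $k$-th phase (defined residually as ``not in any earlier phase''). One must check that the inequality $\phase(T) \le \phase(S)$ is consistent in all four combinations of $\phase(S) \in \{<k, =k\}$ and $\phase(T) \in \{<k,=k\}$; the argument above handles $\phase(S) < k$ uniformly, and $\phase(S) = k$ is trivial since $k$ is the maximum possible phase. No genuine obstacle arises; the statement is essentially an immediate corollary of Observations~\ref{obs:zeroBelow} and~\ref{obs:pivot}, so the ``proof'' is just the two-line monotonicity argument sketched here.
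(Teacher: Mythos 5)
Your argument takes the same route as the paper: the paper's entire justification is the one sentence preceding the observation (``Since every superset of a monomial $S$ contains its pivot tuple, $\pivot(S)$''), and you unpack it via Observations~\ref{obs:pivot} and~\ref{obs:zeroBelow}, exactly as intended.

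There is, however, a genuine error of direction in the sentence ``If $j' = k$ there is nothing to prove.'' With $j := \phase(S) \le k-1$, the case $\phase(T) = k$ is precisely the inequality $\phase(T) > \phase(S)$ that the observation forbids; it must be \emph{excluded}, not skipped. Your closing paragraph claims ``the argument above handles $\phase(S) < k$ uniformly,'' but that uniformity rests on the mishandled $j' = k$ branch. The fix is what your parenthetical actually establishes: $R \subseteq T$ is a $(\d+1)$-tuple lying in a single neighborhood of $\P_{j+1}$, and by Observation~\ref{obs:zeroBelow} it stays together in every $\P_i$ with $i \ge j+1$, so none of those partitions has the ``$\le\d$ support in each neighborhood'' property for $T$. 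Since the good indices for $T$ form an initial segment $\{1,\dots,m\}$ by the refinement property, this forces $m \le j$, and whenever $T$ has any good partition at all, Observation~\ref{obs:pivot} then gives $\phase(T) = m \le j$ --- in particular $\phase(T) \ne k$. (One further nit: you write ``smallest good index'' where the paper's convention --- lowest $=$ largest index --- means you want the \emph{largest} good index.) Replace the dismissive sentence with this exclusion argument and the proof matches the paper's intent cleanly; note that both your write-up and the paper tacitly assume $T$ admits some good partition (e.g.\ $\P_1$ is good for $T$), which does hold for the monomials arising in Lemma~\ref{ConcInMonomialsIncludingR} but is never stated.
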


In a $\d$-distance circuit, a neighborhood has at most $\d$ colors.
Hence, for any $\phaseh j$ monomial, at least two of its $(\d+1)$ pivot variables, have the same color in the lower partitions $\{\P_i\}_{i = j+1}^k$. They come from the same linear factor of the circuit $C$. Thus, %Hence, the coefficient of this monomial has $0$ in the coordinates $\{j+1, j + 2, \dots, k\}$.

%Observe that a monomial in the $j$-th phase has $(\d+1)$ variables in a single neighborhood in all of the lower partitions, i.e.\ $\P_{j+1}, \P_{j+2}, \dots, \P_{k}$.

\begin{observation}
\label{obs:chopping}
If a monomial $S$ is in the $j$-th phase, then, its coefficient vector is zero in the coordinates $\{ j+1, j+2, \dots, k\}$.
%~\coeff_D(x_S)(j+1, j+2, \dots, k) = ({0}, {0}, \dots, {0})$.
\end{observation}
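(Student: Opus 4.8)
The plan is to unwind what it means for $x_S$ to occur in a single coordinate of $\Dx$ and then extract a contradiction from the pivot structure. Recall that the $i$-th coordinate of $\Dx$ is the product $\prod_{Y \in \P_i} \ell_{i,Y}$, where $\ell_{i,Y}$ is the affine-linear factor of the $i$-th product gate supported on the color $Y \in \P_i$; distinct colors of $\P_i$ carry disjoint variable sets whose union is all of $\o{x}$, so $x_S = \prod_{Y \in \P_i} x_{S \cap Y}$. Multiplying out, the coefficient of $x_S$ in the $i$-th coordinate therefore equals $\prod_{Y \in \P_i} \coeff_{\ell_{i,Y}}(x_{S \cap Y})$, with the convention that this product is $0$ whenever some $x_{S\cap Y}$ is not a monomial of $\ell_{i,Y}$. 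Since each $\ell_{i,Y}$ has degree at most $1$, the factor indexed by $Y$ can be nonzero only when $\abs{S \cap Y} \le 1$. Hence: if the coefficient of $x_S$ in the $i$-th coordinate is nonzero, then $S$ meets every color of $\P_i$ in at most one variable.

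Next, fix $j \in [k-1]$ (the case $j = k$ being vacuous, since $\{k+1,\dots,k\} = \varnothing$) and let $S$ be a \phaseh$j$ monomial. By Observation~\ref{obs:pivot} it has a pivot tuple $R$: a set of $\d+1$ support variables lying in one neighborhood of $\P_{j+1}$, which (by Observation~\ref{obs:zeroBelow}, applied pairwise to the variables of $R$) remain inside a single neighborhood of $\P_i$ for every $i \ge j+1$. Since a neighborhood of a $\d$-distance circuit comprises at most $\d$ colors, for each such $i$ the pigeonhole principle forces two of the $\d+1$ variables of $R$ into one color $Y$ of $\P_i$, so $\abs{S \cap Y} \ge 2$. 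By the previous paragraph, the coefficient of $x_S$ in the $i$-th coordinate must then be $\mathbf{0}$, and this holds for every $i \in \{j+1,\dots,k\}$ --- which is the assertion.

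I do not expect a real obstacle. The only point deserving care is the per-color factorization of the coefficient of $x_S$ in a coordinate, which uses that the colors of $\P_i$ partition all of $\o{x}$ and that each linear factor has honest degree at most $1$, so that no single color can absorb two support variables of $S$. Everything downstream is exactly the pivot-tuple characterization (Observation~\ref{obs:pivot}), the refinement property of neighborhoods (Observation~\ref{obs:zeroBelow}), and the bound $\d$ on the number of colors in a neighborhood.
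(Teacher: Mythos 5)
Your proof is correct and essentially identical to the paper's one-line argument: pigeonhole the $\delta+1$ pivot variables against the at-most-$\delta$ colors of their common neighborhood in each $\P_i$ with $i>j$, so that two of them land in one color and hence in one affine-linear factor, forcing the coefficient of $x_S$ to vanish in that coordinate. The only difference is that you spell out the per-color factorization $\coeff_{(i\text{-th coord.})}(x_S) = \prod_{Y\in\P_i}\coeff_{\ell_{i,Y}}(x_{S\cap Y})$, which the paper leaves implicit.
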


Since the coordinates $\{j+1, j+2, \dots, k\}$
in the coefficients of phase-$(\le j)$ monomials are $0$,
they can be ignored when
the linear dependencies among the coefficients of phase-$(\le j)$ monomials are studied.
For phase-$j$,
we discuss the linear dependencies among coefficients of phase-$(\le j)$ monomials,
i.e.\ vectors limited to the $(\le j)$-th coordinates.
Coefficients of phase-$j$ monomials will be called phase-$j$ coefficients.
%\[
%\DS = \tower(S).
%\] and has $\phase(S)$ coordinates.
%We let $\nbd(T) := \nbd_{\phase(T)}(T)$ and $\tower(T) := \tower_{\phase(T)}(T)$.

%\footnote{Should we associate a single pivot with a set $S$? Or should we associate a set of pivots with a monomial of phase-$j$? Then, $\Pivot_j = \bigcup_{S \in \phaseh  j} \pivots(S)$. 
%}
%-------Motivate this!--------
%\begin{observation}\label{supset}
%For all supersets $T \supseteq S$, $\phase(T) \le \phase(S)$.
%\end{observation}
%\begin{observation}\label{phaseSubset}\footnote{Change? Is it useful now?}
%Let a monomial $S$ be in the $j$-th phase, such that the number of $\P_j$ neighborhoods it spans across is $b$. $\abs{\nbd(S)} = b$. Let $\abs{\nbd(\pivot(S))} = a$. Then, for every neighborhood size $s: a \le s \le  b$, the set $S$ has a subset $T$ in the $j$-th phase such that $\abs{\nbd(T)} = s$. $\abs{\nbd(S) \setminus \nbd(T)} = b - s$.
%\end{observation}
%\begin{proof}
% $T$ is obtained by including variables from $\nbd(\pivot(S))$ and all variables from other neighborhoods.
%\end{proof}

%\vspace{-0.1cm}
\subsection{Partial derivatives}
All the monomials in phase-$ j$ are identified by the existence of at least one pivot tuple.
Consider a phase-$ j$ monomial $S$ with a pivot $R$ of phase-$ j$ (Observation \ref{obs:phaseOfPivot}).
%We intend to show that
%the coefficient of $S$ depends on certain coefficients of phase-$(\le j)$.
%%and $(<\ell)$-support monomials of phase-$ j$.
%%One way to achieve this is to 
%We show that it depends on %such 
%the coefficients of the `smaller' monomials that include $R$.
We will prove \lc 
(linear dependence on $(\le \ell)$-support coefficients)
among the coefficients of all the monomials that include $R$.
Since all the monomials now include $R$, the $(> j)$-th coordinates in the \p $\Dpx$ are $0$ (Observations \ref{obs:phaseOfPivot}, \ref{obs:supSetPhase}, \ref{obs:chopping}).
The polynomial could be restricted to coordinates $1$ to $j$.
We prove \lc after taking \emph{partial derivative} with respect to $x_R$.
This is encapsulated in Observation \ref{obs:coeffMul}.

%, and hence, are in phase-$(\le j)$ (Observation \ref{obs:supSetPhase}). This is done by taking \emph{partial derivative }of $\Dpx$ with respect to $x_R$.

In other words: In the linear factors of $C$ corresponding to the partitions $\{\P_i\}_{i = 1}^j$ that include the variables of $R$,
we pick the monomial $x_R$ and its coefficient.
Then, we prove low-support concentration in the remaining polynomial.
\subsection{\lc}
We now come to the main subsection of the paper. Let $\ell_0 = \log_2(k + 1), \ell = \d + 1 + \ell_0$ and $\Dpx = D(\o{x} + \phi_{\delta \ell_0}(\o{t}))$, where $\Dx \in \H_k(\F)[\o{x}]$ is a $\d$-distance circuit and
$\phi_{\delta \ell_0}$ (defined in Subsection \ref{subsec:kronecker})
is a univariate map
which separates all ($\leq \delta \ell_0 $) support monomials.
We prove \lc in the \p $\Dpx$ (Lemma~\ref{lem:dDistlConc}), because of which, a $n^{O(\d \log k)}$ time hitting-set exists for the $\d$-distance depth-$3$ \ml circuit (Theorem~\ref{thm:dDistHS}).

%\vspace{-0.1cm}
\subsubsection{\lc in each phase}

As already stated in Subsection \ref{phases},
proof of \lc in $\Dpx$ is by induction on the phases.
The monomials of each phase are progressively shown to be $\ell$-concentrated:
In this subsection, when we say that a phase-$i$ monomial is $\ell$-concentrated,
we mean that its coefficient is linearly dependent on coefficients of $(<\ell)$-support monomials of phase-$(\leq i)$.
The following lemma shows one step of the induction.

\begin{lemma}[Phase-$(<j)$ to Phase-$j$]\label{lem:j-1Impliesj}% j-1 implies j
If the monomials in $\Dpx$ of phase-$(<j)$ are $\ell$-concentrated,
% for $\ell = \d + 1 + \log_2(k+1)$, 
then the monomials of phase-$ j$ are $\ell$-concentrated. $\ell$-Concentration exists in phase-$ 1$ without any preconditions. 
\end{lemma}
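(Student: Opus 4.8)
The plan is to induct on the phases: assume every monomial of phase-$(<j)$ of $\Dpx$ is $\ell$-concentrated and deduce the same for phase-$j$. Fix a phase-$j$ monomial $S$; by Observation~\ref{obs:pivot} it carries a pivot tuple $R$ of $\delta+1$ variables lying in one neighborhood of $\P_{j+1}$, and by Observation~\ref{obs:phaseOfPivot} $R$ itself has phase-$j$. I would first reduce to showing $\ell$-concentration among the coefficients of all monomials $T \supseteq R$. By Observations~\ref{obs:phaseOfPivot}, \ref{obs:supSetPhase} and \ref{obs:chopping}, all such $T$ have zero coordinates above $j$, so the relevant coefficient vectors live in $\H_j(\F)$, and we may pass to the partial derivative of $\Dpx$ with respect to $x_R$ (as flagged in the Partial Derivatives subsection / Observation~\ref{obs:coeffMul}): concentration among the $\partial_{x_R}$-coefficients is exactly concentration among the coefficients of monomials containing $R$. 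The order $|R|=\delta+1$ of this derivative is what forces the support threshold $\ell = \delta+1+\ell_0$; the monomials produced by $\partial_{x_R}$ that we need to reach are those of support $<\ell_0$ after the derivative, i.e.\ support $<\ell$ before it, and these all have phase $\le j$ (supersets of $R$, which has phase $j$).

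Next I would identify, inside this derivative polynomial $E'$, the relevant \emph{true subcircuit} — a tower $\tower_j$ over the neighborhood(s) of $R$ in $\P_j$ together with the chosen monomial $x_R$ picked off the linear factors above — which by Observation~\ref{obs:subCktHasDeltaDist} is again a $\delta$-distance circuit, and whose coefficients are ``true'' (equal to $\coeff_{D'}$ up to a global constant in $\H_j$). The heart is to prove $\ell_0$-concentration inside such a tower after the shift $\phi_{\delta\ell_0}$; this is the step invoked as Lemma~\ref{lem:subcircuitl0}, with the transfer-matrix combinatorics (Lemma~\ref{lem:rowComb}) and the doubling/$\log_2(k+1)$ bound entering here. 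From $\ell_0$-concentration of the tower I would push outward in two nested inductions: first up the neighborhoods (Lemma~\ref{lem:subcircuit}, using Observations~\ref{obs:zeroBelow}, \ref{obs:CoeffMulInTowers} to combine concentration over disjoint-variable towers via the Hadamard product), obtaining concentration in the whole derivative polynomial, and then translating back through Lemma~\ref{ConcInMonomialsIncludingR} to concentration among the phase-$j$ coefficients of $\Dpx$ themselves. Crucially, whenever the lifted dependency (in the sense of the General Approach subsection) spills onto monomials of phase-$(<j)$, the outer induction hypothesis absorbs them, so the net conclusion is dependence on $(<\ell)$-support coefficients of phase-$(\le j)$.

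Finally, the base case phase-$1$: there the only partition in play is $\P_1$, so any phase-$1$ monomial already has $\le\delta$ support variables per $\P_1$-neighborhood and we are directly in the regime of Lemma~\ref{lem:subcircuitl0} applied to towers of $\P_1$ with no upper partitions to worry about, giving $\ell$-concentration with no precondition. I expect the main obstacle to be the innermost step — proving $\ell_0$-concentration of $\tower_j(\X)$ after the small shift $\phi_{\delta\ell_0}$ — because that is where one must extract, from the null space of the coefficient matrix $U$ of a $\delta$-distance tower and its lift $U = U'\,\TM$, a dependency witnessing that \emph{each} high-support $u'_S$ lies in the span of $(<\ell_0)$-support vectors, which requires the refined transfer-matrix argument of Lemma~\ref{lem:rowComb} rather than a naive lift; a secondary subtlety is bookkeeping the interaction between the two inductions (on phases and on neighborhoods) so that the phase-$(<j)$ hypothesis is legitimately available exactly where the lifted dependencies leak below phase $j$.
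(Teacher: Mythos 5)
Your proposal is correct and follows exactly the route the paper takes: reduce, via the pivot tuple $R$ from Observation~\ref{obs:pivot} (with Observations~\ref{obs:phaseOfPivot}, \ref{obs:supSetPhase}, \ref{obs:chopping} to drop coordinates $>j$), to Lemma~\ref{ConcInMonomialsIncludingR}, which factors $\Dpx = D'_R \star E'_R$, proves $\ell_0$-concentration of $E'_R$ modulo phase-$(<j)$ via Lemmas~\ref{lem:subcircuitl0} and \ref{lem:subcircuit} (two nested inductions on neighborhoods, with the transfer-matrix Lemma~\ref{lem:rowComb} at the core), and lifts back using Observation~\ref{obs:coeffMul}, with the outer phase induction absorbing the phase-$(<j)$ spillage; the base case is vacuous since there is no phase-$0$. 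One small imprecision: the true subcircuit $E'_R$ carrying the $\ell_0$-concentration is not a ``tower over the neighborhood(s) of $R$'' but the complementary factor (the linear factors \emph{not} containing $R$-variables) — the $R$-neighborhoods sit in $D'_R$ — though this does not affect the soundness of your overall plan.
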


\begin{proof}
Since phase-$0$ does not exist, the monomials of phase-$0$ are vacuously $\ell$-concentrated.
Now, when the monomials of phase-$(<j)$ are $\ell$-concentrated, 
% for $\ell := \d + 1 + \log(k+1)$,
Lemma~\ref{ConcInMonomialsIncludingR} proves \lc in all phase-$ j$ monomials that include the phase-$j$ pivot $R$. Every monomial of phase-$ j$ has such a pivot from Observation~\ref{obs:pivot}.
% from Observation~\ref{smallPivot}.
\end{proof}

The following observation about $\Dpx$ will be used in the next lemma.
%\vspace{-0.1cm}
\begin{observation}[$\DpR, \EpR$ and Decomposition of coefficients]
\label{obs:coeffMul}
%For any monomial $x_R \subseteq \n$,
For any $R \subseteq \n$,
we define $\DpRx \in \H_{k}(\F(t))[\o{x}]$ to be the polynomial,
obtained from $\Dpx$,
where each coordinate is the product of those linear factors, in the corresponding coordinate of $\Dpx$, that include the variables of $R$.
Let $\EpRx\in \H_{k}(\F(t))[\o{x}]$ be the polynomial,
obtained from $\Dpx$,
where each coordinate is the product of those linear factors, in the corresponding coordinate of $\Dpx$, that do not include the variables of $R$.
%\footnote{$\coeff_{\DpR}(x_R) \star \EpRx = \frac{\partial}{\partial x_R} \left(\Dpx\right)$.
%}.
I.e.\ $\Dpx = \DpRx \star \EpRx$.
Then, for all supersets $S$ of $R$, $\coeff_{D'}(x_S) = \coeff_{\DpR} (x_R) \star \coeff_{\EpR} (x_{S \setminus R})$.
\end{observation}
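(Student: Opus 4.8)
The plan is to deduce the identity from the elementary fact that, for a product of two polynomials over \emph{disjoint} sets of variables, the coefficient of a monomial is the product of the coefficients of its two ``halves'', applied separately in each of the $k$ (commuting) Hadamard coordinates. So first I would fix a coordinate $i \in [k]$ and unpack the definitions of $\DpR$ and $\EpR$ there. Since $C$ is multilinear, no variable occurs in two distinct linear factors of the $i$-th product gate; hence the linear factors of that gate split into the group $\mathcal{G}_i$ of those containing some variable of $R$ and the group $\mathcal{G}_i'$ of those containing none, and these two groups involve disjoint sets of variables $V_i$ and $W_i$. By construction the $i$-th coordinate of $\DpRx$ is the product of the shifted factors in $\mathcal{G}_i$ and the $i$-th coordinate of $\EpRx$ is the product of those in $\mathcal{G}_i'$, so coordinate-wise $D'_i = D'_{R,i}\cdot E'_{R,i}$ in $\F(\o{t})[\o{x}]$; taking the Hadamard product over all $i$ gives $\Dpx = \DpRx \star \EpRx$.

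Next I would extract coefficients. Because $D'_{R,i}$ and $E'_{R,i}$ have disjoint variable supports $V_i$ and $W_i$, the product rule collapses to a single term: for any $S$, $\coeff_{D'_i}(x_S) = \coeff_{D'_{R,i}}(x_{S\cap V_i})\cdot \coeff_{E'_{R,i}}(x_{S\cap W_i})$, both sides being $0$ unless $S \subseteq V_i \cup W_i$. For $S \supseteq R$ one checks that $S \cap V_i = R$: every variable of $R$ lies in its own $\P_i$-color, which belongs to $\mathcal{G}_i$, so $R \subseteq V_i$; no variable of $R$ lies in a factor of $\mathcal{G}_i'$, so $R \cap W_i = \varnothing$; and in the case relevant to Lemma~\ref{ConcInMonomialsIncludingR} no further variable of $S$ is contributed by $\mathcal{G}_i$. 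Thus $S\cap V_i = R$ and $S \cap W_i = S\setminus R$, so $\coeff_{D'_i}(x_S) = \coeff_{D'_{R,i}}(x_R)\cdot \coeff_{E'_{R,i}}(x_{S\setminus R})$. Reading this off for $i = 1,\dots,k$ simultaneously yields $\coeff_{D'}(x_S) = \coeff_{\DpR}(x_R) \star \coeff_{\EpR}(x_{S\setminus R})$; when $R$ is a phase-$j$ pivot both sides are in any case zero in the coordinates above $j$, by Observations~\ref{obs:phaseOfPivot}, \ref{obs:supSetPhase} and~\ref{obs:chopping}.

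The only point that needs care is the per-coordinate bookkeeping of the variable sets $V_i$ and $W_i$ — concretely, verifying that for the monomials $x_S$ actually arising in the argument the part of $S$ lying in $\mathcal{G}_i$ is exactly $R$ in every coordinate. This is precisely where multilinearity is essential: it is what makes the two factor-groups variable-disjoint within each gate, and hence what turns the coefficient convolution into a single product; everything else is a direct unwinding of the definitions of $\DpR$ and $\EpR$ together with $\Dpx = \DpRx \star \EpRx$.
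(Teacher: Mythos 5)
Your coordinate-wise decomposition is sound as far as it goes: in each coordinate $i$ the linear factors split into the groups $\mathcal{G}_i$ (containing some variable of $R$) and $\mathcal{G}_i'$ (containing none), giving disjoint variable sets $V_i$ and $W_i$, and the intersection formula $\coeff_{D'_i}(x_S) = \coeff_{D'_{R,i}}(x_{S\cap V_i})\cdot \coeff_{E'_{R,i}}(x_{S\cap W_i})$ (both sides $0$ when $S \not\subseteq V_i \cup W_i$) is correct. The gap is in passing from there to the claimed identity: you assert $S \cap V_i = R$ in every coordinate, and support this only with ``in the case relevant to Lemma~\ref{ConcInMonomialsIncludingR} no further variable of $S$ is contributed by $\mathcal{G}_i$.'' That assertion is not true in general, and the observation must be established for every $S \supseteq R$ since that is how it is invoked. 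A variable $x_a \in S\setminus R$ can perfectly well lie in the same linear factor as a variable of $R$ in some coordinate $i$, in which case $x_a \in V_i$ and $S \cap V_i \supsetneq R$. This can occur even in the setting of Lemma~\ref{ConcInMonomialsIncludingR}: that $S'$ is a monomial of $\EpRx$ only means $S'$ is supported on $\bigcup_i W_i$, which does not prevent $x_a \in V_{i'}$ for some other coordinate $i'$. You flag the ``per-coordinate bookkeeping'' as the one point needing care, but then leave it unverified --- and it cannot be verified, because it fails.

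What is missing is precisely the second case in the paper's proof: when some $x_a \in (S\setminus R)\cap V_i$, both sides of the target identity vanish in coordinate $i$. The right side vanishes because $x_a \notin W_i$, so $E'_{R,i}$ does not involve $x_a$ and hence $\coeff_{E'_{R,i}}(x_{S\setminus R}) = 0$. The left side $\coeff_{D'_i}(x_S)$ vanishes because $x_a$ shares a linear factor with some $x_b \in R$, and a single linear factor contributes at most one variable to any monomial of the product, so $x_S$ (which contains both $x_a$ and $x_b$) cannot arise in coordinate $i$ at all. Adding this short case analysis closes the gap, after which your argument coincides with the paper's.
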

\begin{proof}
The observation is easy for the coordinates where the linear factors that support $R$ and $S \setminus R$ are disjoint.
In the coordinates where
at least one variable, say $x_a$, of $S \setminus R$
is included in the linear factors in $\DpRx$,
$\coeff_{\EpR} (x_{S \setminus R})$ is zero in these coordinates
because the variable $x_a \in S \setminus R$
is present in only one linear factor,
which is not in $\EpR$.
Whereas $\coeff_{D'}(x_S)$ is zero in these coordinates
because at least two variables of $S$ are present in the same linear factor in $\DpR$.
\end{proof}

\begin{lemma}[Concentration in monomials that include $R$]\label{ConcInMonomialsIncludingR}
Let $R$ be a phase-$j$ monomial such that $\abs{R} = \d + 1$. If \lc at phase-$(<j)$ exists, then the monomials that include $R$ are $\ell$-concentrated.
\end{lemma}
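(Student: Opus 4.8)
The plan is to prove Lemma~\ref{ConcInMonomialsIncludingR} by reducing the statement about monomials containing a fixed phase-$j$ pivot $R$ to a statement about low-support concentration in a derivative polynomial that has small degree in the relevant coordinates. First I would use Observation~\ref{obs:coeffMul} to factor $\Dpx = \DpRx \star \EpRx$, so that for every superset $S \supseteq R$ one has $\coeff_{D'}(x_S) = \coeff_{\DpR}(x_R) \star \coeff_{\EpR}(x_{S\setminus R})$. Since $\coeff_{\DpR}(x_R)$ is a fixed (nonzero in the surviving coordinates) vector that multiplies coordinate-wise, linear dependencies among the $\coeff_{\EpR}(x_{S\setminus R})$ translate directly into linear dependencies among the $\coeff_{D'}(x_S)$, and the support only grows by $|R| = \d+1$. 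So it suffices to prove $\ell_0$-concentration in $\EpRx$, where $\ell_0 = \log_2(k+1)$ and $\ell = \d + 1 + \ell_0$; this is precisely the step of ``taking the partial derivative with respect to $x_R$'' described in the preliminary discussion, because picking $x_R$ out of the factors of $\DpRx$ is what zeroes the coordinates $\{j+1,\dots,k\}$ (using Observations~\ref{obs:phaseOfPivot}, \ref{obs:supSetPhase}, \ref{obs:chopping}) and leaves a polynomial over $\H_j(\F(t))$.

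Next I would restrict attention to $\EpRx$ as a $\d$-distance circuit over the coordinates $1$ to $j$ (it is a subcircuit of $\Dpx$ by Observation~\ref{obs:subCktHasDeltaDist}) and observe that, by the refinement property of neighborhoods, the monomials of $\EpRx$ are supported on neighborhoods that each receive at most $\d$ of $R$'s pivot variables already, so the local structure near $R$ is genuinely ``low-distance at the $j$-th level''. The polynomial $\EpRx$ decomposes, via Observation~\ref{obs:CoeffMulInTowers}, as a $\star$-product of towers over the distinct neighborhoods of $\P_j$, and within each tower the degree in the $j$-th coordinate is controlled by $\d$. I would then invoke the subcircuit machinery: Lemma~\ref{lem:subcircuitl0} to get $\ell_0$-concentration in the true low-degree subcircuits of $\EpRx$, and Lemma~\ref{lem:subcircuit} (which runs an induction on neighborhoods) to push that concentration up to all of $\EpRx$. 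The transfer-matrix / row-combination input Lemma~\ref{lem:rowComb} supplies the combinatorial fact that makes the subcircuit concentration survive the lift $U = U'\TM$, and the $\log_2(k+1)$ bound on $\ell_0$ comes from the ``doubling'' built into that lemma together with the $k$-dimensionality of the ambient Hadamard algebra. Assembling: $\ell_0$-concentration in $\EpRx$ plus the factor $\coeff_{\DpR}(x_R)$ of support $\d+1$ gives $(\d+1+\ell_0) = \ell$-concentration among all monomials $S \supseteq R$, with the hypothesis that phase-$(<j)$ is already $\ell$-concentrated feeding into Lemma~\ref{lem:subcircuit} to handle those monomials of $\EpRx$ whose full preimage in $\Dpx$ lands in an earlier phase.

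The main obstacle, and the step I expect to consume most of the technical work, is establishing that $\ell_0$-concentration really propagates from the true low-degree subcircuits to the whole derivative polynomial $\EpRx$ (i.e.\ the combination of Lemmas~\ref{lem:subcircuitl0}, \ref{lem:subcircuit}, and \ref{lem:rowComb}). The delicate point is that a ``true'' subcircuit with low degree in the $j$-th coordinate may still carry high-support monomials from the upper partitions; one must argue that those high-support parts are already handled by the phase-$(<j)$ induction hypothesis, and that the dependency certificates produced by lifting null vectors of $U$ through $\TM = A^{-1}MA$ genuinely express each high-support coefficient in terms of $(<\ell)$-support ones rather than merely asserting the existence of \emph{some} dependency. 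Getting the bookkeeping right — which coordinates are zero when, which neighborhoods are ``active'', and how the induction on neighborhoods interleaves with the induction on phases — is where the real care is needed; the rest is essentially linear-algebra and the Kronecker-map accounting already set up in Subsection~\ref{subsec:kronecker} and Lemma~\ref{lem:finalHS}.
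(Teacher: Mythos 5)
Your proposal is correct and follows essentially the same path as the paper's proof: factor $\Dpx = \DpRx \star \EpRx$ via Observation~\ref{obs:coeffMul}, invoke Lemma~\ref{lem:subcircuit} (with Observation~\ref{obs:subCktHasDeltaDist}) to get $\ell_0$-concentration in $\EpRx$ modulo phase-$(<j)$, multiply through by $\coeff_{\DpR}(x_R)$, and absorb the extra support $\abs{R} = \d+1$ to land at $\ell = \d + 1 + \ell_0$, with the phase-$(<j)$ hypothesis cleaning up the residual terms. The extra commentary about Lemmas~\ref{lem:subcircuitl0}, \ref{lem:rowComb}, and the transfer matrix describes machinery inside the cited lemmas rather than this proof itself, but does not change the fact that your outline matches the paper's argument.
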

\begin{proof}
We show that the coefficients, whose support include $R$, are linearly dependent on: The low-support coefficients, whose support include $R$, and phase-$(<j)$ coefficients.

Consider the polynomial $\EpRx$ of $\Dpx$, restricted to coordinates $1$ to $j$.
%\footnote{We abuse the notation $D', \EpR$ and $\DpR$ and use them for the corresponding polynomials restricted to coordinates $1$ to $j$.}.
%$(1, 2, \dots, j)$\footnote{We are assuming here, without loss of generality, that the vector $\coeff_{\DpR}(x_R)$ is invertible. If there are any zero-coordinates, then we can consider the smaller Hadamard algebra.}.
Lemma~\ref{lem:subcircuit}, together with Observation \ref{obs:subCktHasDeltaDist}, proves \lzc for the monomials in $\EpRx$ modulo the coefficients in $\phasesh (<j)$. I.e.\ $\forall $ monomial $S' $ in $ \EpRx$,
\[
\coeff_{\EpR} (x_{S'}) 
= \sum_{\substack{T': \abs{T'} < \ell_0,\\{T' \in \phaseh  j \mbox{ of } \EpR}}} 
\alpha_{T'} \coeff_{\EpR} (x_{T'}) 
+ \sum_{\substack{{T'} \in \phaseh (<j) \mbox{ of } \EpR }} 
\alpha_{T'} \coeff_{\EpR} (x_{T'}) ,
\]
where $\alpha_{T} \in \F(t)$ and the monomials are from $\EpRx$.
Since the supports of $R$ and $S'$ are disjoint, 
from Observation \ref{obs:coeffMul}, $\coeff_{D'}(x_{S'\cup R}) = \coeff_{\DpR} (x_R) \star \coeff_{\EpR} (x_{S'}), ~\forall $ monomial $S' $ in $ \EpRx$.

Therefore, multiplying throughout with $\coeff_{\DpR} (x_R) $,
\[
\coeff_{D'} (x_{S' \cup R}) 
= \sum_{\substack{T': \abs{T'} < \ell_0,\\{T' \in \phaseh  j \mbox{ of } \EpR}}}
 \alpha_{T'} \coeff_{D'} (x_{T' \cup R})
  + \sum_{\substack{{T'} \in \phaseh (<j) \mbox{ of } \EpR}}
   \alpha_{T'} \coeff_{D'} (x_{T'\cup R}).
\]

%The phase-$ (\le j)$ coefficients remain in the same phase even after multiplication with $R$, since $R$ is a phase-$ j$ monomial and $tower_j(R)$ and $\EpRx$ are disjoint.
%Since $T' \in \tower(S')$, $T'\cup R \in \tower(S' \cup R)$.

Let $S = S' \cup R $ and $T = T' \cup R$.
From Observation~\ref{obs:supSetPhase},
even after including $R$, a phase-$(<j)$ monomial remains a phase-$(<j)$ monomial
and a phase-$j$ monomial remains a phase-$( \le j)$ monomial. $\abs{R} = \d + 1$. Rewriting gives,
\[
\coeff_{D'} (x_{S})
 = \sum_{\substack{T: \abs{T} < \ell_0 + \d + 1,\\{T \in \phaseh  (\le j) \mbox{ of } D'}}}
  \alpha_{T'} \coeff_{D'} (x_{T})
   + \sum_{\substack{{T} \in \phaseh (<j) \mbox{ of } D'}}
    \alpha_{T'} \coeff_{D'} (x_{T}).
\]

$\coeff_{D'} (x_{S})$ is thus linearly dependent on: $(<\ell)$ support coefficients, and the coefficients in the upper phases. The upper phase monomials are already $\ell$-concentrated. Hence, $\coeff_{D'} (x_{S})$  is also $\ell$-concentrated.

This is proven for every monomial $S$ with $R$ as its subset.
\end{proof}

%\vspace{-0.1cm}
\subsubsection{$\ell_0$-concentration in the ``last" phase}
Lemma~\ref{ConcInMonomialsIncludingR} needed $\ell_0$-concentration
of $\EpR$ modulo its phase-$(<j)$ coefficients. 
Similar to $\EpR$, we can define a polynomial $\ER$ of $D$ (Observation~\ref{obs:coeffMul}).
It is easy to see that $\EpR = \ER(\o{x}+ \phi_{\delta \ell_0}(\o{t}))$.
By Observation \ref{obs:subCktHasDeltaDist},
$\ER$ is also a $\d$-distance circuit.
We will actually give a general result that 
for any $\d$-distance circuit $E(\o{x}) \in \H_{j}[\o{x}]$,
its shifted version $E' := E(\o{x}+ \phi_{\delta \ell_0}(\o{t}))$ 
has $\ell_0$-concentration modulo its phase-$(<j)$ coefficients. 
Here, phases for coefficients, neighborhoods in the partitions
and towers over the neighborhoods, are defined for $E'$, similar to $D'$.
Note that $j$ is the last phase of $E'$,
as it is over $\H_{j}(\F(t))$.
For any $E'(\o{x}) \in \H_{j}(\F(t))[\o{x}]$, let $V_{E'}$ 
denote the space spanned by the phase-$(<j)$
coefficients of $E'$.

%Similar to $D'$,
%we define the neighborhoods in the partitions, the towers on the neighborhoods and the $j$ %phases for the coefficients of $E'$.   

%We will now show low-support concentration (modulo some space)
% in certain subcircuits 
%of $\Dpx$. 
%These subcircuits %will not be completely arbitrary but 
%correspond to
%a tower over some set of neighborhoods in some partition $\P_j$. 
%Let us denote this subcircuit by $E'(\o{x})$.
%Observe that $E'$ can be viewed as a shifted version of
%a subcircuit $E$ of $D$, i.e.\

%$E'(\o{x}) = E(\o{x} + \phi_{\d\ell_0}(\o{t}))$.
%Here, $E$ is a subcircuit of $D$,
%corresponding to a tower over some set of neighborhoods in partition $\P_j$. 
%Note that the polynomial $E'(\o{x})$ lies in $\H_{j}(\F(t))[\o{x}]$,
%where $k' := \sum_{i=1}^j k_i$.
%Similar to $D'$, we define the phases for the coefficients
%of $E'$. 

%s.
%Here, $E$ is a subcircuit of $D$.
%, corresponding to a tower over some set of neighborhoods in partition $\P_j$. 
%Note that the polynomial $E'(\o{x})$ lies in $\H_{j}(\F(t))[\o{x}]$.

%By Observation \ref{obs:subCktHasDeltaDist},
%$E'(\o{x})$ is a $\d$-distance circuit

%Let $\ell_0 = \log (k +1)$.
%We will actually show $\ell_0$-concentration in $E'$
%modulo its phase-$(<j)$ coefficients. 

%Now, we prove $\ell_0$-concentration
%for any tower over a set of neighborhoods in partition 
%$\P_j$, without the restriction 
%on the number of neighborhoods.
The proof is in two parts.
Let $\norm{\nbd_j(E')}$ be the number of neighborhoods
in $E'$ in its partition $\P_j$.
The first part is to show the result for smaller circuits $E'$,
%i.e.\ the number of neighborhoods in its partition $\P_j$, 
when $\norm{\nbd_j(E')} \leq \ell_0$.
This is done in Lemma \ref{lem:subcircuitl0}.
The second part is to prove it for larger circuits using induction. 
This is done in Lemma \ref{lem:subcircuit}.

%\vspace{-0.1cm}
\begin{lemma}[Small-neighborhood $\ell_0$-concentration]
\label{lem:subcircuitl0}
Let $\Ex \in \H_j(\F)[\o{x}]$ be a $\d$-distance circuit
such that $\norm{\nbd_j(E')} \leq \ell_0$.
Then $E'(\o{x}) = E(\o{x} + \phi_{\d\ell_0}(\o{t}))$ 
%be its shifted version.
%Let $V_{E'}$ be the subspace spanned by the coefficients of the phase-$(<j)$ monomials of $\Epx$.
%Then, $\Epx$ 
has $\ell_0$-concentration modulo $V_{E'}$. \emph{(Proof in Appendix \ref{app:subcircuitl0}.)}
\end{lemma}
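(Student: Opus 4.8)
The plan is to exploit the structure of a $\d$-distance circuit $\Ex \in \H_j(\F)[\o{x}]$ whose bottom partition $\P_j$ has at most $\ell_0$ neighborhoods. First I would decompose $\Ex$ along the neighborhoods of $\P_j$. Write $\nbd_j(E') = \{\X_1,\dots,\X_r\}$ with $r \le \ell_0$; the towers $\tower_j(\X_1),\dots,\tower_j(\X_r)$ are polynomials over pairwise disjoint variable sets, and up to a constant vector $\Ex$ factors (coordinate-wise) as $\tower_j(\X_1) \star \dots \star \tower_j(\X_r)$, so $\Ex$ itself is such a product after shifting, $E' = \tower_j(\X_1)'\star\dots\star\tower_j(\X_r)'$. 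By Observation~\ref{obs:CoeffMulInTowers}, the coefficient of a monomial $x_S$ in $E'$ is the $\star$-product over $i$ of $\coeff_{\tower_j(\X_i)'}(x_{S \cap \variables(\X_i)})$. Now a monomial $x_S$ of $E'$ is in phase-$j$ (the last phase) precisely when no neighborhood $\X_i$ contributes more than $\d$ variables to $S$, i.e.\ $\abs{S \cap \variables(\X_i)} \le \d$ for every $i$ (Observation~\ref{obs:pivot}). Hence the phase-$j$ coefficients of $E'$ are exactly the $\star$-products of coefficients of $(\le\d)$-support monomials of the individual shifted towers; everything else lands in $V_{E'}$.

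The second, and main, step is to establish $\d\ell_0$-separation-driven concentration inside each shifted tower, and then combine. For a single neighborhood $\X_i$, the tower $\tower_j(\X_i)$ is again a $\d$-distance circuit whose bottom partition has a \emph{single} neighborhood (itself), and — crucially — the combinatorics forces its phase-$j$ monomials to have support at most $\d$ inside $\X_i$; concretely, all linear factors in the last coordinate of $\tower_j(\X_i)$ involve at most $\d$ colors, and I would argue (using Observation~\ref{obs:zeroBelow}, that neighborhoods refine going up) that the span of its coefficients is already spanned by $(\le\d)$-support coefficients — this is where the shift $\phi_{\d\ell_0}$ enters, making the low-support coefficients of each tower linearly independent / recoverable over $\F(t)$, via the lifting argument of Subsection~\ref{subsec:generalApproach} and the transfer-matrix analysis. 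Granting that each shifted tower $\tower_j(\X_i)'$ has its coefficient-space spanned by its $(\le\d)$-support (equivalently, all its coefficients are $\F(t)$-linear combinations of $(\le\d)$-support ones, modulo its own upper phases), I then take a phase-$j$ coefficient $\coeff_{E'}(x_S) = \star_i c_i$ with $c_i = \coeff_{\tower_j(\X_i)'}(x_{S_i})$, $S_i = S\cap\variables(\X_i)$, $\abs{S_i}\le\d$, expand each $c_i$ in the $(\le\d)$-support basis of tower $i$, and multiply out the $\star$-product.

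The counting is what gives $\ell_0$: a monomial with support $<\ell_0 = \log_2(k+1)$ that is a union of $(\le\d)$-support pieces across $r \le \ell_0$ towers — wait, naively the product has support up to $\d\ell_0$, which is too big. The fix, and the genuinely technical point the paper flags as "the most technical part," is a \emph{doubling / dimension} argument: the coefficient vectors live in a $\le j \le k$-dimensional $\F(t)$-vector space; a product $c_1 \star \dots \star c_r$ of $r$ vectors, once we know each $c_i$ lies in a span of low-support vectors, can be rewritten so that at each of the $\ceil{\log_2 r}$ levels of a balanced binary combination the number of "active" coordinates at least doubles, so after $\log_2(k+1)$ levels the product is forced to be a combination of pieces whose \emph{total} support is $O(\ell_0)$ — because beyond that the vectors are identically zero (there are only $k$ coordinates). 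I would make this precise by induction on the binary tree over the $r \le \ell_0$ towers, tracking at each internal node the $\star$-product of two already-reduced pieces and reapplying the low-support reduction; the support of the reduced piece at height $h$ is $O(h + \d)$, not $O(2^h\d)$, exactly because of the $\le k$-coordinate ceiling. Pushing this to the root gives $\coeff_{E'}(x_S)$ as an $\F(t)$-combination of coefficients of monomials of support $< \ell_0$ in phase-$j$ of $E'$, plus terms in $V_{E'}$, which is the claimed $\ell_0$-concentration modulo $V_{E'}$.

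The hard part will be setting up the doubling argument rigorously — specifically, arguing that combining two low-support-spanned families under $\star$ and re-reducing genuinely doubles the "footprint" rather than merely adding it, which requires carefully choosing, at each merge, a minimal spanning subset of the merged coefficients and showing its supports cannot all be confined to few coordinates without collapsing the rank. This is where the $k$-dimensionality is spent, and it is the reason the final support bound is $\log_2(k+1) = \ell_0$ rather than something depending on $r$ or $n$. The dependence on $\d$ stays additive/multiplicative-in-$\ell_0$ (the $\d\ell_0$ in the Kronecker map $\phi_{\d\ell_0}$) rather than exponential, since each individual tower only ever needs $(\le\d)$-support monomials separated, and there are at most $\ell_0$ of them in play.
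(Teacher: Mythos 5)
Your proposal takes a genuinely different route from the paper's proof, and the route has a gap at exactly the point where the bound $\ell_0 = \log_2(k+1)$ is supposed to come from. The paper does \emph{not} prove Lemma~\ref{lem:subcircuitl0} by decomposing $E'$ into $\star$-products of shifted towers over the $\le \ell_0$ neighborhoods and then combining; that decomposition is the mechanism of Lemma~\ref{lem:subcircuit}, the \emph{inductive} step for circuits with many neighborhoods, and that induction bottoms out at Lemma~\ref{lem:subcircuitl0} as its base case. The actual proof of the base case is a direct linear-algebra argument on the transfer matrix $\TM$: one observes that with $\norm{\nbd_j(E)} \le \ell_0$ every phase-$j$ monomial has support between $\ell_0$ and $\d\ell_0$, lifts null vectors of $U$ (the unshifted coefficient matrix) to null vectors of $U'$, and shows (Lemma~\ref{lem:invertible}) that out of any nullspace of codimension $\le k$ one can extract a square invertible block $\TM_1\nul^*$ indexed by the high-support phase-$j$ monomials. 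The invertibility rests on Lemma~\ref{lem:rowComb}: any nonzero linear combination of the rows of $M_{n,\ell}$ has at least $2^\ell$ nonzero entries. The ``doubling effect'' the paper alludes to is precisely this $2^\ell$ bound, and $\ell_0 = \log_2(k+1)$ is chosen so that $2^{\ell_0}-1 \ge k$, i.e.\ deleting the $\le k$ columns that are linearly dependent in the nullspace basis cannot kill the independence of the relevant rows of $M$.

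The gap in your argument is the third paragraph. You write that a $\star$-product $c_1 \star \dotsm \star c_r$ of low-support-spanned coefficient vectors ``can be rewritten so that at each of the $\ceil{\log_2 r}$ levels of a balanced binary combination the number of active coordinates at least doubles,'' and conclude the total support is $O(\ell_0)$ ``because beyond that the vectors are identically zero.'' There is no such doubling. The coordinate space is fixed at dimension $j \le k$ throughout; a Hadamard product of two vectors in $\F(t)^j$ is still a vector in $\F(t)^j$, and expanding each factor in a low-support basis and distributing just produces $\star$-products $v_{p_1} \star \dotsm \star v_{p_r}$ which are coefficients of monomials of support up to $r\d \le \d\ell_0$, not $< \ell_0$. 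Nothing in the $\star$-algebra structure forces the support of the pieces down to $\ell_0$; the reduction from $\d\ell_0$ to $\ell_0$ is exactly what must be proved, and your sketch assumes it. You also assume along the way that each single shifted tower $\tower_j(\X_i)'$ already has its coefficient space spanned (modulo its upper phases) by $(\le\d)$-support coefficients; this is not established anywhere and is itself a nontrivial instance of low-support concentration. Both assumed facts are where the paper instead invokes the $2^\ell$-nonzero-entries property of the transfer matrix, which is the load-bearing combinatorial lemma. To repair the proposal you would need either to prove the per-tower concentration claim and supply a genuine rank argument for the merge (and I do not see a non-circular way to do that without essentially reproving the transfer-matrix lemma), or to switch to the paper's direct approach of finding an invertible $N^* \times N^*$ block $\TM_1\nul^*$.
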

% The proof uses a series of lemmas, all proved in the appendix. Out of these, Lemma \ref{lem:invertible} (stated next) was directly invoked in the proof of Lemma \ref{lem:subcircuitl0}.

%We write the second part first.
%\vspace{-0.1cm}
\begin{lemma}[Subcircuit-$\ell_0$-concentration]
\label{lem:subcircuit}
Let $\Fx \in \H_j(\F)[\o{x}]$ be a $\d$-distance circuit.
Then, $F'(\o{x}) = F(\o{x} + \phi_{\d\ell_0}(\o{t}))$ 
%be its shifted version.
%Let $V_{F'}$ be the subspace spanned by the coefficients of the phase-$(<j)$ monomials of $\Fpx$.
%Then, $\Fpx$ 
has $\ell_0$-concentration modulo $V_{F'}$.
\emph{(Proof in Appendix \ref{app:dDist}.)}
\end{lemma}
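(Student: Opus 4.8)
The plan is to induct on the number of neighborhoods $\norm{\nbd_j(F')}$ in the partition $\P_j$ of $F'$. The base case $\norm{\nbd_j(F')} \le \ell_0$ is exactly Lemma~\ref{lem:subcircuitl0}, so assume $\norm{\nbd_j(F')} > \ell_0$. Write $F = \tower_j(\X_1) \star \tower_j(\X_2) \star \dotsm \star \tower_j(\X_r)$ as a star-product of towers over its distinct $\P_j$-neighborhoods $\X_1,\dots,\X_r$, with $r > \ell_0$. Split the neighborhoods into two nonempty groups, say the first $r' := \lceil r/2 \rceil$ and the remaining $r - r'$, and let $F_1 = \tower_j(\X_1 \cup \dots \cup \X_{r'})$ and $F_2 = \tower_j(\X_{r'+1} \cup \dots \cup \X_r)$, so that $F = F_1 \star F_2$ and (by Observation~\ref{obs:subCktHasDeltaDist}) both $F_1, F_2$ are $\d$-distance circuits over $\H_j(\F)$, each with strictly fewer neighborhoods than $F$. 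Their shifts $F_1' = F_1(\o{x}+\phi_{\d\ell_0}(\o{t}))$, $F_2' = F_2(\o{x}+\phi_{\d\ell_0}(\o{t}))$ then satisfy the inductive hypothesis: each has $\ell_0$-concentration modulo $V_{F_1'}$, resp. $V_{F_2'}$. Since $F_1$ and $F_2$ are on disjoint variable sets (towers over disjoint neighborhoods), Observation~\ref{obs:CoeffMulInTowers} gives, for any monomial $S \subseteq \variables(F)$ split as $S = S_1 \sqcup S_2$ with $S_i$ in the variables of $F_i$,
\[
\coeff_{F'}(x_S) = \coeff_{F_1'}(x_{S_1}) \star \coeff_{F_2'}(x_{S_2}).
\]

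Now I would substitute the inductive dependencies for $\coeff_{F_1'}(x_{S_1})$ and $\coeff_{F_2'}(x_{S_2})$ into this product. Each of $\coeff_{F_1'}(x_{S_1})$, $\coeff_{F_2'}(x_{S_2})$ equals a linear combination of $(<\ell_0)$-support coefficients of its own circuit plus a vector in $V_{F_i'}$ (the span of phase-$(<j)$ coefficients). Expanding the star product of these two combinations, each cross term is (a constant times) $\coeff_{F_1'}(x_{T_1}) \star \coeff_{F_2'}(x_{T_2}) = \coeff_{F'}(x_{T_1 \cup T_2})$, where either $\abs{T_1} + \abs{T_2}$ is small or at least one of $T_1, T_2$ is a phase-$(<j)$ monomial of its respective subcircuit. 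The terms where both $T_i$ are low-support yield a coefficient $\coeff_{F'}(x_{T_1 \cup T_2})$ of support $< 2\ell_0$; I need this to be $<\ell_0$, which is the one genuine subtlety — see below. The terms where some $T_i$ lies in $V_{F_i'}$ must be shown to land in $V_{F'}$: here one uses that a monomial which is phase-$(<j)$ in $F_i$ (i.e.~it already has $>\d$ support variables in a single neighborhood of some $\P_{j'}$, $j'<j$, of $F_i$) remains phase-$(<j)$ when viewed in $F$, because the neighborhoods of $F$ restricted to the variables of $F_i$ refine those of $F_i$ is false in general — rather, they are coarser, so a $\d$-violating neighborhood in $F_i$ is still $\d$-violating in $F$; thus $\coeff_{F_1'}(x_{T_1})\star\coeff_{F_2'}(x_{T_2})$ is a phase-$(<j)$ coefficient of $F'$ and lies in $V_{F'}$.

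The main obstacle, as flagged in the introduction (the ``doubling effect'' in Lemma~\ref{lem:rowComb}), is that naively combining two $\ell_0$-concentrated halves produces $2\ell_0$-support survivors rather than $\ell_0$-support ones. The fix is that $\ell_0 = \log_2(k+1)$ is chosen precisely so that this doubling is absorbed: the coefficient vectors live in a $(\le j) \le k$-dimensional space, and one argues — invoking the row-combination / transfer-matrix machinery of Lemma~\ref{lem:rowComb} applied within this recursion — that among the $(<2\ell_0)$-support survivors, the rank is already achieved by $(<\ell_0)$-support ones (a halving of the support bound corresponds to a squaring of the relevant dimension count, which the choice $2^{\ell_0} \ge k+1$ accommodates over the $\lceil \log_2 \rceil$ levels of recursion). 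Concretely I would phrase the induction so that at a node with $r$ neighborhoods the claimed support bound is $\ell_0 \cdot \lceil \log_2 r\rceil / \lceil\log_2 r\rceil$... more cleanly: carry the invariant ``$\lceil\log_2(\text{number of neighborhoods})\rceil$-fold concentration suffices because each doubling step is paid for by one unit of $\ell_0$,'' and check at the end that $\lceil \log_2 r \rceil \le \ell_0$ since $r \le k$. This bookkeeping, together with verifying the phase-$(<j)$ membership claims above, is the technical heart; everything else is the formal expansion of a star product of two linear combinations.
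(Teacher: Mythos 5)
Your induction-on-neighborhoods skeleton and use of Observation~\ref{obs:CoeffMulInTowers} match the paper's plan, and you have correctly identified the crux: naively combining two $\ell_0$-concentrated pieces only gives $2\ell_0$-concentration. But your proposed resolution of this ``doubling'' problem does not close, and this is a genuine gap.

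The arithmetic in your fix does not work out. Multiplying two pieces with support bounds $a$ and $b$ gives surviving monomials of support up to $a+b$; it is additive, not something that ``each doubling step pays for with one unit of $\ell_0$.'' Carrying an invariant like ``$\lceil \log_2 r\rceil$-fold concentration'' (or $\ell_0\cdot\lceil\log_2 r\rceil$) would therefore only give an $O(\ell_0 \log r)$ bound, not $\ell_0$. Moreover, your final sanity check ``$\lceil\log_2 r\rceil\le\ell_0$ since $r\le k$'' is unjustified: the number of $\P_j$-neighborhoods $r$ is at most $n$, not $k$ (what is bounded by $k$ is the dimension of the coefficient vectors, not the number of neighborhoods). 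So both the recursion invariant and the terminal bound are wrong as stated.

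The missing idea, which is what the paper does, is a \emph{re-towering} step that leans on Lemma~\ref{lem:subcircuitl0} a second time, and this in turn requires peeling off \emph{one} neighborhood rather than half of them. Write $E' = E'_1 \star E'_0$ with $E'_0$ a single-neighborhood tower and $E'_1$ the remaining $r$ neighborhoods. Applying the inductive hypothesis to $E'_1$ and multiplying by $\coeff_{E'_0}(x_{T_0})$ expresses $\coeff_{E'}(x_T)$ in the span of $\coeff_{E'}(x_{S\cup T_0})$ with $\abs{S}<\ell_0$, modulo $V_{E'}$. The key observation is then that such a monomial $S\cup T_0$ lives in at most $\ell_0$ neighborhoods: $S$ has $<\ell_0$ variables hence meets $<\ell_0$ neighborhoods, and $T_0$ adds exactly one more. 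Hence $\tower_j(S\cup T_0)$ is a sub-tower with $\le\ell_0$ neighborhoods, and you may apply Lemma~\ref{lem:subcircuitl0} to it directly to push $\coeff_{E'}(x_{S\cup T_0})$ down into the span of $(<\ell_0)$-support coefficients of $E'$ modulo $V_{E'}$. This is exactly why the $\ell_0$ bound does not blow up across the recursion. With your balanced split, the analogous monomial $S_1\cup S_2$ can sit in up to $2\ell_0-2$ neighborhoods, so the small-neighborhood lemma no longer applies and you are stuck — the peel-off-one choice of $E'_0$ is load-bearing, not a stylistic preference.

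One smaller point: your sentence about whether neighborhoods of $F$ restricted to $\variables(F_i)$ refine or coarsen those of $F_i$ is garbled; since $F_i$ is a tower over a subset of $F$'s $\P_j$-neighborhoods, the relevant claim (that a phase-$(<j)$ coefficient of a factor contributes to $V_{F'}$) follows as in the paper from Observation~\ref{obs:supSetPhase} and Observation~\ref{obs:CoeffMulInTowers}; you should state it that way rather than via an ad hoc refinement/coarsening argument.
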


%\vspace{-0.1cm}
\subsubsection{Proving Theorem \ref{thm:dDistHS}}

From Lemma~\ref{lem:j-1Impliesj}, we directly get Lemma \ref{lem:dDistlConc}:
\begin{lemma}\label{lem:dDistlConc}
If $D \in \H_k(\F)[\o{x}]$ is a $\d$-distance circuit, then $D'$ has $\ell$-concentration.
%, where $\ell_0 := \log( k + 1)$ and $\ell :=  \d + 1 + \log (k +1)$.
\end{lemma}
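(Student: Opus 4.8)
The plan is to derive Lemma~\ref{lem:dDistlConc} as an immediate consequence of the phase-induction packaged in Lemma~\ref{lem:j-1Impliesj}. The statement to prove is that for a $\d$-distance circuit $D \in \H_k(\F)[\o{x}]$, the shifted polynomial $D' = D(\o{x} + \phi_{\d\ell_0}(\o{t}))$ is $\ell$-concentrated, where $\ell = \d + 1 + \ell_0$ and $\ell_0 = \log_2(k+1)$. Recall that by Observation~\ref{obs:supSetPhase} and the surrounding discussion, every monomial $S \subseteq \n$ lies in exactly one phase $\phase(S) \in \{1, \dots, k\}$. So the coefficient vectors $\{\coeff_{D'}(x_S) \mid S \subseteq \n\}$ are partitioned into $k$ groups by phase, and it suffices to show that every coefficient vector is in the span of the $(<\ell)$-support coefficient vectors.

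First I would set up the induction on the phase index $j$ from $1$ to $k$. The induction hypothesis at step $j$ is exactly the hypothesis of Lemma~\ref{lem:j-1Impliesj}: all monomials of $D'$ of phase-$(<j)$ are $\ell$-concentrated, meaning each such coefficient vector lies in the span of the $(<\ell)$-support coefficient vectors of phase-$(\le j-1)$ monomials (hence of phase-$(\le k)$ monomials). The base case $j = 1$ is handled by the final sentence of Lemma~\ref{lem:j-1Impliesj}, which asserts $\ell$-concentration in phase-$1$ with no preconditions (phase-$0$ being empty, its monomials are vacuously concentrated). The inductive step is a direct invocation of Lemma~\ref{lem:j-1Impliesj}: given $\ell$-concentration of phase-$(<j)$ monomials, we obtain $\ell$-concentration of phase-$j$ monomials. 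Carrying this from $j=1$ to $j=k$, every phase-$j$ monomial for $1 \le j \le k$ is $\ell$-concentrated, so its coefficient vector is spanned by $(<\ell)$-support coefficient vectors. Since every monomial belongs to some phase $\le k$, this covers all coefficient vectors of $D'$, giving $\rk_\F\{\coeff_{D'}(x_S) \mid \abs{S} < \ell\} = \rk_\F\{\coeff_{D'}(x_S) \mid S \subseteq \n\}$, i.e.\ $\ell$-concentration of $D'$.

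There is essentially no obstacle at this level: the lemma is a bookkeeping corollary, and the phrase ``we directly get'' in the excerpt signals exactly this. All the genuine difficulty has been pushed down into Lemma~\ref{lem:j-1Impliesj} (and through it into Lemmas~\ref{ConcInMonomialsIncludingR}, \ref{lem:subcircuit}, \ref{lem:subcircuitl0} and the transfer-matrix analysis). The one point worth a sentence of care is to note that the $\ell$-concentration statements in the per-phase lemmas are phrased ``modulo phase-$(<i)$ coefficients'', so one should confirm that chaining the phases does not lose control: because a phase-$(<i)$ coefficient is, by the induction hypothesis already established at an earlier step, itself in the span of genuine $(<\ell)$-support coefficients, the ``modulo'' clauses telescope away and the final conclusion is unconditional $\ell$-concentration of $D'$. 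With that remark, the proof is complete by invoking Lemma~\ref{lem:j-1Impliesj} inductively over $j = 1, \dots, k$.
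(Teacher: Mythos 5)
Your proof is correct and follows the same route as the paper: the paper's one-line argument ``From Lemma~\ref{lem:j-1Impliesj}, we directly get Lemma~\ref{lem:dDistlConc}'' is exactly the phase-by-phase induction you spell out, using the base case (phase-$1$ unconditional) and inductive step packaged in Lemma~\ref{lem:j-1Impliesj}, together with the fact that every monomial has a unique phase in $\{1,\dots,k\}$. One small note: the ``modulo phase-$(<j)$'' phrasing actually appears in Lemmas~\ref{lem:subcircuit} and \ref{lem:subcircuitl0}, not in Lemma~\ref{lem:j-1Impliesj} itself --- the latter is stated as a plain conditional, and the subsection's local definition of a phase-$i$ monomial being $\ell$-concentrated (dependence on $(<\ell)$-support monomials of phase-$(\le i)$) already makes the chaining automatic with no telescoping needed --- but your concern and its resolution are substantively right.
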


We now prove the main theorem. 
If \lc exists, then a hitting-set exists.% cannot be far behind!
%If \lc exists, then a hitting-set cannot be far behind!
\begin{proof}[Proof of Theorem~\ref{thm:dDistHS}]
We can write $C(\o{x}) = \o{a}^T D(\o{x})$ for some $\o{a} \in \F^k$ and $\Dx \in \H_k(\F)[\o{x}]$.
From Lemma~\ref{lem:dDistlConc}, $D(\o{x} + \phi_{\delta \ell_0}(\o{t}))$ has $\ell$-concentration.
%where $\ell_0:= \log(k+1)$ and $\ell:= \d + 1 + \log(k+1)$. 
Now, the map $\phi_{\delta \ell_0}$ separates
all the monomials of support $\leq \delta \ell_0$. The number of such monomials is $n^{O(\delta \ell_0)}$.
Hence from Lemma~\ref{lem:kronecker} (Appendix \ref{app:dDist}), $\phi_{\delta \ell_0}$ can be generated by trying $N$-many monomial maps which have degree $\leq N \log N$, where $N := n^{O(\delta \ell_0)}$. 
Now, from Lemma~\ref{lem:finalHS} we directly get a $n^{O(\delta \log k )}$-time hitting-set. 
\end{proof}

%%%%%%%%%%%%%%%%%%%%%%%%%%%%%%%%%%%%%%%%%%%%%%%%%%%%%%%%%%
%% Section Base sets 
%%%%%%%%%%%%%%%%%%%%%%%%%%%%%%%%%%%%%%%%%%%%%%%%%%%%%%%%%%%%%%%%%%%%%

%\vspace{-0.25cm}
\section{Base sets with $\delta$ distance: Theorem~\ref{thm:baseSetsHS}}
\label{sec:baseSets}
In this section we further generalize the class 
of polynomials, for which we can give an efficient test, 
beyond low-distance. 
Basically, it is enough to have low-distance ``projections". 

\begin{definition}
A multilinear depth-$3$ circuit is said to have 
$m$-base-sets-$\delta$-distance
if there is a partition of the 
variable set $\o{x}$
into base sets $\{\lis{\o{x}}{,}{m}\}$ such that 
for any $i \in [m]$, 
restriction of $C$ on the $i$-{th} base set, 
i.e.\ $C|_{(\o{x}_j =\mathbf{0} \; \forall j \neq i)}$ has $\delta$-distance. 
\end{definition}

As discussed in the previous section, for a depth-$3$
circuit $C$ we can write $C(\o{x}) = \o{a}^T \cdot D(\o{x})$,
where $\o{a} \in \H_k(\F)$ and $D(\o{x}) \in \H_k(\F)[\o{x}]$.
We say a polynomial $D(\o{x}) \in \H_k(\F)[\o{x}]$ 
has $m$-base-sets-$\delta$-distance if the corresponding 
circuit $C$ has $m$-base-sets-$\delta$-distance.  
We will show that such a polynomial $\Dx$ will also have some
appropriately low-support concentration after an efficient shift (Lemma~\ref{lem:baseSetConc}, Appendix~\ref{app:baseSets}). 

From Lemma~\ref{lem:dDistlConc}, we know that if a polynomial 
$\Dx \in \H_k(\F)[\o{x}]$ has $\delta$-distance
then $D(\o{x}+\phi_{\delta \ell_0}(\o{t}))$ has $\ell$-concentration. 
%where $\ell_0 := \log(k+1)$ and $\ell := 2\delta + \ell_0 $. 
The basic idea for a polynomial $\Dx$ having
 $m$-base-sets-$\delta$-distance, is to use a different
shift variable for each base set. Hence, it is necessary that 
the base sets are known. Except this knowledge, the test is blackbox.
The basic argument is to view the polynomial $D(\o{x})$ as
a polynomial over the variable set $\o{x}_m$ and whose coefficients lie
in $\H_k(\F)[\o{x}_1,\o{x}_2, \dots, \o{x}_{m-1}]$. 
From this perspective, it has $\delta$-distance and hence we can achieve
low-support concentration. Further, we argue low-support concentration in
its coefficients, which themselves have $(m-1)$-base-sets-$\delta$-distance (Lemma~\ref{lem:recursive}). 
The proof is by induction on the number of base sets.

%As mentioned earlier, to achieve low-support concentration
 %we will use a different shift 
%variable for each base set. 
Let $\o{t}_i$ be the set of shift variables for $\o{x}_i$ for 
any $i \in [m]$ ($\abs{\o{t}_i} = \abs{\o{x}_i}$).
Let $I_i$ denote the set of indices corresponding to the variables in 
$\o{x}_i$, for all $ i \in [d]$.
We define our shifting map $\phim$ as follows:
$\forall i \in m$, $\phim \colon \o{t}_i \mapsto \{t_i^j\}_{j=0}^{\infty}$
such that for any two sets $S,T \subseteq I_{i} $ with
$\abs{S},\abs{T} \leq \delta \ell_0$, 
$\phim(t_S) \neq \phim(t_T)$ for all $i \in [m]$.
%$\phim(\o{t}_i)$ will denote the tuple of the images 
%of $\phim$ on $\o{t}_i$ which will consists of monomials
%in variable $t_i$.

\begin{proof}[Proof of Theorem~\ref{thm:baseSetsHS}]
Let $C(\o{x}) = \o{a}^T \cdot D(\o{x})$ for some $\o{a} \in \H_k(\F)$
and $\Dx \in \H_k(\F)[\o{x}]$.
Lemma~\ref{lem:baseSetConc} (Appendix~\ref{app:baseSets}) shows that $D(\o{x} + \phim(\o{t}))$
has $(m(\ell - 1)+1)$-concentration.
%, where $\ell = \log( k+1) + 2\delta $. 
%It means if $C \neq 0$ then at least one of its coefficients
%with support $\leq m(\ell - 1)$ is nonzero. 
%There are $O(n^{m(\ell - 1)})$ such coefficients. 
Hence, from Lemma~\ref{lem:hsFromlConc}, $C(\o{x} + \phim(\o{t}))$ has 
$n^{O(m \ell )}$-time hitting-set. 
Moreover, each evaluation of $C(\o{x} + \phim(\o{t}))$
 is a polynomial in $\{t_1, t_2, \dots, t_m\}$. 
Let us say the individual degree bound on this polynomial is $d$.
Then the time taken to compute this polynomial would be 
proportional to the number of monomials in it, i.e.\ $(d+1)^m$.

For the degree bound we must look at the map $\phim$.
The map $\phim$ separates all $\delta \ell_0 $ support monomials.
There are $n^{O(\delta \ell_0)}$ such monomials. 
From Lemma~\ref{lem:kronecker} (Appendix~\ref{app:dDist}), we need to 
try $N$ maps each with highest degree $N \log N$ to
get the desired map,
%where $N = n^{2\delta \ell_0 +1}$.
where $N = n^{O(\delta \ell_0)}$.
Hence, the total complexity is $n^{O(m\ell)}N^{O(m)} = n^{O(m \delta \log k)}$.
\end{proof}

%Now, we remark that when $m = O(\log n)$ and $\delta = O(\log n)$,
%the time complexity is $2^{O(\log^4 n)}$.

%**********************************************************
%section - ROABP
%***********************************************************

%\vspace{-0.25cm}
\section{Sparse-Invertible Width-$w$ ROABP: Theorem~\ref{thm:ROABPHS}}
\label{sec:ROABP}
%Arithmetic Branching Program (ABP) is another 
%interesting model of computing polynomials. 
An ABP is a directed graph with $d+1$ layers of vertices
$\{V_0,V_1, \dots, V_{d}\}$ such that the edges are 
only going from $V_{i-1}$ to $V_i$ for any $i \in [d]$. 
As a convention, $V_0$ and $V_d$ have only one node each,
 let the nodes be $v_0$ and $v_d$ respectively.
A width-$w$ ABP has $\abs{V_i} \leq w$ for all $i \in [d]$.
Let the set of nodes in $V_i$ be $\{v_{i,j} \mid j \in [w]\}$.
All the edges in the graph have weights from $\F[\o{x}]$,
for some field $\F$. For an edge $e$, let us denote
its weight by $w(e)$.
For a path $p$ from $v_0$ to $v_d$,
its weight $w(p)$ is defined to be the product of weights of all the edges
in it, i.e.\ $\prod_{e \in p}w(e)$. 
Consider the polynomial $C(\o{x}) = \sum_{p \in \paths(v_0,v_d)} w(p)$ 
which is the sum of the weights of all the paths from $v_0$ to $v_d$.
This polynomial $C(\o{x})$ is said to be computed by the ABP.

It is easy to see that this polynomial is the same as
$D_0^T (\prod_{i=1}^{d-2} D_i ) D_{d-1} $,
where $D_0,D_{d-1} \in (\F[\o{x}])^w$ 
and $D_i$ for $1 \leq i \leq d-2$
is a $w \times w$ matrix such that 
\begin{eqnarray*}
D_0(\ell) &=& w(v_{0},v_{1,\ell}) \text{ for } 1 \leq \ell \leq w\\
D_i(k, \ell) &=& w(v_{i,k},v_{i+1,\ell}) \text{ for } 1 \leq \ell,k \leq w \text{ and } 1 \leq i \leq d-2\\
D_{d-1}(k) &=& w(v_{d-1,k},v_{d}) \text{ for } 1 \leq k \leq w
\end{eqnarray*}

An ABP is called a read once ABP (ROABP)
if the edge weights in the different layers 
are polynomials in disjoint sets of variables
(it is actually called oblivious ROABP, but we drop the word 
oblivious from now on). 
More formally, 
there exists an unknown partition of the variable set $\o{x}$
into $d$ sets $\{\lis{\o{x}}{,}{d}\}$
such that
in the corresponding matrix product 
$D_0 (\prod_{i=1}^{d-2} D_i ) D_{d-1} $,
the entries in $D_{i-1}$ are polynomials in variables ${\o{x}}_{i}$,
for all $i \in [d]$.
It is read once in the sense that in the corresponding 
ABP, any particular variable contributes to at most one edge 
on any path.

We work with the matrix representation of ABP. 
We will show a hitting-set for an ROABP 
$D_0 (\prod_{i=1}^{d} D_i ) D_{d+1} $ with 
$D_i$ being {\em invertible} matrices for all $i \in [d]$ and
all the matrices being {\em sparse} polynomials.
Hence, we name this model \emph{sparse-invertible-factor ROABP}.  
Like in the previous sections, we find a hitting-set by showing
a low-support concentration. 

For a polynomial $D$, let its sparsity  
$\sp(D)$ be the number of monomials in $D$
with nonzero coefficients and
let $\mu(D)$ be the maximum support 
of any monomial in $D$.
%, i.e.\ $\mu(D) := \displaystyle\max_{\o{x}^e \in \Sp(D)} \supp(e)$.

\begin{reptheorem}{thm:ROABPHS}
\label{thm:ROABPHS-rep}
Let $\o{x} = \o{x}_0 \sqcup \dotsm \sqcup \o{x}_{d+1}$, with $\abs{\o{x}} = n$.
Let $C(\o{x}) = D_0^T D D_{d+1} \in \F[\o{x}]$ be a polynomial
with $D(\o{x}) = \prod_{i=1}^{d} D_i(\o{x}_i)$, where
$D_0 \in \F^w[\o{x}_0]$ and $D_{d+1} \in \F^w[\o{x}_{d+1}]$
and for all $i \in [d]$, 
$D_i \in \F^{w \times w}[\o{x}_i]$ is an invertible matrix.
For all $i \in \{0, 1, \dots, d+1 \}$, $D_i$ has degree bounded
by $\delta$, $\sp(D_i) \leq s$ and $\mu(D_i) \leq \mu$.
Let $\ell := 1 + 2 \min\{ \ceil{\log (w^2 \cdot s)}, \mu \}$.
%and $M :=  n \log \delta(d s^{2w} + (n \delta)^{2 \ell})$.
Then there is a hitting-set of size 
%$O(M^2 \log M n^{\ell (w^2+2)})$ for $g(\o{x})$.
$\poly((n \delta s)^{\ell w^2})$ for $C(\o{x})$.
\emph{(For the proof, see Appendix~\ref{app:ROABP}.)}
\end{reptheorem}

\begin{remark}
If $\mu=1$, i.e.\ each $D_i$ is univariate or linear, then
we get poly-time for constant $w$.
\end{remark}
\noindent {\bf Proof Idea-} Here again, we find a hitting-set 
by proving low-support concentration, but with a different approach.
As all the matrices in the matrix product $D(\o{x}) = \prod_{i=1}^{d} D_i(\o{x}_i)$
are over disjoint sets of variables,
any coefficient in the polynomial $D(\o{x})$ can be uniquely written as a product of 
$d$ factors, each coming from one $D_i$. 
We start with the assumption that the constant term of each polynomial $D_i$, 
denoted by $D_{i {\bf 0}}$, is an invertible matrix.
Using this we define a notion of {\em parent} and {\em child}
between all the coefficients: If a coefficient can be obtained
from another coefficient by replacing one of its constant factors $D_{i {\bf 0}}$
with another term (with non-trivial support) from $D_i$, 
then former is called a parent
of the latter. 
Observe that if we want to do this replacement by a multiplication of some matrix, 
then $D_{i {\bf 0}}$ should be invertible.
Moreover, all the factors on its right side (or its left side) 
also need to be
% invertible , 
%hence they should also be 
constant terms in their respective matrices (this is because of non-commutativity).
For a coefficient, the set of matrices $D_i$ which contribute a non-trivial
factor to it, is said to form the {\em block-support} of the coefficient.

Our next step is to show that if a coefficient linearly depends on its descendants 
then the dependence can be lifted to its parent (by dividing and multiplying appropriate factors) i.e.\ its parent also linearly depends 
on its descendants. 
As the dimension of the matrix algebra is constant, 
if we take an appropriately large (constant) child-parent chain,
there will be a linear dependence among the coefficients in the chain.
As the dependencies lift to the parent, they can be lifted all the way up. 
By an inductive argument it follows that every coefficient depends on 
the coefficients with low-block-support. 
Now, this can be translated to low-support concentration in $D$,
if a low-support concentration is assumed in each $D_i$. 

To achieve low-support concentration in each $D_i$,
we use an appropriate shift. The sparsity of $D_i$ is used crucially 
in this step. 
To make $D_{i {\bf 0}}$ invertible, again an appropriate shift is used. 
Note that $D_{i {\bf 0}}$ can be made invertible by a shift only when 
$D_i$ itself is invertible, hence the invertible-factor assumption.

%\vspace{-0.25cm}
\section{Discussion}
%Proving low-support concentration for general 
%depth-$3$ multilinear circuits remains an open question. 
%As there are known lower bounds for this class of circuits \cite{RY09},
%we can hope that getting a small hitting-set should also
%be possible.
We conjecture that for depth-$3$ multilinear circuits,
low-support concentration can be achieved by an efficient shift.
A first question here is to remove the knowledge of the base sets in 
Theorem~\ref{thm:baseSetsHS}.

In the case of constant width ROABP, 
we could show constant-support concentration, but
only after assuming that the factor matrices are invertible. 
It seems that the invertibility assumption restricts the computing power of ABP
significantly.
It is desirable to have low-support concentration without the
assumption of invertibility. 
%Also, the dependence on the dimension $w$ is exponential, one
%should be able to reduce that. 

\section{Acknowledgements}
We thank Chandan Saha for suggestions to improve this paper. Several useful ideas about $\d$-distance circuits and base sets came up during discussions with him.

\bibliographystyle{alpha}
\bibliography{deltaDistance}

%%%%%%%%%%%%%%%%%%%%%%%%%%%%%%%%%%%%%%%%%%%%%%%%%%%%%%%%%%%%%%%%%%%%%%%%%%%
%Appendix

%%%%%%%%%%%%%%%%%%%%%%%%%%%%%%%%%%%%%%%%%%%%%%%%%%%%%%%%%%%%%%%%%%%%%%%%
\appendix

\section{Complete proofs of Section \ref{sec:dDist}}
\label{app:dDist}

\begin{replemma}{lem:hsFromlConc}
If $D(\overline{x})$ is $\ell$-concentrated, then there is a $n^{O(\ell)}$-time hitting-set for $C(\overline{x})$.
\end{replemma}
\begin{proof}%[Proof of Lemma~\ref{lem:hsFromlConc}]
\begin{claim}\label{c}
The \p% over the base field $ \F,
 $C(\overline{x}) = 0$ iff all of its $ (<\ell) $-support monomials have coefficient $ =0 $.
\end{claim}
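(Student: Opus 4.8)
The claim is an ``if and only if,'' so I would prove the two directions separately. The forward direction is immediate: if $C(\o{x})=0$ identically then every coefficient of $C$ is $0$, in particular the coefficient of every monomial of support $<\ell$.

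For the reverse direction I would use the $\ell$-concentration of $\Dx$. Writing $C(\o{x})=\o{a}^T D(\o{x})$ with $\o{a}\in\F^k$ and $\Dx\in\H_k(\F)[\o{x}]$, the coefficient of $x_S$ in $C$ is $\o{a}^T\coeff_D(x_S)$ for every $S\subseteq[n]$. Suppose all of these vanish for $\abs{S}<\ell$; then, by linearity, $\o{a}^T v=0$ for every $v$ in $W_{<\ell}:=\Span_\F\{\coeff_D(x_S)\mid S\subseteq[n],\ \abs{S}<\ell\}$. Since $\Dx$ is $\ell$-concentrated, $\dim_\F W_{<\ell}=\dim_\F W$, where $W:=\Span_\F\{\coeff_D(x_S)\mid S\subseteq[n]\}$; combined with the trivial inclusion $W_{<\ell}\subseteq W$ this forces $W_{<\ell}=W$. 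Hence $\o{a}^T\coeff_D(x_T)=0$ for every $T\subseteq[n]$, so every coefficient of $C$ vanishes and $C(\o{x})=0$.

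The step I would single out is the elementary linear-algebra observation that two nested $\F$-subspaces of the same finite dimension must coincide; this is the only place the $\ell$-concentration hypothesis enters, and there is no genuine difficulty beyond it. Once the claim is established, I would finish the lemma by noting that there are only $\sum_{i<\ell}\binom{n}{i}=n^{O(\ell)}$ monomials of support $<\ell$, that the coefficient of each such $x_S$ in $C$ can be extracted by setting the variables outside $S$ to $0$ and interpolating the resulting multilinear polynomial in the fewer than $\ell$ surviving variables, and that the union of the evaluation points so used is a hitting-set of size $n^{O(\ell)}$ --- correct precisely because of the claim.
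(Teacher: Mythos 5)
Your proof is correct and takes essentially the same route as the paper: both directions reduce, via $C=\o{a}^T D$, to the fact that $\ell$-concentration forces every $\coeff_D(x_T)$ to lie in the span of the low-support coefficients. The only cosmetic difference is that you spell out the ``nested subspaces of equal dimension coincide'' step explicitly, whereas the paper directly writes each $u_S$ as a linear combination of the $u_R$ with $\abs{R}<\ell$ and pushes the combination through $\o{a}^T$; your final remark about interpolating $C$ restricted to $x_j=0$ for $j\notin S$ is a mild variant of the paper's inclusion--exclusion extraction, and both give the same $n^{O(\ell)}$ bound.
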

This claim gives an $n^{O(\ell)}$-time hitting-set all $ C(\overline{x}) $: We can isolate each of the %$ \choose{[n]}{(<\ell)} $
$ (<\ell) $-support monomials and check whether their coefficient is $ 0 $. How do we isolate the $ (<\ell) $-support monomials? For each subset $S$ of cardinality $< \ell$, set the variables in the set $S$ to $1$ and the other variables to $0$. This isolates $\sum_{T \subseteq S}\coeff_C(x_T) = \coeff_C(x_S) + \sum_{T \subsetneq S}\coeff_C(x_T) $. If it was already known that the coefficient of the smaller subsets is $0$, then $\sum_{T \subseteq S}\coeff_C(x_T) = 0 \implies \coeff_C(x_S) = 0$. Thus, starting with coefficients of cardinality $0$, we test the coefficient of each set for nonzeroness. The hitting-set thus consists of substituting $\o{x} = (\lis{b}{,}{n}), b_i \in \{0, 1\}$ with $< \ell$ many $1$s. There are $\sum_{i=0}^{\ell -1} {n \choose i} = n^{O(\ell)}$ such tuples.

\begin{proof}[Proof of Claim \ref{c}]
The forward implication holds trivially. To prove the reverse implication, we express the coefficients of the polynomial $ C(\overline{x}) $ as a linear combination of its low-support coefficients.
\begin{eqnarray*}
C(\overline{x}) &=& \overline{a}^T \cdot \Dx\\
				&=& \overline{a}^T \cdot \left(\sum_{S\subseteq [n]} u_S x_S\right), \mbox{ where } u_S := \coeff_D(x_S)\\
				&=& \sum_{S \subseteq [n]} \overline{a}^T  u_S x_S\\
\end{eqnarray*}

Since $ \Dx $ is $ \ell $-concentrated, the coefficients of each of its monomials can be expressed as a linear combination of coefficients of low-support monomials, i.e.\ $ \forall S \subseteq [n], u_S = \sum_{R \subseteq [n], \abs{R} < \ell} \alpha_{R,S} u_R$. Hence,
\[
C(\overline{x}) = \sum_{S \subseteq [n]} \left(\sum_{R \subseteq [n], \abs{R} < \ell} \alpha_{R,S}\left(\overline{a}^T  u_R\right) \right) x_S.
\]

Let $ {\beta}_R := \coeff_C(x_R) = \o{a}^T  u_R $. Hence,
%Now, the coefficient $ C_R $ of the monomial $ x_R $ in the \p $C(\overline{x})$ is $\overline{a}^T  D_R$. Hence,

\[
C(\o{x}) = \sum_{S \subseteq [n]} \left(\sum_{R \subseteq [n], \abs{R} < \ell} \alpha_{R,S} {\beta}_R \right) x_S.
\]

This proves the claim.

\end{proof} 
\end{proof}

\begin{lemma}[Efficient Kronecker map \cite{Kro1882,Agr05}]%2
\label{lem:kronecker}
Consider a set $A$ of $n$-variate monomials with maximum individual
degree $d$ and $\abs{A} = a$.  
There exists a set of $N$-many monomial maps 
$\phi \colon \o{t} \to \{t^i\}_{i=1}^{N \log N}$, such that at 
least one of them separates the monomials in $A$, where $N := nda^2 \log(d+1)$.

%i.e.\ $\forall \o{t}^{e_1}, \o{t}^{e_2} \in A$, $\phi(\o{t}^{e_1}), \phi(\o{t}^{e_2})$. 
\end{lemma}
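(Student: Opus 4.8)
The plan is the standard two-stage argument: an exact Kronecker substitution to reduce to one variable, followed by a modular reduction to bring the exponents down, with a counting argument (number of prime divisors versus number of monomial pairs) to pin down the good modulus inside a small family. First I would recast ``separation'' linear-algebraically. A monomial of individual degree $\le d$ in $\overline{t} = (t_1,\dots,t_n)$ is an exponent vector $e \in \{0,1,\dots,d\}^n$; a monomial map $\phi\colon t_i \mapsto t^{w_i}$ with weight vector $w \in \N^n$ sends it to $t^{\langle w,e\rangle}$; and $\phi$ separates two monomials $e \ne e'$ of $A$ exactly when $\langle w, e-e'\rangle \ne 0$. So it suffices to exhibit $\le N$ weight vectors, each with entries in $\{1,\dots,N\log N\}$, one of which is nonzero on every one of the at most $\binom{a}{2} < a^2/2$ nonzero difference vectors $v := e-e' \in \{-d,\dots,d\}^n$.

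For the seed I would take $\psi\colon t_i \mapsto y^{(d+1)^{i-1}}$. Since $0 \le e_i \le d$, the value $\langle (1,d+1,\dots,(d+1)^{n-1}), e\rangle$ is the base-$(d+1)$ expansion of a unique integer, so $\psi$ is injective on the monomials of $A$; equivalently $\langle (1,\dots,(d+1)^{n-1}), v\rangle$ is, for every such nonzero $v$, a nonzero integer of absolute value $< (d+1)^n$. Now for each prime $q$ with $q \nmid (d+1)$ define $\phi_q\colon t_i \mapsto t^{(d+1)^{i-1} \bmod q}$, with residues taken in $\{1,\dots,q-1\}$ (they are never $0$ because $q \nmid d+1$), so every weight of $\phi_q$ lies in $\{1,\dots,q-1\}$. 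Reducing the seed modulo $q$, one checks that $\phi_q$ fails to separate a pair $(e,e')$ of $A$ iff $q$ divides the nonzero integer $\langle(1,\dots,(d+1)^{n-1}),e-e'\rangle$, which has at most $\log_2((d+1)^n) = n\log_2(d+1)$ distinct prime factors.

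I would then union-bound. Over all $< a^2/2$ pairs, at most $\tfrac12 a^2 n\log_2(d+1)$ primes are ``bad'' (the corresponding $\phi_q$ fails for some pair), and at most $\log_2(d+1)$ further primes divide $d+1$. Take $N := nda^2\log(d+1)$. By a Chebyshev-type bound the $N$-th prime $p_N$ satisfies $p_N < N\log N$ (for $n,d,a$ above a small absolute constant; the finitely many remaining cases are handled by inspection), so the first $N$ primes all lie in $\{2,\dots,N\log N\}$; among them, more than $\tfrac12 a^2 n\log_2(d+1) + \log_2(d+1)$ survive the two exclusions (since $N$ comfortably exceeds that quantity), so at least one prime $q \le p_N \le N\log N$ is good and does not divide $d+1$. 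Its map $\phi_q$ separates all of $A$ and takes values in $\{t^i\}_{i=1}^{N\log N}$, and the family $\{\phi_q : q \text{ among the first } N \text{ primes}\}$ has size $\le N$, which is the lemma (this is also how one generates the maps: enumerate the first $N$ primes).

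The only point requiring care is the mild self-reference in the choice of $N$: the allowed $t$-degree and the size of the family are governed by the \emph{same} quantity $N\log N$, so one must check that a single generous choice of $N$ simultaneously keeps all weights $\le N\log N$ and leaves strictly more than ``(bad primes) $+$ (divisors of $d+1$)'' many primes among the first $N$. Both follow from the crude estimates above together with an effective lower bound on the prime-counting function ($p_m \ll m\log m$ suffices); no sharp number theory is needed, and the slack in the stated $N = nda^2\log(d+1)$ is there precisely to absorb the constants.
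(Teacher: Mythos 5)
Your proof is correct and takes essentially the paper's route: the Kronecker seed $t_i \mapsto t^{(d+1)^{i-1}}$, reduction of exponents modulo each of the first $N$ primes, and a counting argument showing that at least one prime fails to divide any of the nonzero pairwise differences. Your bookkeeping (bounding prime factors per difference, union-bounding over pairs, and explicitly excluding primes dividing $d+1$ so every weight lands in $\{1,\dots,q-1\}$ and the map's image stays in $\{t^i\}_{i\ge 1}$) is an equivalent, slightly tidier version of the paper's bound $\prod_{p\in\mathcal{P}} p > \prod_{i\ne j}\lvert d_i-d_j\rvert$, which silently elides the $q\mid(d+1)$ edge case and uses the cruder estimate $d_i < (d+1)^{nd}$ where your $(d+1)^n$ suffices.
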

\begin{proof}
Since we want to separate the $n$-variate monomials with maximum individual degree $d$, we use the naive Kronecker map $\phi' \colon t_i \mapsto t^{(d +1)^{i-1}}$ for all $i \in [n]$. It can easily be seen that there is a 1-1 correspondence between the original monomials and the substituted monomials. But the degree of the new monomials can be very high.

Hence, we take the substitutions $\bmod(t^p-1)$ for many small primes $p$. 
In other words, the degrees are taken modulo $p$. Each prime $p$ leads to a different substitution of the variables $t_i$s. That is our set of candidate maps. We need to bound the number $N$ of primes that ensure that at least one substitution separates the monomials in $A$. We choose the smallest $N$ primes, say $\mathcal{P}$ is the set. 
By Prime Number Theorem, the highest value in the set $\mathcal{P}$ is $N \log N$.

To bound the number $N$ of primes: We want a $p$ in the set $\mathcal{P}$ such that 
$\forall i \ne j, \; d_i - d_j \not\equiv 0 \pmod{p}$, where $d_i$s are the degrees of $t$ in the $a$ monomials after the substitution $\phi'$. I.e.\, we want a $p$ such that $p \nmid \prod_{i \ne j} \left(d_i - d_j\right)$. 

There are $N$ such primes. Hence we want that $\prod_{p \in \mathcal{P}} p \nmid \prod_{i \ne j} \left(d_i - d_j\right)$. This can be ensured by setting $\prod_{p \in \mathcal{P}} p > \prod_{i \ne j} \left(d_i - d_j\right)$.
There are $(< a^2)$ such monomial pairs and each $d_i < (d+1)^{nd}$. Also, $\prod_{p \in \mathcal{P}} p > 2^N$. Hence, $N = nda^2 \log(d+1)$ suffices.
\end{proof}

\begin{replemma}{lem:finalHS}
If for a polynomial $ D(\o{x}) \in \H_k(\F)[\o{x}]$, there exists a set of $f(n)$-many maps from 
$\o{t}$ to $\{t^i\}_{i=1}^{g(n)}$ such that for at least one of the maps $\phi$, the shifted polynomial
$D(\o{x}+\phi(\o{t}))$ has $\ell$-concentration, then $C(\o{x}) = \o{a}^T D(\o{x})$, for any $\o{a} \in \F^k$, has an $n^{O(\ell)} f(n) g(n)$-time hitting-set.
\end{replemma}
\begin{proof}%[Proof of Lemma~\ref{lem:finalHS}]
We need to try $f(n)$-many maps, such that one of them will ensure
$\ell$-concentration in $D(\o{x}+\phi(\o{t}))$. 
Lemma~\ref{lem:hsFromlConc} shows that $\ell$-concentration implies 
a hitting-set of size $n^{O(\ell)}$ for $C(\o{x}+\phi(\o{t}))$. 
But each evaluation of $C(\o{x} + \phi(\o{t}))$ 
will be polynomial in $t$ with degree at most $n g(n)$.
This multiplies a factor of $n g(n)$ to the running time. 
So, total time complexity is $n^{O(\ell)} f(n) g(n)$.
\end{proof}

%In Lemmas \ref{lem:subcircuit} and \ref{lem:subcircuitl0}, $\monomials(\tower_j(X))$ denotes the set of all possible \ml monomials over the variable set $\variables(\tower_j(X))$.

\begin{replemma}{lem:subcircuit}
Let $\Fx \in \H_j(\F)[\o{x}]$ be a $\d$-distance circuit.
Let $F'(\o{x}) = F(\o{x} + \phi_{\d\ell_0}(\o{t}))$ be
its shifted version.
Let $V_{F'}$ be the subspace spanned by the phase-$(<j)$ coefficients of $\Fpx$.
Then, $\Fpx$ has $\ell_0$-concentration modulo $V_{F'}$.
I.e.\ $\forall $ monomial $T$ in $F'$,
$$ \coeff_{F'}(x_T)
\in \Span\{\coeff_{F'}(x_S) \mid S \subseteq \monomials(F'), \; \abs{S} < \ell_0\} 
+ V_{F'}, 
$$where, $\monomials(F')$ denotes the set of all possible \ml monomials that appear in $F'$.
\end{replemma}
\begin{proof}
The proof is by induction on the number of neighborhoods in
$F'$, $\norm{\nbd_j(F')}$.
We prove the theorem for intermediate subcircuits $E'$ of $F'$.
These subcircuits $E'$ have progressively more neighborhoods in the partition $\P_j$.

\emph{Base Case:} Any shifted $\d$-distance circuit,
$E'$ over $j$ coordinates, with $\norm{\nbd_j(E')} \le \ell_0$, 
has $\ell_0$-concentration modulo $V_{E'}$,
from Lemma~\ref{lem:subcircuitl0}.

\emph{Induction Hypothesis:} Any shifted $\d$-distance circuit, $E'$ over $j$ coordinates,
such that $\norm{\nbd_j(E')} \leq r$ ($r \geq \ell_0$), 
has $\ell_0$-concentration modulo $V_{E'}$.

\emph{Induction Step:} Let
$\norm{\nbd_j(E')} = r+1$.
% Let $V'$ denote the phase-$(<j)$ coefficients
%in $E'$. 
Let $E'_0$ be the subcircuit corresponding to the tower over
\emph{one} neighborhood in $\nbd_j(E')$.
Let $E'_1$
be the subcircuit corresponding to the tower over
the remaining neighborhoods in $\nbd_j(E')$, i.e.\ $\norm{\nbd_j(E'_1)} = r$.
Hence, $E' = E'_1 \star E'_0$.

We will now prove \lzc of an arbitrary monomial $T$ in $E'$.
Since $E'_0$ and $E'_1$ are disjoint towers, the variables in their supports are disjoint.
Let $T_0 = \variables(E'_0) \cap T$ and $T_1 = \variables(E'_1) \cap T$.
Hence, $T = T_0 \cup T_1$.
By the induction hypothesis,
$E'_1$ is $\ell_0$-concentrated
modulo $V_{E'_1}$.
Hence, 
$$\coeff_{E'_1}(x_{T_1}) \in  \Span\{\coeff_{E'_1}(x_S) \mid S \in \monomials(E'_1), \; \abs{S} < \ell_0\} + V_{E'_1},$$
where, $\monomials(E'_1)$ denotes the set of monomials in $E'_1$.
We know that $\coeff_{E'}(x_T) = \coeff_{E'_1}(x_{T_1}) \star \coeff_{E'_0}(x_{T_0})$
from Observation~\ref{obs:CoeffMulInTowers}.
Multiplying the above equation with $\coeff_{E'_0}(x_{T_0})$, we get,
$$\coeff_{E'}(x_{T}) \in  \Span\{\coeff_{E'_1}(x_S) \star \coeff_{E'_0}(x_{T_0}) \mid S \in \monomials(E'_1), \; \abs{S} < \ell_0\} + {V_{E'}}.$$
This holds because, by Observation \ref{obs:supSetPhase}, $\coeff_{E'_0}(x_{T_0}) \star V_{E'_1} \subseteq V_{E'}$.
% Hence, $\coeff_{E'_0}(x_{T_0}) \star V_{E'_1}  \subseteq V_{E'}$.
By Observation \ref{obs:CoeffMulInTowers},
the above equation is equivalent to
\begin{equation}
\label{eq:TinSpan}
\coeff_{E'}(x_T) \in  \Span\{\coeff_{E'}(x_{S \cup T_0}) \mid S \in \monomials(E'_1), \; \abs{S} < \ell_0\} + {V_{E'}}. \end{equation}
We will now prove that $\coeff_{E'}(x_{S \cup T_0})$ is \lzcd (modulo ${V_{E'}}$) for all monomials $S$ in $E'_1$. This will prove that $\coeff_{E'}(x_T)$ is \lzcd (modulo ${V_{E'}}$).

%Now for any $S \in \monomials(E'_1)$, 
%such that 
Let $\norm{\nbd_j(S)}$ be the number of neighborhoods in $\nbd_j(S)$.
Since $\abs{S} < \ell_0$, we know $\norm{\nbd_j(S)} < \ell_0$.
Hence, $\norm{\nbd_j(S \cup T_0)} \leq \ell_0$. 
Let the subcircuit corresponding to $\tower_j(S \cup T_0)$ be $E'_2$.
Let the subcircuit $E'_3$ be such that $E' = E'_2  \star E'_3$.
From Observation \ref{obs:CoeffMulInTowers},
%$\coeff_{E'}(x_{S \cup T_0})$ can be written 
%as a product of its coefficient coming from its tower with 
%a constant term from outside its tower, i.e.\
\begin{equation*}
\coeff_{E'}(x_{S \cup T_0}) = \coeff_{E'_2}(x_{S \cup T_0}) \star \coeff_{E'_3}(x_{\varnothing}).
%\label{eq:coeffbreak}
\end{equation*}
As $\norm{\nbd_j(E'_2)} \leq \ell_0$, from Lemma~\ref{lem:subcircuitl0}, we have $\ell_0$-concentration in $E'_2$ modulo $V_{E'_2}$.
%its phase-$(<j)$ coefficients. So, 
$$\coeff_{E'_2}(x_{S \cup T_0}) \in  \Span\{\coeff_{E'_2}(x_R) \mid R \in \monomials(E'_2), \; \abs{R} < \ell_0\} + {V_{E'_2}}.$$
%where $V_{E'_2}$ is the span of phase-$(<j)$ coefficients in $E'_2$.
Multiplying throughout with $\coeff_{E'_3}(x_{\varnothing})$,
%Using Equation~(\ref{eq:coeffbreak}) with this we get, 
$$\coeff_{E'}(x_{S \cup T_0}) \in  \Span\{\coeff_{E'_2}(x_R) \star \coeff_{E'_3}(x_{\varnothing}) \mid R \in \monomials(E'_2), \; \abs{R} < \ell_0\} + {V_{E'}}.$$
%Here, we are again using that
This is because $\coeff_{E'_3}(x_{\varnothing}) \star V_{E'_2}  \subseteq V_{E'}$.
The equation is equivalent to
\begin{equation}
\label{eq:ST0}
\coeff_{E'}(x_{S \cup T_0}) \in \Span\{\coeff_{E'}(x_R) \mid R \in \monomials(E'), \; \abs{R} < \ell_0\} + {V_{E'}}.
\end{equation}
Using Equations (\ref{eq:TinSpan}) and (\ref{eq:ST0}) we get that $\coeff_{E'}(T)$ is in the span of coefficients in $E'$ with support $< \ell_0$ modulo $V_{E'}$. 
Hence, we get $\ell_0$-concentration of $E'$ modulo $V_{E'}$.
\end{proof}

%%%%%%%%%%%%%%%%%%%%%%%%%%%%%%%%%%%%%%%%%%%%%%%%%%%%%%%%%
%%%%%%%%%%%%%%%%%%%%%%%%%%%%%%%%%%%%%%%%%%%%%%%%%%%%%%%%%%%%%%%%%%%%%%
%%%%%%%%%%%%%%%%%%%%%%%%%%%%%%%%%%%%%%%%%%%%%%%%%%%%%%%%%%%%%%%%%%%%%%%%%%%%

\section{Proof of Lemma \ref{lem:subcircuitl0} (Small-Neighborhood $\ell_0-concentration$)}
\label{app:subcircuitl0}

We will now show $\ell_0$-concentration in a $\d$-distance circuit $E'$
modulo its phase-$(<j)$ coefficients,
where the number of neighborhoods in the partition $\P_j$ is $\leq \ell_0$.
The basic idea of the proof is to take dependencies among the coefficients
of $E$ and lift them to get new dependencies of $E'$.

\begin{replemma}{lem:subcircuitl0}
Let $\Ex \in \H_j(\F)[\o{x}]$ be a $\d$-distance circuit
with $\norm{\nbd_j(E')} \leq \ell_0$.
Let $E'(\o{x}) = E(\o{x} + \phi_{\d\ell_0}(\o{t}))$ be
its shifted version.
Let $V_{E'}$ be the subspace spanned by the phase-$(<j)$ coefficients of $\Epx$.
Then, $\Epx$ has $\ell_0$-concentration modulo $V_{E'}$.
I.e.\ $\forall $ monomial $T$ in $E'$,
$$ \coeff_{E'}(x_T)
\in \Span\{\coeff_{E'}(x_S) \mid S \subseteq \monomials(E'), \; \abs{S} < \ell_0\} 
+ V_{E'}.
$$
%where, $\monomials(E')$ denotes the set of all possible \ml monomials over the variable set $\variables(E')$.
%Let $\Epx$ be a subcircuit of $\Dpx$ corresponding to a tower 
%over a set of neighborhoods in partition $\P_j$
%such that $\norm{\nbd_j(E')} \leq \ell_0$. 
%Let $V'$ be the subspace spanned by the coefficients of the monomials in phase-$(\le j-1)$ in $\Epx$.
%Then, $E'$ has $\ell_0$-concentration modulo $V'$, 
%i.e.\ $\forall S \in \monomials(E')$
%$$ \coeff_{E'}(x_S)
%\in \Span\{\coeff_{E'}(x_T) \mid T \subseteq \monomials(E'), \; \abs{T} < \ell_0\} 
%+ {V'}. 
\end{replemma}
\begin{proof}
Let the matrices $U$ and $U' \in \F^{[j] \times 2^{\n}}$ 
be such that their columns represent the coefficient 
vectors in $E$ and $E'$ respectively.
Let us recall the relation between $U$ and $U'$ from Section~\ref{subsec:generalApproach},
\begin{equation}
\label{eq:transfer}
U = U' \TM, 
\end{equation}
where $\TM$ is the transfer matrix defined in Section~\ref{subsec:generalApproach}. 
Let $\Null(U)$ denote the nullspace of $U$.
Consider a vector $\alpha \in \Null(U)$, i.e.\
$U \alpha = 0$.
%Using relation between $U$ and $U'$ we get,
Using Equation (\ref{eq:transfer}), $U' \TM \alpha = 0$.
Thus, $\TM \alpha \in \Null(U')$.
%Let $\beta := \TM\alpha$.

Let us now study the dependencies we want for $U'$ to show 
the appropriate concentration. 
Let $\S_1$ be the set of monomials which have support $\geq \ell_0$ and
are in phase-$j$. 
Let $\S_0$ be the set of all other monomials. 
Let $U'_0 \in \F^{[j] \times \S_0}$ and $U'_1 \in \F^{[j] \times \S_1}$
be the submatrices of $U'$ consisting of coefficient vectors
of monomials in $\S_0$ and $\S_1$ respectively.
To prove the lemma,
we need to show that the columns of $U'_1$ are in the span of the columns of $U'_0$.
Let $\beta \in \F(t)^{2^{\n}}$ be a dependency among the columns of $U'$, i.e.\
$U' \beta = \mathbf{0}$.
We can break this equation into two parts as: 
$U'_1 \beta_1 = - U'_0 \beta_0$,
where $\beta_0$ and $\beta_1$ are the subvectors of $\beta$,
indexed by the monomials in
$\S_0$ and $\S_1$ respectively.
In other words, 
\begin{equation}
\label{eq:nullv}
U'_1 \beta_1 \equiv \mathbf{0} \pmod{U'_0},
\end{equation}
where, by ``mod $U'_0$" we mean modulo column-span$(U'_0)$.
Now, we want to show that $U'_1 \equiv \mathbf{0} \pmod{U'_0}$.

Let $\abs{\S_1}$, the number of columns in $U'_1$ be $N^*$.
Suppose there exists a set of $N^*$ many
null-vectors of $U'$.
Let $B \in \F^{2^{\n} \times [N^*]}$, be the matrix with these null-vectors as its columns.
Let $B_1$ be the submatrix of $B$ obtained by picking the rows
indexed by the monomials in
$\S_1$. Using Equation~(\ref{eq:nullv}),
\[
U'_1 B_1 \equiv \mathbf{0} \pmod{U'_0}.
\]
If $B_1$ is an invertible matrix then,
\begin{eqnarray*} 
U'_1 &\equiv& \mathbf{0} B_1^{-1} \pmod{U'_0} \\ 	
\implies U'_1 &\equiv& \mathbf{0} \pmod{U'_0}, 
\end{eqnarray*} 
which is the desired result.  

Now, to show the existence of such a set of null-vectors $B$,
we will take the null-vectors of $U$ and lift them. 
Here, we will use the notation of a matrix, to also denote the set of 
its column-vectors. 
As seen earlier, if $\nul^* \subseteq \Null(U)$, 
then $B := \TM \nul^* \subseteq \Null(U')$ 
(by `$\subseteq$' we mean column-wise). 
Let $\TM_1$ be the submatrix of $\TM$, obtained by picking the rows of $\TM$
indexed by the monomials in $\S_1$.
%Then $B_1 := \TM_1 \nul^* \subseteq \Null(U'_1 \pmod{U'_0})$.
We intend to show that there exists a set of $N^*$ many null-vectors
$\nul^* \subseteq \Null(U)$ such that $B_1 := \TM_1 \nul^*$ is an invertible matrix.  

For this, let us analyze the monomials in $\S_1$.
Recall that any phase-$j$ monomial has at most $\delta$
support from any neighborhood in partition $\P_j$.
As $\norm{\nbd_j(E)} \leq \ell_0$, any phase-$j$ monomial has
degree at most $\delta \ell_0$. 
Hence, monomials in $\S_1$ have degree between $\ell_0$ and
$\delta \ell_0$.
%All the monomials between degree $\ell_0$ and $\delta \ell_0$ 
%need not be in $\S_1$ but they can be. 
It suffices to show $\TM_1 \nul^*$ is an invertible matrix,
assuming $\TM_1$ has rows indexed by
all the monomials with degree between $\ell_0$ and $\delta \ell_0$
(If this $\TM_1 \nul^*$ is invertible, then,
for every subset $\widehat{\TM}_1$ of rows of $\TM_1$,
there exists a subset $\widehat{\nul}^*$ of null-vectors ${\nul}^*$
such that $\widehat{\TM}_1 \widehat{\nul}^*$ is invertible).
%(The other cases of $\S_1$ will be subsumed in this).
Now, we redefine $N^*$ to be the number of such monomials.

Recall that $\TM_1(S,T) \neq 0$ only when $T \subseteq S$.
Since the rows of $\TM_1$ are indexed by the monomials of degree at most 
$\delta \ell_0$, its columns indexed by the monomials of degree $>\delta \ell_0$
will be all zero. 
Hence, the rows of $\nul^*$ indexed by the monomials of degree $> \delta \ell_0$ 
can be ignored.
We thus redefine $\TM_1 $ and $\nul^*$ as their truncated versions.

So, we can as well assume that $\nul^* \subseteq \Null(U_{\delta \ell_0})$,
where $U_{\delta \ell_0}$ is the submatrix of $U$,
obtained by picking the columns indexed by
the monomials of degree $\leq \delta \ell_0$. 

Let $N$ be the number of all $n$-variate monomials of degree $\leq \delta \ell_0$.
Let $\nul := \Null(U_{\delta \ell_0})$.

Now, in Lemma~\ref{lem:invertible} we show that
for any vector-space $\nul \subseteq \F^N$
of dimension $\ge (N-k)$,
there exists
$\nul^* \subseteq \nul$ such that $\TM_1 \nul^*$ is invertible.
As $U_{\delta \ell_0}$ has rank at most $j$,
$\nul$ has dimension at least $(N-j) \geq (N-k)$ as required in Lemma \ref{lem:invertible}.

This completes the proof.
%The crucial part in the proof is that while $\TM_1$ is a matrix over
%$\F(t)$, $\nul^*$ is a matrix over constants as $U$ itself is a matrix over constants. 
\end{proof}

To show that there exists such a set of vectors $\nul^*$, 
now we will look at some properties of the transfer matrix. 
This is a simplified and improved proof of Theorem 13 in \cite{ASS13}.

\begin{lemma}[Transfer matrix theorem]\label{lem:rowComb}%9
Consider a matrix $ M_{n, \ell} \; (\ell \le n)$, with rows indexed by all possible $ n $-variate multilinear monomials with support size $ \ge \ell$
and columns indexed by all possible $ n $-variate multilinear monomials. Let $M_{n, \ell}(S, T) = 1$ if $T \subseteq S$ and $0 $ otherwise. Then any vector formed by a nonzero linear combination of the rows over any field $\F$ has at least $2^{\ell}$ nonzero entries.
\end{lemma}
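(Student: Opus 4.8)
The plan is to induct on $n$, peeling off one variable, say $x_n$, at a time. Split the columns (and rows) of $M_{n,\ell}$ according to whether the indexing monomial contains $x_n$. Writing monomials as subsets of $[n]$, a row $S$ with $n \in S$ ``sees'' both columns $T$ with $n \notin T$ and columns $T$ with $n \in T$ (since $T \subseteq S$ is possible in both cases), whereas a row $S$ with $n \notin S$ sees only columns $T$ with $n \notin T$. After reordering rows and columns by this dichotomy, $M_{n,\ell}$ becomes block lower-triangular of the shape
\[
M_{n,\ell} \;=\; \begin{pmat}({|})
M_{n-1,\ell} & \mathbf{0} \cr\-
M_{n-1,\ell-1} & M_{n-1,\ell-1} \cr
\end{pmat},
\]
where the top row-block is indexed by $\ell$-or-larger-support monomials avoiding $x_n$ (hence an $M_{n-1,\ell}$ in the first column-block and zero in the second), and the bottom row-block is indexed by monomials containing $x_n$; deleting $x_n$ from such a monomial drops its support by one, so both its column-blocks are copies of $M_{n-1,\ell-1}$ (the left block hit by $T\not\ni x_n$, the right block by $T\ni x_n$).

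Now take a nonzero linear combination of the rows of $M_{n,\ell}$ with coefficient vector $v = (v_{\mathrm{top}}, v_{\mathrm{bot}})$. Its value on the second column-block is $v_{\mathrm{bot}}^T M_{n-1,\ell-1}$, and on the first column-block it is $v_{\mathrm{top}}^T M_{n-1,\ell} + v_{\mathrm{bot}}^T M_{n-1,\ell-1}$. I would consider two cases. If $v_{\mathrm{bot}} \ne 0$: by the induction hypothesis applied to $M_{n-1,\ell-1}$, the vector $v_{\mathrm{bot}}^T M_{n-1,\ell-1}$ has at least $2^{\ell-1}$ nonzero entries, and it appears verbatim as the restriction to the second column-block; moreover $v_{\mathrm{top}}^T M_{n-1,\ell} + v_{\mathrm{bot}}^T M_{n-1,\ell-1}$ restricted to the first column-block equals the other occurrence of $v_{\mathrm{bot}}^T M_{n-1,\ell-1}$ shifted by $v_{\mathrm{top}}^T M_{n-1,\ell}$ --- and here is where one must argue the two halves cannot cancel each other's nonzeros away, since the columns of $M_{n-1,\ell}$ are a sub-collection (those of support $\ge\ell$) of the columns of $M_{n-1,\ell-1}$, so the support of $v_{\mathrm{top}}^T M_{n-1,\ell}$ over the first column-block is contained in a set of columns on which we still get to add a nonzero-rich vector; a careful bookkeeping shows at least $2^{\ell-1}$ nonzeros survive in the first column-block as well, for a total of $\ge 2^\ell$. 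If $v_{\mathrm{bot}} = 0$, then $v_{\mathrm{top}} \ne 0$ and the combination restricted to the first column-block is $v_{\mathrm{top}}^T M_{n-1,\ell}$, which has $\ge 2^\ell$ nonzeros directly by the induction hypothesis on $M_{n-1,\ell}$.

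The base cases are $\ell = 0$ (any nonzero combination is supported on $\ge 1 = 2^0$ entries, e.g.\ the empty-monomial column where every row has a $1$) and, say, $n = \ell$ (the unique full-support row is the all-ones vector on the single column equal to $[n]$ together with $1$'s on all $2^n = 2^\ell$ column-subsets --- wait, more carefully: when $n=\ell$ there is one row $S=[n]$, and $M_{n,n}(S,T)=1$ for all $2^n$ subsets $T$, so any nonzero multiple has exactly $2^n = 2^\ell$ nonzeros). The \textbf{main obstacle} is precisely the non-cancellation argument in the first column-block of the $v_{\mathrm{bot}}\ne 0$ case: one needs the structural fact that the ``extra'' term $v_{\mathrm{top}}^T M_{n-1,\ell}$ lives only on columns of support $\ge \ell$, while the $2^{\ell-1}$ guaranteed nonzeros of $v_{\mathrm{bot}}^T M_{n-1,\ell-1}$ can be further localized (by choosing how we recurse, i.e.\ tracking which columns the induction delivers nonzeros on) so that enough of them fall outside the danger zone, or alternatively that on the overlap the two contributions are themselves of the induction-hypothesis form and so their sum is again nonzero-rich. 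Making this quantitative --- getting exactly $2^{\ell-1} + 2^{\ell-1} = 2^\ell$ rather than something weaker --- is the delicate point, and is likely where the improvement over \cite[Theorem 13]{ASS13} and the ``doubling effect'' alluded to in the introduction come from.
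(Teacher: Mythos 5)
Your block decomposition of $M_{n,\ell}$ into $M_{n-1,\ell}$ (top-left), $\mathbf{0}$ (top-right), and two copies of $M_{n-1,\ell-1}$ (bottom) matches the paper exactly, and you correctly identify the obstacle in the case $v_{\mathrm{bot}}\neq 0$, $v_{\mathrm{top}}\neq 0$ --- but you leave it unresolved, which is a genuine gap. Writing $\overline{C}:=v_{\mathrm{top}}^T M_{n-1,\ell}$ and $\overline{D}:=v_{\mathrm{bot}}^T M_{n-1,\ell-1}$, the combination is $\bigl[\,\overline{C}+\overline{D}\mid\overline{D}\,\bigr]$, and your proposed repair (``careful bookkeeping'' or ``localizing'' the inductive nonzeros so that enough survive in the first block) cannot work: a row $S$ not containing $x_n$ and the row $S\cup\{n\}$ map to the \emph{same} row index of $M_{n-1,\ell-1}$, so $\overline{C}+\overline{D}$ can be identically zero even when $v_{\mathrm{top}},v_{\mathrm{bot}}\neq 0$, and no nonzeros can be harvested from the first block by any such localization scheme.

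The paper's proof cases on $\overline{C}=\mathbf{0}$ versus $\overline{C}\neq\mathbf{0}$, and the short observation you are missing is this: when $\overline{C}\neq\mathbf{0}$, every index $T$ with $\overline{C}(T)\neq 0$ contributes at least one nonzero entry to $\bigl[\,\overline{C}+\overline{D}\mid\overline{D}\,\bigr]$, since $(\overline{C}+\overline{D})(T)=0$ and $\overline{D}(T)=0$ together would force $\overline{C}(T)=0$. Thus the nonzeros of $\overline{C}$ alone lower-bound the total number of nonzeros, and by the inductive hypothesis on $M_{n-1,\ell}$ that count is already $\ge 2^{\ell}$ --- no cancellation analysis is required, and there is no $2^{\ell-1}+2^{\ell-1}$ addition in the hard case. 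The doubling you anticipate occurs only in the easy case $\overline{C}=\mathbf{0}$, where all $c_S$ vanish and the row reduces to $\bigl[\,\overline{D}\mid\overline{D}\,\bigr]$ with $\overline{D}$ a nonzero combination of rows of $M_{n-1,\ell-1}$. (Separately, your illustration of the $\ell=0$ base case is wrong: the empty-monomial column has a $1$ in every row, so its entry under the combination is $\sum_S v_S$, which can vanish for nonzero $v$; the correct argument is that $M_{n,0}$ is the $2^n\times 2^n$ zeta matrix, hence invertible, hence has linearly independent rows.)
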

\begin{proof}
Proof is by induction on the number of variables, $n$.

\emph{Base case:} There is only one variable, $x_1$. Then $\ell$ is $0$ or $1$.

When $\ell = 1$, the matrix $M_{1, 1}$ has only one row, indexed by the monomial $x_1$. The matrix $M_{1, 1}$ has two columns. They are indexed by the empty set $\varnothing$ and the monomial $x_1$. So, $M_{1, 1} =
\left(\begin{smallmatrix}
1&1
\end{smallmatrix}\right)$; %$(1, 1)$.
which clearly satisfies the lemma statement.

When $ \ell = 0$, both the rows and the columns are indexed by the empty set $\varnothing$ and the monomial $x_1$. The matrix
$M_{1, 0} =
\left(\begin{smallmatrix}
1&0\\1&1
\end{smallmatrix}\right)$.
Since the two rows are linearly independent, any nonzero linear combination will have at least $1 = 2^0$ nonzero entry.

\emph{Induction Hypothesis:} Assume that for number of variables = $n-1$ and for all $\ell \le n-1$, any nonzero linear combination of the rows has at least $2^{\ell}$ nonzero entries.

\emph{Induction Step:} We have to prove the property for $M_{n, \ell }$ for all $\ell \le n$.

%When $\ell = n$, $M_{n, \ell }$ has a single row. The columns are indexed by all the subsets of $[n]$ and hence, the row has all $1$'s. Thus, $M_{n, n}$ satisfies the statement.
%
%In th rest of the proof, we can assume that $\ell \le n-1$.

The rows of the matrix $M_{n, \ell}$ are partitioned into two sets:
${\mathcal S}_1$, the set of rows whose indices do not contain $x_n$
and ${\mathcal S}_2$, the set of rows whose indices contain $x_n$.
The columns of the matrix $M_{n, \ell}$ are similarly partitioned into two sets:
${\mathcal T}_1$, the set of columns whose indices do not contain $x_n$
and ${\mathcal T}_2$, the set of columns whose indices contain $x_n$.
Then $\mnl$ is divided into four blocks: $\left\{\mnl\left({\mathcal S}_i, {\mathcal T}_j\right)\right\}, \left(i, j \in  \{ 1, 2 \} \right)$. Clearly,
\begin{eqnarray}
\mnlst{1}{1} &=& M_{n-1, \ell}\label{smallermatrix11}\\
\mnlst{1}{2} &=& \mathbf{0}\label{smallermatrix12} \mbox{ and}\\
\mnlst{2}{1} &=& \mnlst{2}{2}.\label{smallermatrix2122}
\end{eqnarray}
Note that, if $\ell = n$, Equation~(\ref{smallermatrix11}) still holds, since then, ${\S}_1 = \varnothing$.
Equation \ref{smallermatrix2122} holds because $T \cup \{n\} \subseteq S \cup \{n\}$ iff $T \subseteq S \cup \{n\}$, where $S, T \subseteq [n-1]$.

We break the linear combination: $\sum_{S \in \left({\mathcal S}_1 \cup {\mathcal S}_2\right)} c_S M_S = \sum_{S \in {\mathcal S}_1} c_S M_S + \sum_{S \in {\mathcal S}_2} d_S M_S$ where $c_S, d_S \in \F$ and $M_S$ is the row of $\mnl$ indexed by the set $S$.

\begin{eqnarray*}
\sum_{S \in \left({\mathcal S}_1 \cup {\mathcal S}_2\right)} c_S M_S
&=& \sum_{S \in {\S}_1} c_S \Mbreak{M(S, {\T}_1)}{M(S, {\T}_2)}
+ \sum_{S \in {\S}_2} d_S \Mbreak{M(S, {\T}_1)}{M(S, {\T}_2)}\\
&=& \sum_{S \in {\S}_1} c_S \Mbreak{M(S, {\T}_1)}{\mathbf 0}
+ \sum_{S \in {\S}_2} d_S \Mbreak{M(S, {\T}_1)}{M(S, {\T}_1)}\\
&=&\Mbreak{\displaystyle\sum_{S \in {\S}_1} c_S M(S, {\T}_1) + \sum_{S \in {\S}_2} d_S M(S, {\T}_1)}{\displaystyle\sum_{S \in {\S}_2} d_S M(S, {\T}_1)}\\
&=&\Mbreak{\o{C} + \o{D}}{\o{D}},
\end{eqnarray*}
where, $\o{C}:= \sum_{S \in {\S}_1} c_S\; M(S, {\T}_1)$
and $\o{D}:= \sum_{S \in {\S}_2} d_S\; M(S, {\T}_1)$ are row vectors in $\F^{2^{[n-1]}}$.
The second equality holds from Equations (\ref{smallermatrix12}) and (\ref{smallermatrix2122}).

Let us first consider the case when $\o{C}\ne \mathbf{0}$.
%Now, the contribution of the rows in $\S_1$ may be zero. I.e.\ $c_S = 0, \forall S \in {\S}_1$.
Since $\o{C}$ is a nonzero linear combination of rows in $\mnlst{1}{1}$,
from Equation (\ref{smallermatrix11})
and by the induction hypothesis,
it has $\ge 2^{\ell}$ nonzero entries.
For any index $T \in {\T}_1$, if $\o{C}(T) \ne 0$, but $(\o{C} + \o{D})(T) = 0$, then $\o{D}(T) \ne 0$.
Hence, $\Mbreak{\o{C} + \o{D}}{\o{D}}$ has at least as many nonzero entries as $\o{C}$, i.e.\ $\geq 2^{\ell}$.
% Hence, in $\mnlst{1}{1}$, if a nonzero entry of the linear combination becomes $0$ after the addition of the linear combination in $\mnlst{2}{1}$, then, this loss is exactly compensated by $\mnlst{2}{2}$.

But, we cannot use the induction hypothesis if $\o{C} = \mathbf{0}$,
i.e.\ if ${\S}_1 = \varnothing $ or $c_S = 0, \forall S \in {\S}_1$.
In this case, we have to show that there are $\ge 2^{\ell}$ nonzero entries in
$\Mbreak{\o{D}}{\o{D}}$,
i.e.\ that there are $\ge 2^{\ell-1}$ nonzero entries in $\o{D}$
(assuming $\ell\ge 1$).
For this, observe that $\mnlst{2}{1} = M_{n-1, \ell - 1}$.
Hence, we can use the induction hypothesis.

If $\ell = 0$, we need to show that
that there are $\ge 1$ nonzero entries in $\o{D}$.
For this, observe that ${M_{n, 0}\left({\S}_2, {\T}_1\right)} = M_{n-1, 0}$.
Hence, we can use the induction hypothesis.
\end{proof}

Now, from this property of the transfer matrix, 
we will conclude linear independence of rows of a truncated
transfer matrix.

\begin{lemma}
Consider the matrix $M_{n, \ell }$ described in 
Lemma~\ref{lem:rowComb}.
Let us mark any set of at most $2^{\ell}-1$ columns in $M_{n, \ell }$. 
Let $M'_{n, \ell }$ denote the submatrix of $M_{n, \ell }$ consisting of 
all the unmarked columns. 
The rows of $M'_{n, \ell }$ are linearly independent.
\label{lem:mark}
\end{lemma}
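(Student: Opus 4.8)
The plan is to obtain this as an immediate corollary of Lemma~\ref{lem:rowComb} by a contrapositive argument. Suppose, for contradiction, that the rows of $M'_{n,\ell}$ are linearly dependent. Then there is a nonzero family of field coefficients $(c_S)$, indexed by the $n$-variate multilinear monomials $S$ with $\abs{S} \ge \ell$, such that $\sum_S c_S (M'_{n,\ell})_S = \mathbf{0}$, where $(M'_{n,\ell})_S$ denotes the row of $M'_{n,\ell}$ indexed by $S$.

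The key observation is that $M'_{n,\ell}$ has exactly the same rows as $M_{n,\ell}$ (same index set of monomials of support $\ge \ell$), and differs only by having deleted the marked columns; so the row $(M'_{n,\ell})_S$ is precisely the restriction of the row $(M_{n,\ell})_S$ to the unmarked columns. Next I would lift the dependency back to the full matrix: the vector $v := \sum_S c_S (M_{n,\ell})_S$ is a \emph{nonzero} linear combination of the rows of $M_{n,\ell}$ (since $(c_S)$ is nonzero), and by construction $v$ vanishes on every unmarked column. Hence the support of $v$ is contained in the set of marked columns, which has size at most $2^{\ell}-1$.

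Finally I would invoke Lemma~\ref{lem:rowComb}, which asserts that any nonzero linear combination of the rows of $M_{n,\ell}$ has at least $2^{\ell}$ nonzero entries. This contradicts the bound $2^{\ell}-1$ just obtained. Therefore no such nonzero $(c_S)$ can exist, i.e.\ the rows of $M'_{n,\ell}$ are linearly independent.

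I do not expect any real obstacle here; this is a routine consequence of Lemma~\ref{lem:rowComb}. The only points requiring a little care are: (i) making explicit that deleting columns does not change the row index set, so that ``linear combination of the rows of $M'_{n,\ell}$'' corresponds exactly to ``linear combination of the rows of $M_{n,\ell}$, then restrict to the unmarked columns'', which is what validates the lifting step; and (ii) the boundary case $\ell = 0$, where $2^{\ell}-1 = 0$, so no columns may be marked and $M'_{n,\ell} = M_{n,\ell}$, and the statement reduces to the (already covered) fact that a nonzero row combination of $M_{n,0}$ has $\ge 1$ nonzero entry.
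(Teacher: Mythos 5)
Your proof is correct and is essentially the paper's argument phrased as a contrapositive: a nonzero row dependency of $M'_{n,\ell}$ lifts to a nonzero row combination of $M_{n,\ell}$ supported on the $\le 2^\ell-1$ marked columns, contradicting Lemma~\ref{lem:rowComb}. The extra care about the unchanged row index set and the $\ell=0$ boundary case is fine but not a different route.
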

\begin{proof}
Lemma~\ref{lem:rowComb} shows that any nonzero linear combination 
of the rows of $M_{n, \ell }$ has at least $2^{\ell}$ nonzero entries. 
$M'_{n, \ell }$ has at most $2^{\ell}-1$ columns missing from $M_{n, \ell }$.
So, any nonzero linear combination 
of the rows of $M'_{n, \ell }$ has at least one nonzero entry. 
In other words, the rows of matrix $M'_{n, \ell }$ are linearly independent. 
\end{proof}

Now, we are ready to prove the invertibility requirement in Lemma \ref{lem:subcircuitl0}. 
Recall that 
%$\ell_0 := \log (k +1)$.
$N$ denotes the number of all $n$-variate 
multilinear monomials with support size $\le \delta \ell_0 $ 
and $N^*$
denotes the number of all $n$-variate multilinear monomials with support size 
between $\ell_0$ and $\delta \ell_0 $. 
The truncated transfer matrix (introduced in the proof of Lemma \ref{lem:subcircuitl0}) $\TM_1$
has dimension $N^* \times N$. 
Now, we show that the truncated transfer matrix $\TM_1$ 
multiplied by 
an appropriate set of $N^*$-many null-vectors gives an invertible matrix.

\begin{lemma}[Transfer matrix action]
\label{lem:invertible}
Consider the truncated transfer matrix $\TM_1$. 
Given any space $\nul \subseteq \F^N$ of dimension at least $N-k$, 
there exists a set of $N^*$ vectors in it, denoted by
$\nul^*_{N \times N^*}$, 
such that $\TM_1 \nul^*$ is an invertible matrix. 
\end{lemma}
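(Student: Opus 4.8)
The plan is to exhibit the desired set $\nul^*$ of $N^*$ null-vectors by a greedy/dimension-counting argument, using the linear-independence property of the truncated transfer matrix rows established in Lemma \ref{lem:mark}. Let me set up the structure. The matrix $\TM_1$ has its $N^*$ rows indexed by the multilinear monomials of support between $\ell_0$ and $\delta\ell_0$, and its $N$ columns indexed by the multilinear monomials of support $\le \delta\ell_0$; recall $\TM = A^{-1}MA$ with $A$ the diagonal $t$-scaling matrix, so $\TM_1 = A_1^{-1} M_1 A$ where $M_1$ is exactly the combinatorial matrix $M_{n,\ell_0}$ restricted to columns of support $\le\delta\ell_0$ (its rows are those of $M_{n,\ell_0}$ indexed by support-$\ge\ell_0$ monomials — but note only those of support $\le\delta\ell_0$, which is fine since we will feed it the truncated null-vectors). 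The point is that $\TM_1$ is a product of the $\F$-matrix $M_1$ with invertible diagonal matrices over $\F(t)$, so it suffices to prove the statement for $M_1$ — i.e. find $\nul^*\subseteq \nul$ with $M_1\nul^*$ invertible — because then $\TM_1\nul^* = A_1^{-1}M_1(A\nul^*)$, and $A\nul^*$ is still a set of vectors in (a scaled copy of) the null-space, with the same dimension.

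First I would record that $\ell_0 = \log_2(k+1)$, hence $2^{\ell_0} = k+1$, so Lemma \ref{lem:mark} says: deleting any $\le 2^{\ell_0}-1 = k$ columns from $M_{n,\ell_0}$ leaves a matrix with linearly independent rows. Now build $\nul^*$ greedily, one vector at a time, maintaining the invariant that the chosen vectors $v_1,\dots,v_t \in \nul$ have the property that $M_1 v_1,\dots, M_1 v_t$ are linearly independent in $\F^{N^*}$ (so in particular after $N^*$ steps $M_1\nul^*$ is invertible). Suppose we have $t < N^*$ such vectors. The span $W := \Span\{M_1 v_1,\dots,M_1 v_t\}$ has dimension $t \le N^*-1$. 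I claim there is some $v \in \nul$ with $M_1 v \notin W$. Suppose not: then $M_1(\nul) \subseteq W$, so $M_1$ restricted to $\nul$ has rank $\le t \le N^*-1$. But $\dim\nul \ge N-k$, so $M_1|_\nul$ has rank $\ge (N-k) - \dim\ker(M_1) \ge (N-k) - \bigl(N - \rk M_1\bigr) = \rk M_1 - k$. Here $M_1 = M_{n,\ell_0}$ restricted to the $N$ columns of support $\le\delta\ell_0$; since deleting $\le k$ of the *original* columns keeps the rows independent, and $M_{n,\ell_0}$ has $N^*$ rows which are then independent after the deletion, we get $\rk M_1 = N^*$ provided the number of deleted columns is $\le k$ — which holds as long as $N^* \le$ (number of columns of $M_{n,\ell_0}$) $- k$; more carefully, the columns of $M_{n,\ell_0}$ of support $\le\delta\ell_0$ already *contain* every column indexed by a subset of a support-$(\le\ell_0)$-monomial's... — I'd argue instead directly that the square submatrix of $M_{n,\ell_0}$ with rows = support-$\in[\ell_0,\delta\ell_0]$ monomials and columns = the *same* set of monomials is unitriangular (ordering by support size, $M(S,T)=1$ iff $T\subseteq S$, so the diagonal is all $1$s and it is triangular), hence invertible, so $\rk M_1 \ge N^* \ge \rk M_1$, giving $\rk M_1 = N^*$. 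Then $\rk M_1|_\nul \ge N^* - k$, and this contradicts $\rk M_1|_\nul \le t \le N^*-1$ *only if* $N^* - k > N^* - 1$, i.e. $k < 1$ — which is false, so I need the sharper bound.

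The fix, and the main obstacle, is that a crude rank count loses a factor of $k$, so I should instead use that $\dim\nul \ge N - k$ means $\nul$ is cut out by at most $k$ linear constraints, and re-run the greedy argument keeping track of constraints rather than dimension. Concretely: at step $t$, I want $v\in\nul$ with $M_1 v \notin W$; the set $\{v : M_1 v \in W\}$ is the preimage $M_1^{-1}(W)$, a subspace of $\F^N$ of codimension $\ge N^* - t \ge 1$ (since $M_1$ is surjective onto $\F^{N^*}$, as just shown, so $M_1^{-1}(W)$ has codimension exactly $\dim\F^{N^*} - \dim W = N^* - t$). So I need $\nul \not\subseteq M_1^{-1}(W)$. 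Now $\nul$ has codimension $\le k$ and $M_1^{-1}(W)$ has codimension $\ge 1$; that alone doesn't prevent containment. The real input must be Lemma \ref{lem:mark} used at the right moment: $W$ is spanned by $t$ vectors, each $M_1 v_i$; the "bad" directions correspond to at most $t \le N^*-1$... I think the correct argument is: we stop not at $N^*$ but show that as long as the already-chosen $M_1 v_i$ span a space $W$ of dimension $< N^*$, the columns of $M_1$ *not yet "used up"* still have independent rows by Lemma \ref{lem:mark} (at most $k$ columns get marked, one batch per... ). Honestly the cleanest route — and the one I would actually write — is: pick $\nul^* $ to be *any* $N^*$ linearly independent vectors of $\nul$ that project isomorphically onto $\F^N / M_1^{-1}(\{0\})$-complement... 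I would instead argue via the following clean statement: since $M_1$ has full row rank $N^*$, the map $M_1 : \F^N \to \F^{N^*}$ is onto; restricted to the codimension-$\le k$ subspace $\nul$, it is onto as soon as $\dim(\ker M_1 \cap \nul) \ge \dim\ker M_1 - k$... no. The honest assessment: the crux is showing $M_1(\nul) = \F^{N^*}$, equivalently $\ker M_1 + \nul = \F^N$, equivalently $\nul^\perp \cap (\ker M_1)^\perp = 0$, i.e. the $\le k$ linear functionals cutting out $\nul$ are linearly independent modulo $(\ker M_1)^\perp = \mathrm{rowspan}(M_1)$. This is exactly where Lemma \ref{lem:mark}/\ref{lem:rowComb} enters: a nonzero combination of $\le k$ of those functionals (which are coordinate functionals $e_T$ for the $\le k$ "constraint" monomials $T$) cannot lie in $\mathrm{rowspan}(M_1)$ because — hmm, that's a statement about *columns* of $M_1$, namely that no $\le k = 2^{\ell_0}-1$ columns of $M_1$ are dependent on... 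This dualizes precisely to Lemma \ref{lem:rowComb}: *rows* of $M_{n,\ell_0}$ (= the support-$\ge\ell_0$ part) have every nonzero combination supported on $\ge 2^{\ell_0} = k+1$ coordinates, hence no $k$ coordinate-projections kill a nonzero row-combination, hence the $N^*$ rows stay independent after deleting the $\le k$ constraint-columns — which is Lemma \ref{lem:mark} verbatim. So once $\nul$ is $M_1$-onto, greedily extract $v_1,\dots,v_{N^*}\in\nul$ with $M_1v_i$ a basis of $\F^{N^*}$ (possible since $M_1(\nul)=\F^{N^*}$), set $\nul^* = [v_1|\cdots|v_{N^*}]$, and then $\TM_1\nul^* = A_1^{-1} M_1 (A\nul^*)$ is invertible because $A\nul^*$ still lies in a space on which $M_1$ is onto (diagonal $A$ is invertible). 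I expect the one genuinely delicate point to be the passage from $\nul$ being merely of dimension $\ge N-k$ to $M_1$ being onto when restricted to $\nul$ — that is where Lemma \ref{lem:mark} must be invoked in its contrapositive/dual form, and getting that duality statement phrased correctly (columns-vs-rows, $2^{\ell_0}-1$ vs $k$) is the heart of the matter.
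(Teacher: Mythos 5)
Your reduction ``it suffices to show $M_1\nul^*$ is invertible'' does not go through, and neither does the onto-ness claim it would need. The diagonal matrix $A$ has \emph{distinct} entries $(-1)^{|T|}/\phi_{\delta\ell_0}(t_T)$, so it is not a scalar multiple of identity and cannot be factored past $M_1$: $\TM_1\nul^*$ is invertible iff $M_1(A\nul^*)$ is invertible, and choosing $\nul^*\subseteq\nul$ so that $M_1\nul^*$ is invertible says nothing about $M_1(A\nul^*)$. Worse, the key sub-claim you gravitate toward --- that $M_1|_\nul$ is surjective onto $\F^{N^*}$ for every $\nul$ of codimension $\le k$ --- is false. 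Since $M_1$ has full row rank $N^*$, $\ker M_1$ has dimension $N-N^*$; take $\nul\supseteq\ker M_1$ of dimension $N-k$ (possible as $N^*>k$), and then $\dim M_1(\nul)\le N^*-k<N^*$. Your dualization also mischaracterizes $\nul^\perp$: after column-reducing $\nul$ to lower-triangular form the ``constraint'' rows $I'$ are not killed --- they are \emph{determined by} the pivot rows $I$ --- so $\nul^\perp$ is not spanned by coordinate functionals $e_T$, and Lemma~\ref{lem:mark} cannot be invoked in the dual form you sketch.

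What the paper actually uses, and what your proposal never reaches, is the interaction between $A$ and the lower-triangular normal form of $\nul$. Column-reduce $\nul$ so that the $j$-th basis column has its first nonzero entry (a $1$) at index $i_j$, with $i_1<\dots<i_{N-k}$ in the $\phi_{\delta\ell_0}$-ordering; mark the $k$ non-pivot columns of $M_1$ and apply Lemma~\ref{lem:mark} to get $N^*$ independent \emph{unmarked} columns $M_{I^*}$ with $I^*\subseteq I$; pick $\nul^*$ to be the basis columns of $\nul$ whose pivots lie in $I^*$. Then the diagonal scaling $A$, whose entries have strictly ordered $t$-degrees, forces the lowest-degree term of each column of $M_1A\nul^*_j$ to be $\pm M_{i_j}$, so the lowest-degree term of $\det(M_1A\nul^*)$ is $\pm\det(M_{I^*})\ne 0$. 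That degree extraction is the whole point --- invertibility holds over $\F(t)$ precisely because $A$ puts $\nul$ in ``generic position'' relative to $M_1$ --- and it is exactly the mechanism your factorization throws away.
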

\begin{proof}%[Proof of Lemma~\ref{lem:invertible}]
From Section~\ref{subsec:generalApproach}, recall
that $\TM_1$ can be written as the product
$A' M_1 A$. 
Here $A'_{N^* \times N^*}$ and $A_{N \times N}$
are diagonal matrices with $A(T,T) = (-1)^{\abs{T}} \frac{1}{\phi_{\delta \ell_0}(t_T)}$.
Recall that $\phi_{\delta \ell_0}$ is a univariate map, 
which sends all monomials $t_T$ to distinct powers of $t$, when $\abs{T} \leq \delta \ell_0$.
Hence, $\phi_{\delta \ell_0}$ gives a total ordering on the monomials $t_T$. 
From now on, we assume that the columns of matrix $M_1$
are arranged in an increasing order according to the ordering given by $\phi_{\delta \ell_0}$.

$M_1$ is a matrix with columns indexed by monomials with support size
$\le \delta \ell_0$ and rows indexed by monomials with support size
between $\ell_0$ and $\delta \ell_0$.
Also, 
$$ 
M_1(S,T) = \begin{cases}
		1 & \text{if } T \subseteq S,\\
		0 & \text{otherwise} 
		\end{cases}
$$

Take a basis of the space $\nul$ of size $N-k$ and with an abuse of notation,
denote it by $\nul$.
In the matrix form $\nul$ has dimension $N \times (N-k)$.
The rows of matrix $\nul$ are also arranged in an increasing order
according to the ordering given by $\phi_{\delta \ell_0}$.

Any linear combination of the basis vectors 
remains in the same space. Hence, we can do column operations in the
matrix $\nul$. 
We can assume that after the column operations $\nul$ has a 
lower triangular form.
To be more precise, we can assume that
in any column of $\nul$ the first nonzero entry is $1$.
And if $i_j$ denotes the index of the first nonzero entry in 
$j$-{th} column, then $i_1 < i_2 < \dotsm < i_{N-k}$.
Clearly, the rows of $\nul$, given by the indices $I := \{i_1,i_2, \dots, i_{N-k}\}$,
are independent. 
The other rows corresponding to the indices $I' := [N] - I$
 are dependent on the $I$ rows.
Mark the columns in $M_1$ corresponding 
to the indices in $I'$. 
We know $\abs{I'} = k \leq 2^{\ell_0} -1$.
Now, we apply Lemma~\ref{lem:mark} to the matrix $M_1$.
Note that in Lemma~\ref{lem:mark}, the matrix $M_{n, \ell }$ has 
columns corresponding to all the multilinear monomials while
our matrix $M_1$ has columns only corresponding to monomials of
support size $\leq \delta \ell_0$.
So, we cannot directly apply Lemma~\ref{lem:mark} to $M_1$.
However, note that $M_1$ has rows corresponding
to monomials of support size between $\ell_0$ and $\delta \ell_0$. 
Hence, any column with monomial support size $> \delta \ell_0$ will
be a zero column. 
So, we can ignore the zero columns, and Lemma~\ref{lem:mark}
%holds true for $M_1$. 
%We conclude that 
implies: the rows of $M'_1$ are linearly independent,
where $M'_1$ denotes the matrix formed by 
the unmarked columns of $M_1$.
In other words, there exists a set of $N^*$ unmarked 
columns in $M_1$, which are linearly independent. 
Let the set of indices corresponding to these columns be $I^*$.

Recall that $I^* \subseteq I$.
So, we can choose a set of $N^*$ columns from $\nul$
such that the set of their first nonzero indices is $I^*$.
Let the matrix corresponding to these columns be $\nul^*_{N \times N^*}$.
Now, we consider the square matrix given by $R_{N^* \times N^*} := M_1 A \nul^*$.
We claim that $\det(R) \neq 0$.
To prove the claim we look at the lowest 
degree term in $\det(R)$.
Let $R_j$ be the $j$-{th} column of $R$ for $1 \leq j \leq N^*$.
Viewing $R_j$ as a polynomial in $\F^{N^*}[t]$, let
$R_{j0}$ be the coefficient of lowest degree term in $R_j$. 
Let us define a matrix $R_0$ whose $j$-{th} column is $R_{j0}$
for $1 \leq j \leq N^*$.
Clearly, $\det(R_0)$ is the coefficient of the lowest degree term in $\det(R)$, 
in the case when $\det(R_0) \neq 0$.
We claim that $\det(R_0) \neq 0$.

We know that $R_{j} = M_1 A \nul^*_j$, 
where $\nul^*_j$ is the $j$-{th} column of $\nul^*$.
Let $i_j$ be the index of the first nonzero entry in $\nul^*_j$.
As the entries in $A \nul^*_j$ have powers of $t$ in an strictly increasing order, 
the coefficient of the least term in $M_1 A \nul^*_j$
is clearly $\pm M_{i_j}$, where $M_{i_j}$ is the ${i_j}$-{th} column of $M_1$.
So, $R_{j0} = \pm M_{i_j}$ and hence $R_0 = M_{I^*}$ (upto a multiplicative factor of $(-1)$ to the columns), 
where $M_{I^*}$ is the submatrix of $M_1$
corresponding to $I^*$ columns.
As these columns are linearly independent, we get that
$\det(R_0) \neq 0$.
Which in turn implies that $\det(R) \neq 0$.
Hence, the product matrix $A' M_1 A \nul^* = \TM_1 \nul^*$
is invertible. 
\end{proof}

%%%%%%%%%%%%%%%%%%%%%%%%%%%%%%%%%%%%%%%%%%%%%%%%%%%%%%%%%
%%%%%%%%%%%%%%%%%%%%%%%%%%%%%%%%%%%%%%%%%%%%%%%%%%%%%%%%%%%%%%%%%%%%%%
%%%%%%%%%%%%%%%%%%%%%%%%%%%%%%%%%%%%%%%%%%%%%%%%%%%%%%%%%%%%%%%%%%%%%%%%%%%%

\section{Complete proofs of Section \ref{sec:baseSets}}
\label{app:baseSets}

%First, we will prove the following lemma
%which says that a linear dependence among some polynomials in
%$\H_k(\F)[\o{x}]$ implies a dependence among their coefficients.

\begin{lemma}[Circuits to coefficients]
\label{lem:polyToCoeffs}
Let $D_0(\o{x}), D_1(\o{x}), \dots, D_h(\o{x})$
be multilinear polynomials in $\H_k(\F)[\o{x}]$ for some field $\F$, 
where $\o{x}=\{\lis{x}{,}{n}\}$.
Let $D_i(\o{x}) = \sum_{S \subseteq [n]} u_{iS} x_S$, $\forall i \in \{0,1, \dots, h\}$,
where $u_{iS} \in \H_k(\F)$. If
$D_0(\o{x}) \in \Span_{\F(\o{x})}\{D_i(\o{x}) \mid i \in [h]\}$
then
$$\{u_{0S} \mid S \subseteq [n]\} \in \Span_{\F} \{u_{iS} \mid i \in [h], \; S \subseteq [n]\}.$$
\end{lemma}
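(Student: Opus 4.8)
The plan is to prove the contrapositive-flavored statement directly by a clearing-of-denominators argument followed by coefficient extraction. Suppose $D_0(\o{x}) = \sum_{i=1}^h c_i(\o{x}) D_i(\o{x})$ for some rational functions $c_i(\o{x}) \in \F(\o{x})$. First I would multiply through by a common denominator $q(\o{x}) \in \F[\o{x}]$ of the $c_i$, obtaining a \emph{polynomial} identity $q(\o{x}) D_0(\o{x}) = \sum_{i=1}^h p_i(\o{x}) D_i(\o{x})$ with $p_i := q c_i \in \F[\o{x}]$. Since $q \neq 0$, pick a monomial $x_{S_0}$ appearing in $q$ with nonzero coefficient $\lambda \in \F$; by a suitable translation of variables $x_j \mapsto x_j + a_j$ with $a_j \in \F$ chosen generically (or by passing to a large enough extension field, since $\F$ may be finite — one can instead work over $\F(\o{x})$ and specialize, or simply argue that $q$ has nonzero constant term after such a shift), I may assume $q$ has nonzero constant term $\lambda \neq 0$; note the $D_i$ remain multilinear and the span relation is preserved under invertible affine change of variables, though the \emph{coefficient vectors} change — so care is needed here (see the obstacle paragraph).

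The cleaner route, which avoids the shift subtlety, is: from $q D_0 = \sum_i p_i D_i$, compare coefficients of each multilinear monomial $x_T$ on both sides, but now $q, p_i$ need not be multilinear so products like $q(\o{x}) D_0(\o{x})$ are not multilinear. To handle this I would instead work in the quotient ring $\F[\o{x}]/(x_1^2 - x_1, \dots, x_n^2 - x_n)$ — equivalently, reduce everything to multilinear form by repeatedly replacing $x_j^2$ with $x_j$. This reduction is an $\F$-algebra homomorphism fixing multilinear polynomials, so $D_i$ is unchanged, while $q$ and $p_i$ become multilinear polynomials $\tilde q, \tilde p_i$. Crucially $\tilde q \neq 0$: since $q \neq 0$ as a polynomial identity persists after multilinear reduction only if... — actually this can fail, so instead I pick a point $\o{b} \in \F^n$ (or in an extension) with $q(\o{b}) \neq 0$, and substitute. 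But substitution kills the $\o{x}$-dependence we want to track.

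Given these tensions, the approach I would actually commit to: extend scalars so $\abs{\F}$ is large enough that there is $\o{b} \in \F^n$ with $q(\o{b}) \neq 0$; this extension does not change $\F$-linear (in)dependence of the coefficient vectors $u_{iS} \in \H_k(\F)$, since linear independence over $\F$ is equivalent to linear independence over any extension $\F'$ (a standard fact about base change of vector spaces / solvability of linear systems). Then the identity $q(\o{x})D_0(\o{x}) = \sum_i p_i(\o{x}) D_i(\o{x})$ in $\H_k(\F')[\o{x}]$, after the multilinear reduction $x_j^2 \mapsto x_j$ applied \emph{only on the left is illegal} — so instead: divide formally, $D_0 = \sum_i (p_i/q) D_i$ over $\F'(\o{x})$, and observe $p_i/q$, expanded as a power series in the $x_j$ around a point where $q \neq 0$, has well-defined coefficients; matching the coefficient of $x_T$ ($T \subseteq [n]$) in the multilinear polynomial $D_0$ against the corresponding extraction on the right expresses $u_{0T}$ as an $\F'$-linear — hence, by clearing the scalar denominators $q(\o{b})$, an $\F$-linear — combination of finitely many $u_{iS}$.

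\textbf{The main obstacle} I expect is precisely this interplay between (a) the ambient ring being the \emph{multilinear} polynomials, where one cannot freely multiply, and (b) the denominators $c_i \in \F(\o{x})$ being arbitrary rational functions. The honest fix is to note that because $D_0, \ldots, D_h$ are multilinear, linear dependence over $\F(\o{x})$ is equivalent to linear dependence over $\F$ of their coefficient vectors: indeed, writing $D_i = \sum_S u_{iS} x_S$ and viewing the $u_{iS}$ as spanning a finite-dimensional $\F$-space $W \subseteq \H_k(\F)$, the polynomial $D_i$ is an element of $W \otimes_\F \F[\o{x}]_{\mathrm{ml}}$ where $\F[\o{x}]_{\mathrm{ml}}$ is the $\F$-space of multilinear polynomials; a dependence over $\F(\o{x})$ lies in $W \otimes_\F \F(\o{x})$, and because $\F[\o{x}]_{\mathrm{ml}} \hookrightarrow \F(\o{x})$ with the $x_S$ still $\F(\o{x})$-linearly independent only if... — no. The correct clean statement to invoke is: if $v_1, \ldots, v_m \in W$ ($W$ an $\F$-space) and $f_1, \ldots, f_m \in \F(\o{x})$ with $\sum f_i v_i = 0$, this says nothing; but here the $D_i$ are not scalar multiples of a single vector. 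I would resolve it by the determinant/Cramer argument: the relation $D_0 \in \Span_{\F(\o{x})}\{D_i\}$ means the $(h+1) \times (\text{num monomials})$ coefficient matrix over $\F$ has rank, over $\F(\o{x})$, equal to the rank of its last $h$ rows — but this matrix has entries in $\F$, so its row-rank over $\F$ equals its row-rank over $\F(\o{x})$; hence $u_{0\bullet}$ is already in the $\F$-span of $u_{1\bullet}, \ldots, u_{h\bullet}$ as \emph{rows}, which is exactly the claim. That rank-is-field-independent observation is the key step, and everything else is bookkeeping.
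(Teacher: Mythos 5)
The piece you explicitly call ``the key step'' --- the rank-is-field-independent argument at the end --- is wrong, because the hypothesis $D_0 \in \Span_{\F(\o{x})}\{D_i\}$ is \emph{not} equivalent to row $0$ of the coefficient matrix being in the span of rows $1,\dots,h$, over $\F(\o{x})$ or any other field. These are different statements. The span hypothesis says $D_0 = \sum_i c_i D_i$ as vectors in $\H_k(\F(\o{x})) = \F(\o{x})^k$ with $c_i \in \F(\o{x})$; since the monomials $x_S$ are themselves scalars in $\F(\o{x})$, you cannot read off a relation $u_{0S} = \sum_i c_i u_{iS}$ by ``comparing coefficients of $x_S$'' --- the $x_S$ are not $\F(\o{x})$-linearly independent. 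By contrast, $u_{0\bullet}$ being a linear combination of $u_{1\bullet},\dots,u_{h\bullet}$ as rows would require a single tuple of scalars $(\gamma_i)$ with $u_{0S}=\sum_i\gamma_i u_{iS}$ for every $S$ simultaneously.

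Concrete failure: take $n=1$, $k=2$, $h=2$, $D_1=(1,0)$, $D_2=(0,1)$, $D_0=(1,x_1)$. Then $D_0 = D_1 + x_1 D_2$, so the hypothesis holds. The flattened rows are $u_{1\bullet}=((1,0),(0,0))$, $u_{2\bullet}=((0,1),(0,0))$, $u_{0\bullet}=((1,0),(0,1))$, and $u_{0\bullet}$ is not in their span (nothing can produce $(0,1)$ in the $\{1\}$-block). The lemma's conclusion still holds here --- every $u_{0S}$ lies in $\Span_\F\{u_{iS}\} = \F^2$ --- but not for the reason you give. So ``everything else is bookkeeping'' is not the situation; the reduction to a row-rank comparison is itself the gap.

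The paper's actual proof dualizes. Restrict to \emph{constant} $\alpha\in\H_k(\F)$. For such $\alpha$, the pairing $\alpha^T D_i$ lives in $\F[\o{x}]$ (a polynomial, not a rational function), so $\alpha^T D_i = 0 \iff \alpha^T u_{iS}=0$ for all $S$ --- this is the one place the polynomial structure is used, and it only works because $\alpha$ is constant. Then $D_0=\sum c_i D_i$ gives $\alpha^T D_0 = \sum c_i\,\alpha^T D_i = 0$ whenever $\alpha$ annihilates all $D_i$, hence $\alpha^T u_{0T}=0$ for all $T$. The common annihilator in $\F^k$ of $\{u_{iS}: i\in[h],S\}$ is therefore contained in that of $\{u_{0T}:T\}$, and by the standard orthogonal-complement fact this yields the span containment. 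No coefficient-matching in a rational-function identity ever occurs.

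Your earlier power-series sketch (expand $p_i/q$ around a point where $q\neq 0$ and match multilinear coefficients) is a salvageable alternative, but as written it has unaddressed holes: the translation needed to put the base point at the origin changes all the coefficient vectors $u_{iS}$ (so the claim must be transported through the translation and back), and ``clearing the scalar denominators $q(\o{b})$'' does not convert an $\F'$-linear combination into an $\F$-linear one --- for that you need the base-change fact you stated earlier but did not apply at that step. The duality route avoids all of this.
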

\begin{proof}
Let us define a field $\F' := \F(\o{x})$.
$D_0 \in \Span_{\F'}\{D_i \mid i \in [h]\}$
implies that any null-vector for the vectors $D_1, D_2, \dots,D_h$ 
is also a null-vector for $D_0$, i.e.\
$$ \{ \alpha \in \H_k(\F') \mid \alpha^T \cdot D_i = 0, \; \forall i \in [h] \}
 \subseteq
 \{ \alpha \in \H_k(\F') \mid \alpha^T \cdot D_0 =0\}.
 $$
So, the statement is also true when the
 vector $\alpha$ coming from $\H_k(\F)$, 
i.e.\ 
\begin{equation}
 \{ \alpha \in \H_k(\F) \mid \alpha^T \cdot D_i = 0, \; \forall i \in [h] \}
 \subseteq
 \{ \alpha \in \H_k(\F) \mid \alpha^T \cdot D_0 =0\}.
 \label{eq:nullVectors}
 \end{equation}
It is easy to see that 
the set of null-vectors for a vector $D_i \in \H_k(\F')$,
 which are coming from $\H_k(\F)$, 
is the same as the intersection of the sets of null-vectors
of the coefficient vectors in $D_i$, i.e.\
\begin{equation} \{ \alpha \in \H_k(\F) \mid \alpha^T \cdot D_i =0\} = 
\{\alpha \in \H_k(\F) \mid \alpha^T \cdot u_{iS} = 0, \; \forall S \in [n] \}.
\label{eq:nullForCoeffs}
\end{equation}
Using Equations~(\ref{eq:nullVectors}) and (\ref{eq:nullForCoeffs}),
we can write,
\begin{equation*}
 \{ \alpha \in \H_k(\F) \mid \alpha^T \cdot u_{iS} = 0, \;  \forall i \in  [h], \; \forall S \subseteq [n]\} 
 \subseteq
 \{ \alpha \in \H_k(\F) \mid \alpha^T \cdot u_{0S} =0, \; \forall S \subseteq [n] \}.
\end{equation*}
Hence, we can write, for any $T \subseteq [n]$,
\begin{equation*}
 \{ \alpha \in \H_k(\F) \mid \alpha^T \cdot u_{iS} = 0, \; \forall i \in [h], \; \forall S \subseteq [n]\} 
 \subseteq
 \{ \alpha \in \H_k(\F) \mid \alpha^T \cdot u_{0T} =0 \}.
\end{equation*}
This clearly implies, by linear algebra, that for any $T \subseteq [n]$,
%\TODO{should we put some argument?}
\begin{equation*}
u_{0T} \in \Span_{\F} \{ u_{iS} \mid i \in [h], \; S \subseteq [n]  \}.
\end{equation*}
\end{proof}

%\begin{notation}
%Now, let $\Dx \in \H_k(\F)[\o{x}]$ be a polynomial having
%$m$-base-sets-$\delta$-distance.
%Let the base sets for $\Dx$ be $\{\lis{\o{x}}{,}{m}\}$.
%We will denote the set of indices corresponding to the variables
%in $\o{x}_i$ by $I_i$.
%Let us look at $\Dx$ as a polynomial over variable set $\o{x}_m$
%i.e.\ $\Dx = \sum_{S \in I_m} D_S x_S$, 
%where $D_S \in \H_k(\F)[\o{x}_1, \o{x}_2, \dots, \o{x}_{m-1}]$.
%\end{notation}
%One crucial property that will be used is that
%$m$-base-sets-$\delta$-distance property holds {\em recursively},
%i.e.\ if $D$ has $m$-base-sets-$\delta$-distance then
%$D_S$ has $(m-1)$-base-sets-$\delta$-distance.
%We will prove this in the following lemma. 

\begin{lemma}
\label{lem:recursive}
Let $\Dx$ has $m$-base-sets-$\delta$-distance with base sets
$\{\lis{\o{x}}{,}{m}\}$. Let us write 
$\Dx = \sum_{S \subseteq I_m} u_S x_S$, where $I_m$ denotes 
the set of indices corresponding to the variable set $\o{x}_m$ and
for all $S \subseteq I_m$, 
$u_S \in \H_k(\F)[\o{x}_1, \o{x}_2, \dots, \o{x}_{m-1}]$.
Then for any $S \subseteq I_m$,
$u_S$ has $(m-1)$-base-sets-$\delta$-distance
with base sets $\{\lis{\o{x}}{,}{m-1}\}$.
\end{lemma}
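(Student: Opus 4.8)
The plan is to unwind the definitions and check that restricting $u_S$ to any one of the base sets $\o{x}_1, \dots, \o{x}_{m-1}$ (setting the others to $\mathbf{0}$) recovers a restriction of $D$ itself, which we already know has $\delta$-distance. First I would observe that for any $i \in [m-1]$, the operation ``set $\o{x}_j = \mathbf{0}$ for all $j \neq i, m$'' commutes with the operation ``extract the coefficient of $x_S$ as a polynomial in $\o{x}_m$''. Concretely, since $D(\o{x}) = \sum_{S \subseteq I_m} u_S(\o{x}_1, \dots, \o{x}_{m-1}) \, x_S$, we have
\[
D(\o{x})\big|_{\o{x}_j = \mathbf{0} \;\forall j \neq i, m} = \sum_{S \subseteq I_m} \bigl(u_S\big|_{\o{x}_j = \mathbf{0}\;\forall j \neq i}\bigr) \, x_S,
\]
and the left-hand side, by hypothesis (its $i$-th base set restriction), is the $\Pi\Sigma$ circuit $D|_{(\o{x}_j = \mathbf{0}\;\forall j \neq i)}$ further restricted to $\o{x}_m = \mathbf{0}$ is \emph{not} quite what we want — rather, the restriction keeping $\o{x}_i$ and $\o{x}_m$ alive is a circuit whose $\o{x}_m$-coefficients are exactly the $u_S|_{\o{x}_j = \mathbf{0}\;\forall j \neq i}$.

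Next I would use Observation~\ref{obs:subCktHasDeltaDist}: the circuit $D|_{(\o{x}_j = \mathbf{0}\;\forall j \neq i)}$ has $\delta$-distance by hypothesis, and each coefficient $u_S|_{\o{x}_j = \mathbf{0}\;\forall j \neq i}$ (the coefficient of $x_S$, $S \subseteq I_m$) is obtained from it by picking, in each linear factor that involves only $\o{x}_m$-variables, the monomial $x_{S \cap (\text{that factor})}$ together with its coefficient and discarding the rest — equivalently, replacing a subset of the $\o{x}_m$-linear-factors by constants (scalars). Setting a linear factor to a constant is exactly the ``replace a linear factor by $1$'' move of Observation~\ref{obs:subCktHasDeltaDist} up to a scalar, so $u_S|_{\o{x}_j = \mathbf{0}\;\forall j \neq i}$ is (a scalar multiple of) a subcircuit of a $\delta$-distance circuit, hence is itself a $\delta$-distance circuit over the variable set $\o{x}_i$. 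Since this holds for every $i \in [m-1]$, and the $\o{x}_1, \dots, \o{x}_{m-1}$ partition the variables of $u_S$, the polynomial $u_S$ has $(m-1)$-base-sets-$\delta$-distance with base sets $\{\o{x}_1, \dots, \o{x}_{m-1}\}$, which is the claim.

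The one point requiring care — and the main obstacle — is the bookkeeping about which linear factors of a product gate of $C$ survive when we extract $\coeff_{x_S}$ for $S \subseteq I_m$. A multilinear depth-$3$ circuit partitions variables per gate, so the $\o{x}_m$-variables sit in a well-defined set of colors (linear factors) of each partition $\P_t$; extracting the coefficient of $x_S$ amounts, in each such factor $\ell$, to taking the coefficient of $x_{S \cap \variables(\ell)}$, a field element, and multiplying the remaining $\o{x}_1,\dots,\o{x}_{m-1}$-factors as they are. This yields a scalar times a genuine subcircuit (in the sense of Observation~\ref{obs:subCktHasDeltaDist}) of $D|_{(\o{x}_j = \mathbf{0}\;\forall j \neq i)}$, restricted to the coordinates where the scalar is nonzero; on the coordinates where it vanishes the coefficient is $\mathbf{0}$, which is harmless for the distance property. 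So the lemma follows by combining this coefficient-extraction picture with Observation~\ref{obs:subCktHasDeltaDist} applied once per base set $i \in [m-1]$.
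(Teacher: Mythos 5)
Your proof is correct and follows essentially the same route as the paper's: you extract $u_S$ from $D$ by picking, in each linear factor, the part that contributes to $x_S$, and observe that this amounts to removing/shortening wires in the $\Pi\Sigma$ circuit — exactly the subcircuit move of Observation~\ref{obs:subCktHasDeltaDist} — and then restrict to each base set $\o{x}_i$ and invoke the hypothesis. (The paper phrases the same extraction as $u_S = \frac{\partial D}{\partial x_S}\big|_{x_j = 0,\, \forall j \in I_m \setminus S}$, which is equivalent for multilinear polynomials.) One small inaccuracy worth noting: you describe the surviving part of a linear factor $\ell$ touching $\o{x}_m$ as ``a field element,'' but when $S \cap \variables(\ell) = \emptyset$ yet $\ell$ mixes $\o{x}_m$- and $\o{x}_i$-variables inside one color, the contribution is the truncation $\ell\big|_{\o{x}_m = \mathbf{0}}$, a genuine linear form in $\o{x}_i$, not a scalar; this does not affect the conclusion because that truncation is exactly the corresponding factor of $D\big|_{\o{x}_j = \mathbf{0}\;\forall j \neq i}$, so Observation~\ref{obs:subCktHasDeltaDist} still applies verbatim.
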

\begin{proof}
It is easy to see that $u_S$ is an evaluation of 
a derivative of $D$. To be more precise, 
$u_S = \frac{\partial D}{\partial x_S} {|}_{(x_j =0, \; \forall j \in I_m - S)} $,
where $\frac{\partial}{\partial x_S} := \circ_{j \in S} \frac{\partial}{\partial x_j}$.

Now, $D$ is a multilinear polynomial computed by a
$\Pi\Sigma$ circuit over $\H_k(\F)$, hence for each product gate,
a variable occurs in at most one of its input wires. 
If we take the derivative with respect to that variable, the corresponding wire 
vanishes from the circuit. So, the circuit for 
$u_S$ is essentially the same as that of $D$ except
some input wires to the product gates are missing
and also some variables are replaced with $0$.
Hence, $u_S$ when restricted to any variable set $\o{x}_i$,
still has $\delta$ distance, $\forall i \in [m-1]$.
Moreover, the variables from $\o{x}_m$ are not present in $u_S$.
Hence, $u_S$ has $(m-1)$-base-sets-$\delta$-distance
with base sets $\{\lis{\o{x}}{,}{m-1}\}$.
\end{proof}

\begin{lemma}
\label{lem:baseSetConc}
Let a polynomial $\Dx \in \H_k(\F)[\o{x}]$ have 
$m$-base-sets-$\delta$-distance with base sets being
$\{\lis{\o{x}}{,}{m}\}$. Then, 
$D(\o{x}+ \phim(\o{t}))$ 
has $(m(\ell - 1)+1)$-concentration, where 
$\ell = \log  (k+1) + \delta +1$.
\end{lemma}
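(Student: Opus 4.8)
The plan is to induct on the number of base sets $m$. The base case $m=1$ is immediate: then $\phim$ is just the single Kronecker map $\phi_{\d\ell_0}$, $D$ has $\d$-distance, so by Lemma~\ref{lem:dDistlConc} the shifted polynomial $D(\o{x}+\phim(\o{t}))$ is $\ell$-concentrated, and $1\cdot(\ell-1)+1=\ell$. For the inductive step I would peel off the last base set. Write $\o{x}':=\o{x}_1\cup\dots\cup\o{x}_{m-1}$, $\o{t}':=\o{t}_1\cup\dots\cup\o{t}_{m-1}$, and $\widehat D:=D(\o{x}+\phim(\o{t}))$. The idea is to handle the ``$\o{x}_m$-direction'' with the single-base-set result and the $\o{x}'$-directions by the induction hypothesis; the support budget splits as $m(\ell-1)=(\ell-1)+(m-1)(\ell-1)$, at most $\ell-1$ of it spent on $\o{x}_m$-variables and at most $(m-1)(\ell-1)$ on $\o{x}'$-variables.

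First I would shift only $\o{x}_m$: put $D_1:=D(\o{x}_1,\dots,\o{x}_{m-1},\o{x}_m+\phim(\o{t}_m))$ and expand it in the $\o{x}_m$-variables, $D_1=\sum_{S\subseteq I_m}v_S\,x_S$ with $v_S\in\H_k(\F(\o{t}_m))[\o{x}']$. Viewing $D$ as a circuit in the variables $\o{x}_m$ over the field $\F(\o{x}')$, the partitions its product gates induce on $\o{x}_m$ are the restrictions of $\P_1,\dots,\P_k$ to $\o{x}_m$, of distance $\le\d$ by hypothesis; so this is a $\d$-distance circuit, and since the restriction of $\phim$ to $\o{t}_m$ is a legal Kronecker shift, Lemma~\ref{lem:dDistlConc} gives that $D_1$ is $\ell$-concentrated in its $\o{x}_m$-coefficients:
\[
\forall\,S\subseteq I_m:\quad v_S\in\Span_{\F(\o{x}')(\o{t}_m)}\{\,v_T\mid T\subseteq I_m,\ \abs{T}<\ell\,\}.
\]
Applying the remaining shift $\o{x}_i\mapsto\o{x}_i+\phim(\o{t}_i)$ ($i\in[m-1]$) to this relation carries each $v_S$ to some $\o{v}_S$ and the scalars into $\F(\o{t})(\o{x}')$. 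Now Lemma~\ref{lem:polyToCoeffs}, applied over the ground field $\F(\o{t})$ with variable set $\o{x}'$, $D_0$ any $\o{v}_{T_0}$ ($T_0\subseteq I_m$) and the $D_i$ running over $\{\o{v}_T\mid\abs{T}<\ell\}$, converts it into an $\F(\o{t})$-linear dependence among honest coefficient vectors. Using $\coeff_{\widehat D}(x_{T\cup R})=\coeff_{\o{v}_T}(x_R)$ for disjoint $T\subseteq I_m$, $R\subseteq\n\setminus I_m$, this reads: for every $S=S_m\sqcup R\subseteq\n$ with $S_m\subseteq I_m$,
\[
\coeff_{\widehat D}(x_S)\in\Span_{\F(\o{t})}\{\,\coeff_{\widehat D}(x_{T_m\cup R'})\mid T_m\subseteq I_m,\ \abs{T_m}<\ell,\ R'\subseteq\n\setminus I_m\,\}.
\]

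Next I would bound the $\o{x}'$-support $\abs{R'}$. The key point is that shifting $\o{x}_m$ by constants of the field does not change which variable sits in which linear factor, so $D_1$ still has $m$-base-sets-$\d$-distance, now over $\F(\o{t}_m)$, with the same base sets; hence Lemma~\ref{lem:recursive} applied to $D_1$ shows each $v_{T_m}$ has $(m-1)$-base-sets-$\d$-distance over $\F(\o{t}_m)$ with base sets $\o{x}_1,\dots,\o{x}_{m-1}$. Since $\ell$ and the target concentration depend only on $k$ and $\d$, and the restriction of $\phim$ to $\o{t}'$ is a legal $\phi^{m-1}_{\d\ell_0}$, the induction hypothesis (over the field $\F(\o{t}_m)$) applies to each $v_{T_m}$, giving that $\o{v}_{T_m}=v_{T_m}(\o{x}'+\phim(\o{t}'))$ has $((m-1)(\ell-1)+1)$-concentration over $\F(\o{t})$. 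Translating through $\coeff_{\o{v}_{T_m}}(x_{R'})=\coeff_{\widehat D}(x_{T_m\cup R'})$ and combining with the previous display, every $\coeff_{\widehat D}(x_S)$ lies in the $\F(\o{t})$-span of coefficient vectors of support $\le(\ell-1)+(m-1)(\ell-1)=m(\ell-1)$; that is, $\widehat D$ has $(m(\ell-1)+1)$-concentration.

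The step I expect to be the main obstacle is the merging of the two concentrations. The naive approach --- apply the induction hypothesis to the $\o{x}_m$-coefficients $u_S$ of the \emph{unshifted} $D$ and take spans --- fails, since a linear combination of $c$-concentrated polynomials need not be $c$-concentrated. The remedy is to shift $\o{x}_m$ \emph{before} extracting coefficients: then the relevant coefficients $v_{T_m}$ are genuine $(m-1)$-base-sets-$\d$-distance polynomials (Lemma~\ref{lem:recursive} applied to the structurally unchanged $D_1$), so the hypothesis applies to them directly; and Lemma~\ref{lem:polyToCoeffs} supplies the bridge from an $\F(\o{x}')$-linear dependence among polynomials to an $\F(\o{t})$-linear dependence among their coefficient vectors. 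What remains is routine bookkeeping: tracking the field over which each span is taken, and checking that the restrictions of $\phim$ are legal lower-$m$ shift maps.
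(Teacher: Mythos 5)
Your proof is correct and follows essentially the same route as the paper's: induct on $m$, invoke Lemma~\ref{lem:dDistlConc} along the $\o{x}_m$-direction, use Lemma~\ref{lem:polyToCoeffs} to descend from a span of polynomial-valued coefficients over $\F(\o{t})(\o{x}')$ to a span of their $\F(\o{t})$-vector coefficients, and combine with Lemma~\ref{lem:recursive} plus the induction hypothesis to bound the $\o{x}'$-support, yielding $m(\ell-1)+1$. The only cosmetic difference is in the order of shifts: the paper first forms $\tilde D$ by shifting $\o{x}'$ and applies Lemma~\ref{lem:dDistlConc} over $\F(\o{x}',\o{t}')$, whereas you apply it over $\F(\o{x}')$ to the $\o{x}_m$-slice and then transport the resulting span through the $\o{x}'$-shift, which is sound since the shift is a field embedding and hence preserves linear dependences.
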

\begin{proof}
We will prove the theorem by induction on $m$,
i.e.\ we first show the effect of shift in $\o{x}_m$ variables
and next invoke induction for the shift in $\cup_{i=1}^{m-1} \o{x}_i$ variables. 

{\em Base case:} When $m=1$, $\Dx = D(\o{x}_1)$ is simply a polynomial 
with $\delta$-distance.
So, $D(\o{x}_1+\phim(\o{t}_1))$ has $\ell$-concentration from Lemma~\ref{lem:dDistlConc}. 

{\em Induction hypothesis:} Assume that the statement is true for $m-1$ base sets. 

{\em Induction step:}
We will view the polynomial $\Dx$ as a polynomial 
on $\o{x}_m$ variables with coefficients coming 
from $\H_k(\F)[\lis{\o{x}}{,}{m-1}]$.
%Let $\o{x}_i = \{x_{i1}, x_{i2}, \dots, x_{in_i}\}$. 
Let $I_i$ denote the set of indices corresponding to the variables
in $\o{x}_i$.
%Recall that $I_m$ denotes the set of indices corresponding to the variables
%in $\o{x}_m$.
Then we can write $D = \sum_{S \subseteq I_m} u_S x_S$,
where $u_S \in \H_k(\F)[\lis{\o{x}}{,}{m-1}]$ for each $S \subseteq I_m$.
Let us define 
$$D' := 
D(\o{x}_1+\phim(\o{t}_1), \o{x}_2+\phim(\o{t}_2), \dots, \o{x}_m+\phim(\o{t}_m) ).$$
Also define 
$$\tilde{u}_S := u_S(\o{x}_1+\phim(\o{t}_1), \o{x}_2+\phim(\o{t}_2), \dots, \o{x}_{m-1}+\phim(\o{t}_{m-1}) )$$
and 
$$\tilde{D} := \sum_{S \subseteq I_m} \tilde{u}_S x_S.$$
Let $\tilde{\F} := \F(\lis{\o{x}}{,}{m-1},\lis{t}{,}{m-1})$. 
It is easy to see that $\tilde{D}$,
 as a polynomial in $\H_k(\tilde{\F})[\o{x}_m]$,
has $\delta$-distance. 
Hence, from Lemma~\ref{lem:dDistlConc}, 
$D' = \tilde{D}(\o{x}_m + \phim(\o{t}_m))$
has $\ell$-concentration over $\tilde{\F}(t_m)$. 
Note that our working field $\tilde{\F}$ having
variables $\{\lis{\o{x}}{,}{m-1} \}$ and $\{ \lis{t}{,}{m-1}\}$
is not a problem as $t_m$ is a new variable. 
$\ell$-Concentration of $D' = \sum_{S \subseteq I_m} u'_S x_S$ means 
that for any $T \subseteq I_m$,
\begin{equation}
u'_T \in \Span_{\tilde{\F}(t_m)}{\{ u'_S \mid S \subseteq I_m, \; \abs{S} < \ell \}}.
\label{eq:lconc}
\end{equation}
Next, we will show this kind of dependence 
among their coefficient vectors. 
Let us define a field 
$\F_t := \F(\lis{t}{,}{m})$.
Also define $\F' := \tilde{\F}(t_m) = \F_t(\lis{\o{x}}{,}{m-1})$.
Let $I_{[m-1]}$ denote the indices corresponding
to the variables in the set 
$\o{x} \setminus \o{x}_m$.
Now for any $T \subseteq I_m$, the vector 
$u'_T \in \H_k(\F')$ can be seen 
as a polynomial over $\o{x} \setminus \o{x}_m$ variables,
i.e.\ $u'_T = \sum_{U \subseteq I_{[m-1]}} u'_{T,U} x_U $,
where $u'_{T,U} \in \H_k(\F_t)$ for all $U \in I_{[m-1]}$. 
From Lemma~\ref{lem:polyToCoeffs},
we know that dependence among polynomial vectors
implies a dependence among their coefficients over the base field.
So, Equation~(\ref{eq:lconc})
together with Lemma~\ref{lem:polyToCoeffs}
implies that
for any $T \subseteq I_m$, $V \subseteq I_{[m-1]}$,

\begin{equation}
u'_{T,V} \in \Span_{\F_t} \{ u'_{S,U} \mid S \subseteq I_m, \; \abs{S} < \ell, \; U \subseteq I_{[m-1]}  \}.
\label{eq:lowSuppConc}
\end{equation}

Let us view $D'$ again as a polynomial in $\H_k(\F_t)[\lis{\o{x}}{,}{m}]$.
Equation~(\ref{eq:lowSuppConc}) shows that
in $D'$, the rank is concentrated on the coefficients 
corresponding to the monomials which have low-support from $\o{x}_m$.
Next, we will argue that the rank is
 actually concentrated on the coefficients corresponding to the monomials
which have low-support from all the $\o{x}_i$s. 

To show that, we will show low-support concentration in the circuit
$u'_S$ for any $S \in I_m$.
Let us define a polynomial 
$$\widehat{D} := D(\o{x}_1, \o{x}_2, \dots, \o{x}_{m-1}, \o{x}_m+\phim(\o{t}_m) ).$$
Viewing $\widehat{D}$ as a polynomial in $\o{x}_m$ variables,
we can write $\widehat{D} = \sum_{S \subseteq I_m} \hat{u}_S x_S$. 
We know that $D$ has $m$-base-sets-$\delta$-distance.
It is easy to see that shifting a polynomial in some variables 
preserves this property. 
Hence, $\widehat{D}$ has $m$-base-sets-$\delta$-distance
with base sets being $\lis{\o{x}}{,}{m}$.
Now, $\hat{u}_S \in \H_k(\F(t_m))[\lis{\o{x}}{,}{m-1}]$ is 
the coefficient of $x_S$ in $\widehat{D}$.
So, from Lemma~\ref{lem:recursive}, $\hat{u}_S$ 
will have $(m-1)$-base-sets-$\delta$-distance,
with base sets being $\lis{\o{x}}{,}{m-1}$. 
Now, by our induction hypothesis,
$\hat{u}_S$ will have low-support 
concentration after appropriate shifting. 
To be precise, 
$u'_S = \hat{u}_S(\o{x}_1+\phim(\o{t}_1), \o{x}_2+\phim(\o{t}_2), \dots, \o{x}_{m-1}+\phim(\o{t}_{m-1}))$ has $((m-1)(\ell - 1)+1)$-concentration.
This means that for any $S \subseteq I_m$, $V \subseteq I_{[m-1]}$, 
\begin{equation}
u'_{S,V} \in \Span_{\F_t}\{u'_{S,U} \mid U \subseteq I_{[m-1]}, \; \abs{U}<(m-1)(\ell - 1)+1 \}.
\label{eq:concD'_S}
\end{equation}
Combining Equation~(\ref{eq:lowSuppConc}) and (\ref{eq:concD'_S})
we get that for any $T \subseteq I_m$, $V \subseteq I_{[m-1]}$,
\begin{equation}
u'_{T,V} \in \Span_{\F_t} \{ u'_{S,U} \mid S \subseteq I_m, \; \abs{S} < \ell, \; U \subseteq I_{[m-1]}, \; \abs{U}<(m-1)(\ell - 1)+1  \}.
\label{eq:finalConc}
\end{equation}
Now, let us define $I_{[m]}:=I_m \cup I_{[m-1]}$. 
Viewing at $D'$ as a polynomial in variable set $\o{x}$
let us write $D' = \sum_{S \subseteq I_{[m]}} u'_S x_S$. 
From Equation~(\ref{eq:finalConc}), 
we can say that for any $T \subseteq I_{[m]}$, 
\begin{equation*}
u'_{T} \in \Span_{\F_t} \{ u'_{S} \mid S \subseteq I_{[m]}, \; \abs{S} < m(\ell - 1)+1\}.
\end{equation*}
Hence, $D'$ has $(m(\ell - 1)+1)$-concentration.
\end{proof}

%%%%%%%%%%%%%%%%%%%%%%%%%%%%%%%%%%%%%%%%%%%%%%%%%%%%%%%5
%%%%%%   ROABP appendix
%%%%%%%%%%%%%%%%%%%%%%%%%%%%%%%%%%%%%%%%%%%%%%%%%%%%%%%

\section{Building the Proof of Theorem~\ref{thm:ROABPHS}}
\label{app:ROABP}
Our first focus will be on
the matrix product $D(\o{x}) := \prod_{i=1}^{d} D_i$
which belongs to $\F^{w \times w}[\o{x}]$. 
%We will also view at this product
%as a polynomial in $\F^{w^2}[\o{x}]$.
%Like in Section $\ref{sec:deltaDistance}$,
%here also, 
We will show low-support concentration in $D(\o{x})$
over the matrix algebra $\F^{w \times w}$ (which is non-commutative!).
%as a polynomial in $\F^{w^2}[\o{x}]$.
%Our next lemma shows how does low-support concentration 
%imply hitting-sets. 

%\begin{lemma}
%If a polynomial $D(\o{x}) \in \F^{w \times w}[\o{x}]$ has $\ell$-concentration
%then there is a $n^{\ell}$ hitting-set for the polynomial $C(\o{x}) := \o{a}^T D(\o{x}) \o{b}$
%where $\o{a},\o{b} \in \F[\o{x}]^w$.
%\end{lemma}
%\begin{proof}
%Let $D(\o{x}) = \sum_{S \in [n]} D_S x_S$. 
%Then clearly, $C(\o{x}) = \sum_{S \in [n]} C_S x_S$,
%where $C_S = \o{a}^t D_S \o{b}$.
%$\ell$-conc means 
%\end{proof}

\subsection{Low Block-Support} 
Let the matrix product $D(\o{x}) := \prod_{i=1}^{d} D_i$
correspond to an ROABP such that
$D_i \in \F^{w \times w}[\o{x_i}]$ for all $i \in [d]$.
Let $n_i$ be the cardinality of $\o{x}_i$ and
let $n = \sum_{i=1}^d n_i$. 
For an exponent $e = (e_1, e_2, \dots, e_m) \in {\N}^m$, 
and for a set of variables $\o{y}= \{y_1,y_2, \dots, y_m\}$,
${\o{y}}^e$ will denote ${{y}_1}^{e_1} {y_2}^{e_2} \dots {y_m}^{e_m}$.

%Let $I_i$ denote the set of indices corresponding to the variables in 
%$\o{x}_i$ for all $ i \in [d]$
%and let $I := \cup_{i=1}^d I_i$.
Viewing $D_i$ as belonging to $\F^{w \times w}[{\o{x}}_i]$,
one can write $D_i := \sum_{e \in {\N}^{n_i}} D_{ie} {\o{x}}_i^e$.
In particular $D_{i {\bf 0}}$ refers to
the constant part of the polynomial $D_i$. 

For any $e \in \N^n$, 
support of the monomial $\o{x}^e$ is defined as 
$\Supp(e) := \{i \in [n] \mid e_i \neq 0 \}$
and support size is defined as $\supp(e) := \abs{\Supp(e)}$.
In this section, we will also define block-support
of a monomial. 
Any monomial $\o{x}^e$ for $e \in \N^n$, 
can be seen as a product $\prod_{i=1}^d {\o{x}_i}^{e_i}$,
where $e_i \in \N^{n_i}$ for all $i \in [d]$,
such that $e = (e_1, e_2, \dots, e_d)$.
We define {\em block-support} of $e$, $\bS(e)$  as
$\{i \in [d] \mid e_i \neq {\bf 0} \}$ and 
{\em block-support size} of $e$, $\bs(e) = \abs{\bS(e)}$.
 
Next, we will show low block-support concentration 
of $D(\o{x})$ when {\em each $D_{i {\bf 0}}$ is invertible}. 

As each $D_i$ is a polynomial over a different set of variables,
we can easily see that the coefficient of any 
monomial $\xe = \prod_{i=1}^{d} \xei$ in $D(\o{x})$
is 
\begin{equation}
\label{eq:coeffSplits}
D_e := \prod_{i=1}^d D_{ie_i}.
\end{equation}
Now, we will define a relation of {\em parent} and {\em children} 
between these coefficients. 

\begin{definition}
For $e^*,e \in \N^n$, $D_{e^*}$ is called a parent of $D_{e}$
if 
$\exists j \in [d]$,  
$j > \max \bS(e)$ or $j < \min  \bS(e) $,
such that
$\bS(e^*) = \bS(e) \cup \{j\}$
and $e^*_i = e_i$, $\forall i \in [d]$ with $i \neq j$.
\end{definition}

If $D_{e^*}$ is a parent of $D_{e}$ then 
$D_{e}$ is a {\em child} of $D_{e^*}$.
Note that any coefficient has at most two children,
on the other hand it can have many parents.
In the case when $j > \max \{ \bS(e)\}$
 we call $e$, the {\em left} child of $e^*$ and 
in the other case we call it the {\em right} child.

To motivate this definition, 
observe that if $j > \max  \bS(e)$ then by Equation~(\ref{eq:coeffSplits}) 
we can write
$D_{e^*} = D_e A^{-1} B$, where $A := \prod_{i=j}^{d} D_{i{\bf 0}}$
and $B := D_{j e^*_{j}}\prod_{i=j+1}^{d} D_{i{\bf 0}}$.
We will denote the product $A^{-1}B$ as $D_{e^{-1}e^*}$.
Similarly, if $j < \min \{ \bS(e) \}$ then one can write
$D_{e^*} = B A^{-1} D_e $, where $A := \prod_{i=1}^{j} D_{i{\bf 0}}$
and $B := \left(\prod_{i=1}^{j-1} D_{i{\bf 0}} \right) D_{j e^*_{j}}$.
In this case we will denote the product $B A^{-1}$ as $D_{e^*e^{-1}}$.
Note that the invertibility of $D_{i {\bf 0}}$s is crucial here. 

We also define {\em descendants} of a coefficient $D_e$
as $\descend(D_e) := \{D_f \mid f \in \N^n, \; \bS(f) \subset \bS(e) \}$.
%Observe that if $D_e$ is a descendant of $D_{e^*}$
%then $\max\{\bS(e^*)\} \geq \max\{\bS(e)\}$
%and $\min\{\bS(e^*)\} \leq \min\{\bS(e)\}$
%with one of inequalities being strict. 
Now, we will view the coefficients as $\F$-vectors 
%where $k = w^2$
and look at the linear dependence between them. 
The following lemma shows how these dependencies lift to the parent. 

\begin{lemma}[Child to parent]
Let $D_{e^*}$ be a parent of $D_{e}$. 
If $D_{e}$ is linearly dependent on its descendants,
then $D_{e^*}$ is linearly dependent on its descendants. 
\label{lem:parentchild}
\end{lemma}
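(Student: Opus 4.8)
The plan is to take the assumed linear dependence of $D_e$ on its descendants and transport it to $D_{e^*}$ by multiplying, on the appropriate side, with a single fixed matrix, and then to recognize each resulting term as a descendant of $D_{e^*}$. Write the hypothesis as $D_e = \sum_{f} \lambda_f D_f$, where $\lambda_f \in \F$ and the sum ranges over all $f \in \N^n$ with $\bS(f) \subsetneq \bS(e)$. Let $j$ be the index witnessing that $D_{e^*}$ is a parent of $D_e$, so that $\bS(e^*) = \bS(e) \cup \{j\}$, $e^*_i = e_i$ for $i \ne j$, and either $j > \max \bS(e)$ or $j < \min \bS(e)$. I would treat the first case (the left-child case) in detail; the second is symmetric after exchanging left and right multiplication throughout.

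The one genuinely load-bearing point is that the transfer matrix $D_{e^{-1}e^*} = A^{-1}B$, with $A = \prod_{i=j}^{d} D_{i{\bf 0}}$ and $B = D_{j e^*_j}\prod_{i=j+1}^{d} D_{i{\bf 0}}$, depends only on $j$ and on $e^*_j$, not on the rest of $e$. First I would record that every $f$ occurring in the sum satisfies $\bS(f) \subseteq \bS(e)$, hence $\max \bS(f) \le \max \bS(e) < j$, so $f_i = {\bf 0}$ for all $i \ge j$. For such an $f$, define $f^*$ by $f^*_j := e^*_j$ and $f^*_i := f_i$ for $i \ne j$; then $\bS(f^*) = \bS(f) \cup \{j\}$. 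Expanding $D_{f^*} = \prod_{i=1}^d D_{i f^*_i}$, splitting the product at index $j$, using $D_{i f_i} = D_{i {\bf 0}}$ for $i \ge j$, and inserting $A^{-1}A$, a one-line computation gives $D_{f^*} = D_f \cdot A^{-1} B = D_f \cdot D_{e^{-1}e^*}$ — the same matrix $D_{e^{-1}e^*}$ for every $f$. (Invertibility of each $D_{i{\bf 0}}$ is exactly what makes $A^{-1}$ available.) Multiplying the dependency on the right by $D_{e^{-1}e^*}$ then yields
\[
D_{e^*} \;=\; D_e \cdot D_{e^{-1}e^*} \;=\; \sum_{f:\ \bS(f) \subsetneq \bS(e)} \lambda_f \, D_{f^*}.
\]

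To finish, I would check that each $D_{f^*}$ is a descendant of $D_{e^*}$: since $j \notin \bS(e) \supseteq \bS(f)$, both $\bS(f^*) = \bS(f) \cup \{j\}$ and $\bS(e^*) = \bS(e) \cup \{j\}$ adjoin the same fresh index $j$, so $\bS(f^*) \subsetneq \bS(e^*)$ is equivalent to $\bS(f) \subsetneq \bS(e)$, which is precisely the hypothesis on $f$. Hence $D_{e^*} \in \Span_\F \descend(D_{e^*})$. The right-child case ($j < \min \bS(e)$) runs identically with $D_{e^*} = D_{e^*e^{-1}} \cdot D_e$, using $f_i = {\bf 0}$ for $i \le j$. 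I do not expect a real obstacle beyond keeping the left/right bookkeeping straight and confirming the strictness $\bS(f^*) \subsetneq \bS(e^*)$; the crux is simply the observation that the transfer matrix is independent of the part of $e$ below (resp.\ above) index $j$.
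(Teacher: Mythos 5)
Your proof is correct and follows essentially the same route as the paper's: write $D_{e^*} = D_e\,D_{e^{-1}e^*}$, multiply the assumed dependency through by the fixed transfer matrix $D_{e^{-1}e^*}$, and recognize each $D_f\,D_{e^{-1}e^*}$ as the coefficient $D_{f^*}$ with $\bS(f^*)=\bS(f)\cup\{j\}\subsetneq\bS(e^*)$. You merely spell out more explicitly than the paper does the one-line verification that $D_f\,D_{e^{-1}e^*}=D_{f^*}$ (using $f_i=\mathbf 0$ for $i\ge j$), which is a fine addition but not a different argument.
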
 
\begin{proof}
Let $D_{e}$ be the left child of $D_{e^*}$ 
(the other case is similar).
So, we can write 
\begin{equation}
D_{e^*} =  D_{e} D_{e^{-1}e^*}.
\label{eq:parentchild}
\end{equation}
Let the dependence of $D_e$ on its descendants be the following: 
$$
D_{e} = \sum_{\substack{f \\ \bS(f) \subset \bS(e)}} \alpha_f D_{f}.
$$
Using Equation~(\ref{eq:parentchild}) we can write,
$$
D_{e^*} = \sum_{\substack{f \\ \bS(f) \subset \bS(e)}} \alpha_f D_{f} D_{e^{-1}e^*}.
$$
Now, we just need to show that
for any $D_f$ with $\bS(f) \subset \bS(e)$,
 $D_{f} D_{e^{-1}e^*}$ 
is a valid coefficient of some monomial in $\Dx$ and 
also that it is a descendant of $D_{e^*}$.
Recall that $D_{e^{-1}e^*} = A^{-1}B$, 
where $A := \prod_{i=j}^{d} D_{i{\bf 0}}$
and $B := D_{j e^*_{j}}\prod_{i=j+1}^{d} D_{i{\bf 0}}$
and $\bS(e^*) = \bS(e) \cup \{j\}$.
We know that $j > \max\{\bS(e)\}$. 
Hence, $j > \max\{\bS(f)\}$ as $\bS(f) \subset \bS(e)$.
So, it is clear that $D_{f} D_{e^{-1}e^*}$
is the coefficient of ${\o{x}}^{f^*} := {\o{x}}^f {\o{x}_{j}}^{e^*_{j}}$.
It is easy to see that $\bS(f^*) = \bS(f) \cup \{j\} \subset \bS(e^*)$.
Hence, $D_{f^*}=D_{f} D_{e^{-1}e^*} $ is a descendant of $D_{e^*}$.

\end{proof}

Note that, the set of descendants of a coefficient 
strictly contains its children, grand-children, etc.
Clearly, if the descendants are more than $\dim_{\F}\F^{w \times w}$,
then there will be a linear dependence among them. So, 

\begin{lemma}
Any coefficient $D_e$, with $\bs(e) = w^2$, $\F$-linearly depends on
its descendants. 
\label{lem:blockSupportk}
\end{lemma}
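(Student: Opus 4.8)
The plan is to exhibit, for a given $e \in \N^n$ with $\bs(e) = w^2$, an explicit chain of $w^2+1$ coefficients lying ``below'' $D_e$, use that $\F^{w\times w}$ has dimension $w^2$ over $\F$ to force a linear relation among them, and then push that relation up the chain with Lemma~\ref{lem:parentchild}. Concretely, write $\bS(e) = \{i_1 < i_2 < \dots < i_{w^2}\}$ and, for $0 \le t \le w^2$, let $e^{(t)} \in \N^n$ be the exponent that agrees with $e$ on the blocks $i_1, \dots, i_t$ and is $\mathbf 0$ on every other block; thus $\bS(e^{(t)}) = \{i_1, \dots, i_t\}$, with $e^{(0)} = \mathbf 0$ and $e^{(w^2)} = e$. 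First I would check that $D_{e^{(t)}}$ is a parent of $D_{e^{(t-1)}}$ for each $t \in [w^2]$: taking $j = i_t$ we have $j > \max \bS(e^{(t-1)}) = i_{t-1}$ (vacuously for $t=1$), $\bS(e^{(t)}) = \bS(e^{(t-1)}) \cup \{j\}$, and the two exponents agree off block $j$; this is exactly the definition, so in fact $e^{(t-1)}$ is the left child of $e^{(t)}$.

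Next, the $w^2+1$ matrices $D_{e^{(0)}}, D_{e^{(1)}}, \dots, D_{e^{(w^2)}}$ all lie in $\F^{w\times w}$, a vector space of dimension $w^2$ over $\F$, so there is a nontrivial $\F$-linear relation $\sum_{t=0}^{w^2} \alpha_t D_{e^{(t)}} = 0$. Let $s$ be the largest index with $\alpha_s \ne 0$. Since $D_{e^{(0)}} = \prod_{i=1}^d D_{i\mathbf 0}$ is a product of invertible matrices, it is invertible and hence nonzero, so $s \ge 1$. Therefore $D_{e^{(s)}} = -\alpha_s^{-1}\sum_{t<s} \alpha_t D_{e^{(t)}}$, and every $D_{e^{(t)}}$ with $t<s$ has $\bS(e^{(t)}) = \{i_1,\dots,i_t\} \subsetneq \{i_1,\dots,i_s\} = \bS(e^{(s)})$, i.e.\ it is a descendant of $D_{e^{(s)}}$. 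Thus $D_{e^{(s)}}$ is $\F$-linearly dependent on its descendants.

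Finally I would iterate Lemma~\ref{lem:parentchild} along the chain: $D_{e^{(s+1)}}$ is a parent of $D_{e^{(s)}}$, so from the dependence just obtained, $D_{e^{(s+1)}}$ depends on its descendants; repeating for $t = s+1, s+2, \dots, w^2-1$ shows $D_{e^{(w^2)}} = D_e$ depends on its descendants, as required (if $s = w^2$ the last displayed step already is the conclusion). I do not expect a genuine obstacle here: the argument is a short counting-plus-telescoping of tools already in place. The only points needing care are getting the direction of the parent/child relation right in the chain and invoking the invertibility of $D_{\mathbf 0}$ to rule out the degenerate relation supported only at $t=0$ (which is what guarantees $s \ge 1$, hence that $D_{e^{(s)}}$ really has descendants to depend on).
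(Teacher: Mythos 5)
Your proof is correct and follows essentially the same strategy as the paper: build a chain of $w^2+1$ coefficients from $D_{\mathbf 0}$ up to $D_e$ via the child relation, extract a linear dependence using $\dim_\F \F^{w\times w} = w^2$, locate the critical index in the chain, and lift the dependence to $D_e$ by iterating Lemma~\ref{lem:parentchild}. The one cosmetic difference is how nonvanishing is handled: the paper first proves that a nonzero coefficient has nonzero children and propagates nonzeroness down the chain from $D_e$ (treating $D_e=0$ as a trivial separate case), whereas you anchor on the invertibility of $D_{e^{(0)}}=\prod_i D_{i\mathbf 0}$ to guarantee $s\ge 1$; your variant is marginally cleaner and absorbs the degenerate case automatically.
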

\begin{proof}
First of all, we show that if a 
coefficient $D_{f^*}$ is nonzero then so are
its children. 
Let us consider its left child $D_f$ (the other case is similar).
Recall that we can write $D_{f^*} = D_f D_{f^{-1}f^*}$.
Hence if $D_f$ is zero, so is $D_{f^*}$. 

Let $k:= w^2$.
Now, consider a chain of coefficients
$D_{e^0}, D_{e^1}, \dots, D_{e^k}=D_e$,
such that for any $i \in [k]$, 
$D_{e^{i-1}}$ is a child of $D_{e^{i}}$. 
Clearly, $\bs(e^i) = i$ for $0 \leq i \leq k$. 
All the vectors in this chain are nonzero
because of our above argument, as $D_e$ is nonzero
(The case of $D_e = 0$ is trivial).
These $k+1$ vectors lie in $\F^{k}$, 
hence, there exists an $i \in [k]$
such that $D_{e^{i}}$ is linear dependent on 
$\{ D_{e^0}, \dots, D_{e^{i-1}}\}$. 
As descendants include children, grand-children, etc.,
we can say that $D_{e^i}$ is linearly dependent on
its descendants. 
Now, by applying Lemma~\ref{lem:parentchild}
repeatedly, we conclude $D_{e^{k}} = D_e$ 
is dependent on its descendants. 
\end{proof}

Note that, for a coefficient $D_e$ with $\bs(e) = i$,
its descendants have block-support strictly smaller than
$i$.
So, Lemma~\ref{lem:blockSupportk} means that coefficients
with block-support $w^2$ depend on coefficients with block-support
$\leq w^2-1$.
Now, we show $w^2$-block-support-concentration in $\Dx$,
i.e.\ any coefficient is   
dependent on the coefficients with block-support $\leq w^2-1$.

\begin{lemma}[$w^2$-Block-concentration]
\label{lem:bSConc}
Let $\Dx = \prod_{i=1}^{d} D_i (\o{x}_i) \in \F^{w \times w}[\o{x}]$
 be a polynomial
with $D_{i\bf{0}}$ being invertible for each $i \in [d]$.
Then $\Dx$ has $w^2$-block-support concentration. 
%For any coefficient $D_e$, 
%$$D_e \in \Span\{ D_f \mid f \in \N^n, \; \bs(f) \leq k-1 \}$$
\end{lemma}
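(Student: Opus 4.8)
The plan is to package Lemma~\ref{lem:blockSupportk} and Lemma~\ref{lem:parentchild} into a single induction on block-support. Both ``a coefficient has block-support $w^2$'' and ``a coefficient is a descendant of another one'' bound block-support strictly from above, so no subtlety beyond bookkeeping should be needed.

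\emph{Step 1: every coefficient of block-support $\ge w^2$ depends on its descendants.} First I would strengthen Lemma~\ref{lem:blockSupportk} to: for every $e \in \N^n$ with $\bs(e) \ge w^2$, $D_e$ is $\F$-linearly dependent on $\descend(D_e)$. Fix such an $e$ with $\bs(e) = r$. Starting from $e$, repeatedly delete the maximal-index block of the current block-support; after $r - w^2$ steps this produces a chain $D_e = D_{e^{(r)}}, D_{e^{(r-1)}}, \dots, D_{e^{(w^2)}}$ with $\bs(e^{(i)}) = i$, in which each $D_{e^{(i-1)}}$ is a child of $D_{e^{(i)}}$: deleting the top block $j$ of $\bS(e^{(i)})$ gives $\bS(e^{(i-1)}) = \bS(e^{(i)}) \setminus \{j\}$ with $j > \max \bS(e^{(i-1)})$, which is exactly the parent--child relation, and such a $j$ exists since the block-support is nonempty throughout ($i \ge w^2 \ge 1$). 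The bottom coefficient $D_{e^{(w^2)}}$ has block-support exactly $w^2$, so by Lemma~\ref{lem:blockSupportk} it depends on its descendants; climbing the chain and invoking Lemma~\ref{lem:parentchild} at each step, $D_{e^{(w^2+1)}}, \dots, D_{e^{(r)}} = D_e$ in turn depend on their descendants.

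\emph{Step 2: conclude.} I would then prove $w^2$-block-support concentration by strong induction on $\bs(e)$. If $\bs(e) \le w^2 - 1$ there is nothing to prove. If $\bs(e) \ge w^2$, Step~1 expresses $D_e$ as an $\F$-combination of elements of $\descend(D_e)$; by Eq.~(\ref{eq:coeffSplits}) each such element is itself a coefficient of $D(\o{x})$, and it has block-support strictly smaller than $\bs(e)$ because its block-support is a proper subset of $\bS(e)$. By the induction hypothesis each of these lies in the span of coefficients of block-support $\le w^2 - 1$, hence so does $D_e$. This is precisely $w^2$-block-support concentration of $D(\o{x})$.

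\emph{Where the work is.} There is no genuine obstacle left inside this lemma: the substance lives in Lemmas~\ref{lem:parentchild} and~\ref{lem:blockSupportk} (and, upstream, in the invertibility of each $D_{i\mathbf{0}}$, which is what makes the child-to-parent division $D_{e^*} = D_e D_{e^{-1}e^*}$ legal). The only points that require care are that the descending child-chain in Step~1 is always well defined (every coefficient with nonempty block-support has a child, obtained by zeroing out its maximal block), and that the descendant relation driving the induction in Step~2 strictly decreases block-support while staying within the set of coefficients of $D(\o{x})$. When this lemma is later combined with low-support concentration inside each factor $D_i$, $w^2$-block-support concentration of $D(\o{x})$ upgrades to ordinary low-support concentration, but that step is handled separately.
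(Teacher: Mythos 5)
Your proof is correct and takes essentially the same route as the paper: an induction on block-support built on Lemma~\ref{lem:blockSupportk} (the $\bs = w^2$ base case) and Lemma~\ref{lem:parentchild} (lifting dependence on descendants from a child to its parent). The only organizational difference is that you first extend the base case to all $\bs(e) \ge w^2$ by an explicit chain of left-children and lifts, and then run a separate strong induction to drop down to $\bs \le w^2 - 1$, whereas the paper folds both into a single induction (using the hypothesis once to get ``child depends on descendants,'' lifting once with Lemma~\ref{lem:parentchild}, then applying the hypothesis again to each descendant) — the two versions are interchangeable.
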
 
\begin{proof}
Let $k:= w^2$. We will actually show that
for any coefficient $D_e$ with $\bs(e) \geq k$ (the case when $\bs(e) < k$ is trivial), 
$$D_e \in \Span\{ D_f \mid f \in \N^n, \; \bS(f) \subset \bS(e) \text{ and } \bs(f) \leq k-1 \}.$$
We will prove the statement by induction on
the block-support of $D_e$, $\bs(e)$.

{Base case:} When $\bs(e)=k$, it has been already shown in 
Lemma~\ref{lem:blockSupportk}.

{\em Induction Hypothesis:} For any coefficient $D_e$
with $\bs(e) = i-1$ for $i-1 \geq k$, 
$$D_e \in \Span\{ D_f \mid f \in \N^n, \; \bS(f) \subset \bS(e) \text{ and } \bs(f) \leq k-1 \}.$$

{\em Induction step:}
Let us take a coefficient $D_e$ with $\bs(e) = i$. 
Consider any child of $D_e$, denoted by $D_{e'}$. 
As $\bs(e')=i-1$, by our induction hypothesis, 
$D_{e'}$ is linearly dependent on its descendants. 
So, from Lemma~\ref{lem:parentchild}, 
$D_{e}$ is linearly dependent on its descendants. 
In other words, 
\begin{equation}
D_e \in \Span\{ D_f \mid \bS(f) \subset \bS(e) \text{ and } \bs(f) \leq i-1 \}.
\label{eq:De}
\end{equation}
Again, by our induction hypothesis, for any coefficient $D_f$, 
with $\bs(f) \leq i-1$, 
\begin{equation}
D_f \in \Span\{ D_g \mid \bS(g) \subset \bS(f) \text{ and } \bs(g) \leq k-1 \}.
\label{eq:Df}
\end{equation}
Combining Equations (\ref{eq:De}) and (\ref{eq:Df}),
we get
$$D_e \in \Span\{ D_g \mid \bS(g) \subset \bS(e) \text{ and } \bs(g) \leq k-1 \}.$$
\end{proof}

Now, we show low block-support concentration in
the actual polynomial computed by an ROABP, i.e.\
in $C(\o{x}) = D_0^T (\prod_{i=1}^d D_i) D_{d+1}$, 
where $D_0,D_{d+1} \in F^w[\o{x}]$.

\begin{corollary}
Let $\o{x} = \o{x}_0 \sqcup \o{x}_1 \sqcup \dotsm \sqcup \o{x}_{d+1}$.
Let $D(\o{x}) \in \F^{w \times w}[\o{x}_1, \dots, \o{x}_d]$
 be a polynomial described in Lemma~\ref{lem:bSConc}.
Let $C(\o{x}) = D_0^T D D_{d+1} \in \F[\o{x}]$
 be a polynomial
with $D_0 \in \F^w[\o{x}_0]$, $D_{d+1} \in \F^w[\o{x}_{d+1}]$.
%$D_i \in \F^{w \times w}[\o{x}_i]$
%and $D_{i\bf{0}}$ being invertible for each $i \in [d]$.
Then $C(\o{x})$ has $(w^2+2)$-block-support concentration.  
\label{cor:bSConc}
\end{corollary}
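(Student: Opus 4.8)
The plan is to deduce Corollary~\ref{cor:bSConc} from Lemma~\ref{lem:bSConc} by ``attaching'' the two boundary vectors $D_0$ and $D_{d+1}$ to a block-support dependence that already holds inside the middle product $\Dx$. Since $\o{x}_0,\o{x}_1,\dots,\o{x}_{d+1}$ are pairwise disjoint variable sets, every monomial of $C$ factors uniquely as $\o{x}_0^{a}\,\o{x}^e\,\o{x}_{d+1}^{b}$, where $\o{x}^e$ is a monomial in $\o{x}_1,\dots,\o{x}_d$ and $a\in\N^{n_0}$, $b\in\N^{n_{d+1}}$. Writing $D_{0,a}$ and $D_{d+1,b}$ for the corresponding coefficient vectors of $D_0$ and $D_{d+1}$, and $D_e$ for the coefficient of $\o{x}^e$ in $\Dx$, the same argument that yields Equation~(\ref{eq:coeffSplits}) gives
\[
\coeff_C\!\left(\o{x}_0^{a}\,\o{x}^e\,\o{x}_{d+1}^{b}\right)\;=\;D_{0,a}^{T}\,D_e\,D_{d+1,b}\;\in\;\F .
\]

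Next I would invoke Lemma~\ref{lem:bSConc}, using the slightly stronger form that falls out of its proof: for every $e\in\N^n$,
\[
D_e\;\in\;\Span_{\F}\bigl\{\,D_f \ \bigm|\ f\in\N^n,\ \bS(f)\subseteq\bS(e),\ \bs(f)\le w^2-1\,\bigr\}
\]
(for $\bs(e)\le w^2-1$ this is trivial, taking $f=e$). Fixing such a dependence $D_e=\sum_f \alpha_f D_f$ with $\alpha_f\in\F$ and left-multiplying by $D_{0,a}^{T}$, right-multiplying by $D_{d+1,b}$ --- operations that preserve $\F$-linear combinations of matrices --- I obtain
\[
\coeff_C\!\left(\o{x}_0^{a}\,\o{x}^e\,\o{x}_{d+1}^{b}\right)\;=\;\sum_f \alpha_f\,D_{0,a}^{T}D_fD_{d+1,b}\;=\;\sum_f \alpha_f\,\coeff_C\!\left(\o{x}_0^{a}\,\o{x}^f\,\o{x}_{d+1}^{b}\right).
\]
For each $f$ occurring here, the monomial $\o{x}_0^{a}\,\o{x}^f\,\o{x}_{d+1}^{b}$ has block-support contained in $\{0\}\cup\bS(f)\cup\{d+1\}$, so its block-support size is at most $\bs(f)+2\le(w^2-1)+2=w^2+1<w^2+2$. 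Hence every coefficient of $C$ is an $\F$-linear combination of coefficients of $C$ with block-support size $<w^2+2$, i.e.\ $C$ has $(w^2+2)$-block-support concentration.

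This argument is essentially bookkeeping, so I do not anticipate a serious obstacle; the one point that needs care is to notice that, unlike in Lemma~\ref{lem:parentchild}, here we are not \emph{lifting} a dependence through a division (which would require an invertible constant term) but merely \emph{transporting} an already-established dependence by fixed left/right multiplications. Consequently no invertibility of $D_{0,a}$ or $D_{d+1,b}$ is required, which is exactly why the endpoint matrices $D_0,D_{d+1}$ in Theorem~\ref{thm:ROABPHS} are allowed to be arbitrary, while only the inner factors $D_1,\dots,D_d$ carry the invertibility hypothesis inherited via Lemma~\ref{lem:bSConc}.
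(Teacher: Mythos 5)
Your argument is correct and is essentially the paper's own proof: both expand $\coeff_C(\o{x}_0^{a}\o{x}^e\o{x}_{d+1}^{b})=D_{0,a}^{T}D_e D_{d+1,b}$, invoke the $w^2$-block-support concentration of $D$ from Lemma~\ref{lem:bSConc}, and transport the resulting $\F$-linear dependence by fixed left/right multiplications with the boundary vectors, which raises block-support by at most two. The refinement $\bS(f)\subseteq\bS(e)$ you invoke is true but unnecessary for the count $\bs(f)+2\le w^2+1$, and your closing observation about why the boundary vectors need no invertibility is a correct clarification the paper leaves implicit.
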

\begin{proof}
Let $k:=w^2$.
Lemma~\ref{lem:bSConc} shows that $\Dx$
% = \prod_{i=1}^{d} D_i (\o{x}_i)$
 has $k$-block-support concentration. 
The coefficient of $\o{x}^e$ in $C$ 
is $C_e := D_{0e_0} \prod_{i=1}^d D_{ie_i} D_{(d+1)e_{d+1}}$,
where $e = (e_0, e_1, \dots,e_d, e_{d+1})$. 
Let $D_e := \prod_{i=1}^d D_{ie_i}$.
By $k$-block-support concentration of $\Dx$,
$$D_e \in \Span\{ D_f \mid \bs(f) \leq k-1 \}.$$
Which implies, 
$$C_e \in \Span\{ D_{0e_0} D_f D_{(d+1)e_{d+1}} \mid \bs(f) \leq k-1 \}.$$ 
Clearly, $D_{0e_0} D_f D_{(d+1)e_{d+1}}$ is the coefficient of the monomial 
$x_0^{e_0} x_1^{f_1} \dotsm x_d^{f_d} x_{d+1}^{e_{d+1}}$. 
Hence, $C_e \in \Span\{  C_f \mid \bs(f) \leq k+1 \}$.
\end{proof}

\subsection{Low-support concentration}
Now, we argue that if $D(\o{x}) = \prod_{i=1}^{d} D_i (\o{x}_i)$ has
low block-support concentration and moreover if each $D_i$ has 
low-support concentration then $\Dx$ has an appropriate low-support concentration. 

\begin{lemma}[Composition]
If a polynomial 
$D(\o{x}) = \prod_{i=1}^{d} D_i (\o{x}_i) \in \F^{w \times w}[\o{x}]$ 
has $\ell$-block-support concentration and
$D_i(\o{x}_i)$ has $\ell'$-support concentration for all $i \in [d]$ then 
$\Dx$ has $\ell \ell'$-support concentration.  
\label{lem:ll'}
\end{lemma}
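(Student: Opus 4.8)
The plan is to derive $\ell\ell'$-support concentration of $\Dx$ by \emph{composing} the two hypotheses: the $\ell$-block-support concentration collapses $\Dx$ onto the coefficients whose block-support is at most $\ell-1$, and then, inside each such coefficient, the $\ell'$-support concentration of the individual factors $D_i$ collapses each of those (at most $\ell-1$) nontrivial factors onto its own low-support coefficients. What makes this go through is that, by Equation~(\ref{eq:coeffSplits}), the coefficient of a monomial in $\Dx = \prod_{i=1}^{d} D_i(\o{x}_i)$ is the fixed-order matrix product $D_e = \prod_{i=1}^d D_{ie_i}$ of the factor-coefficients, and matrix multiplication over $\F^{w\times w}$ is multilinear over $\F$ in its $d$ arguments; hence replacing a factor in one slot by an $\F$-linear combination produces an $\F$-linear combination of products, leaving the order of the factors untouched.

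Concretely, I would fix an arbitrary coefficient $D_e$ with $e=(e_1,\dots,e_d)\in\N^n$. By $\ell$-block-support concentration, $D_e\in\Span\{D_f\mid f\in\N^n,\ \bs(f)<\ell\}$, so it suffices to show that each such $D_f$ lies in the $\F$-span of coefficients of support $<\ell\ell'$. Write $D_f=\prod_{i=1}^d D_{if_i}$, where $f_i=\mathbf 0$ for $i\notin\bS(f)$ so that the $i$-th factor is the constant term $D_{i\mathbf 0}$ (support $0$). For each $i\in\bS(f)$, the $\ell'$-support concentration of $D_i$ expresses $D_{if_i}$ as an $\F$-linear combination $\sum_{g_i:\,\supp(g_i)<\ell'}\beta^{(i)}_{g_i}D_{ig_i}$, the sum over exponent vectors $g_i$ of $\o{x}_i$. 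Substituting these expansions into $D_f=\prod_{i=1}^d D_{if_i}$ and expanding by multilinearity of the matrix product, $D_f$ becomes an $\F$-linear combination of products $\prod_{i=1}^d D_{ig_i}$ in which $\supp(g_i)<\ell'$ for $i\in\bS(f)$ and $g_i=\mathbf 0$ otherwise; by Equation~(\ref{eq:coeffSplits}) again, each such product equals the coefficient $D_g$ of the monomial $\prod_{i\in\bS(f)}\o{x}_i^{g_i}$ in $\Dx$.

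It then only remains to count supports: for the $g$ above, $\Supp(g)=\bigcup_{i\in\bS(f)}\Supp(g_i)$, so $\supp(g)=\sum_{i\in\bS(f)}\supp(g_i)<\bs(f)\,\ell'\le(\ell-1)\ell'<\ell\ell'$ (the case $\bs(f)=0$ being trivial). Hence $D_f\in\Span\{D_g\mid\supp(g)<\ell\ell'\}$, and therefore so is $D_e$, which is exactly $\ell\ell'$-support concentration of $\Dx$. The argument is essentially bookkeeping; the only step needing a moment's care is the expansion of the fixed-order product of linear combinations, where one uses that passing to the $\F$-span commutes with the (non-commutative) matrix product precisely because the latter is multilinear over the base field. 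I do not anticipate a genuine obstacle.
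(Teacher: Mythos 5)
Your proof is correct and follows essentially the same route as the paper's: decompose coefficients via Equation~(\ref{eq:coeffSplits}), expand each nontrivial factor using $\ell'$-support concentration of $D_i$ and multilinearity of the matrix product, and compose with $\ell$-block-support concentration, finishing by a support count. The paper just states the intermediate bound slightly more sharply as $((\ell-1)(\ell'-1)+1)$-concentration before relaxing to $\ell\ell'$; otherwise the arguments coincide.
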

\begin{proof}
Recall that as $D_i$'s are polynomials over disjoint sets of variables,
any coefficient $D_f$ in $\Dx$ can be written as
$\prod_{i=1}^d D_{if_i}$, where $f = (f_1, f_2, \dots, f_d)$ and
$D_{if_i}$ is the coefficient corresponding to the monomial $\o{x}_i^{f_i}$
in $D_i$.
From the definition of $\bS(f)$, we know that
$f_i = 0$, for any $i \notin \bS(f)$.
%We can also write 
%$D_f = \prod_{i \in \bs(f)} D_{if_i} \prod_{i \not\in \bs(f)} D_{i\bf{0}}$. 
From $\ell'$-support concentration of $D_i(\o{x}_i)$, 
we know that for any coefficient $D_{if_i}$,
$$D_{if_i} \in \Span \{D_{ig_i} \mid g_i \in \N^{n_i}, \; \supp(g_i) \leq \ell'-1 \}.$$
Using this, we can write
%$$D_f \in \Span \{ \prod_{i \in \bs(f)} D_{ig_i} \prod_{i \not\in \bs(f)} D_{i\bf{0}} \mid \forall i \in \bs(f), \; g_i \in \N^{n_i} \text{ and } \supp(g_i) \leq \ell'-1  \} $$
$$D_f \in \Span \left\{ \prod_{i = 1}^d D_{ig_i} \mid  g_i \in \N^{n_i}, \; \supp(g_i) \leq \ell'-1, \; \forall i \in [d] \text{ and } g_i={\bf 0}, \; \forall i \notin \bS(f)  \right\} .$$
Note that the product $\prod_{i \in [d]} D_{ig_i}$ 
will be the coefficient of a monomial $\o{x}^g$ such that $\bS(g) \subseteq \bS(f)$
because $g_i={\bf 0}, \; \forall i \notin \bS(f)$.
Clearly, if $\supp(g_i) \leq \ell'-1, \; \forall i \in \bS(f)$ then $\supp(g) \leq (\ell'-1)\bs(f)$.
So, one can write
\begin{equation}
D_f \in \Span\{ D_g \mid g \in \N^n, \; \supp(g) \leq (\ell'-1)\bs(f)  \} .
\label{eq:span-bsf}
\end{equation}
From $\ell$-block-support concentration of $\Dx$,
 we know that for any 
coefficient $D_e$ of $\Dx$, 
\begin{equation}
D_e \in \Span\{ D_f \mid f \in \N^n, \; \bs(f) \leq \ell - 1 \}.
\label{eq:span-l-1}
\end{equation}
Using Equations (\ref{eq:span-bsf}) and (\ref{eq:span-l-1}),
we can write for any 
coefficient $D_e$ of $\Dx$, 
$$D_e \in \Span\{ D_g \mid g \in \N^n, \; \supp(g) \leq (\ell'-1)(\ell - 1) \}.$$
Hence, $\Dx$ has $((\ell - 1)(\ell'-1)+1)$-support concentration and
hence $\ell \ell'$-support concentration.
\end{proof}

Now, we just need to show low-support concentration 
of each $D_i$. 
To achieve that we will use some efficient shift.
Shifting will serve a dual purpose. 
Recall that for Lemma~\ref{lem:bSConc}, we need
invertibility of the constant term in $D_i$, i.e.\ $D_{i\bf{0}}$, 
for all $i \in [d]$.
In case $D_{i \bf{0}}$ is not invertible for some $i \in [d]$, 
after a shift it might become invertible, 
since $D_i$ is assumed invertible in the sparse-invertible model. 
For the shifted polynomial $D'_i(\o{x}_i) := D_i(\o{x}_i + \phi(\o{t}_i))$,
its constant term $D'_{i\bf{0}}$ is just an evaluation of $D_i(\o{x})$, 
i.e.\ $D_i|_{\o{x}_i = \phi(\o{t}_i)}$.
%Hence, if $\det(D_i({\o{x}}_i)) = 0$ (viewing $D_i({\o{x}}_i)$ as 
%an element in $(\F[\o{x}_i])^{w \times w}$),
% then $\det(D'_{i{\bf 0}}) = 0$.
%This means that if $\det(D_i({\o{x}}_i)) = 0$, then even after shifting, $D'_{i{\bf 0}}$
%cannot become invertible. 
%So, we have to assume that $D_i(\o{x}_i)$ is an invertible matrix
%for all $i \in [d]$. 
%In other words, the product $D(\o{x}) = \prod_{i=1}^d D_i(\o{x}_i)$ is invertible. 
Now, we want a shift for $D_i$ which would ensure that 
$\det(D'_{i{\bf 0}}) \neq 0$ and that $D'_i$ has low-support concentration. 
For both the goals we use the sparsity of the polynomial. 

For a polynomial $D$, let its sparsity set 
$\Sp(D)$ be the set of monomials in $D$
with nonzero coefficients 
and $\sp(D)$ be its
sparsity, i.e.\ $\sp(D) = \abs{\Sp(D)}$.
Let,
for a polynomial $D(\o{x}) \in \F^{w \times w}[\o{x}]$,
$S = \Sp(D)$ and $s = \abs{S}$. 
Then it is easy to see that 
for its determinant polynomial 
$\Sp(\det(D)) \subseteq S^w$,
where $S^w := \{ m_1 m_2 \dotsm m_w \mid m_i \in S, \; \forall i \in [w]   \}$.
Hence $\sp(\det(D)) \leq s^w$.
Now, suppose $\det(D) \neq 0$. 
We will describe an efficient shift which will make the
constant term, of the shifted polynomial, invertible. 
Let $\phi \colon \o{t} \to \{t^i\}_{i=0}^{\infty}$ be a monomial map
which separates all the monomials in $\det(D(\o{t}))$,
i.e.\
for any two $\o{t}^{e_1},{\o{t}}^{e_2} \in \Sp(\det(D(\o{t})))$,
$\phi(\o{t}^{e_1}) \neq \phi(\o{t}^{e_2})$. 
It is easy to see that if we shift each $x_i$ by $\phi(t_i)$
to get $D'(\o{x}) = D(\o{x} + \phi(\o{t}))$ then 
$\det(D'_{i{\bf 0}}) = \det(D|_{\o{x} = \phi(\o{t})}) \neq 0$.

For sparse polynomials, Agrawal et al.\ \cite[Lemma 16]{ASS13} have 
given an efficient shift to achieve low-support concentration. 
Here, we rewrite their lemma. % in some different words.
The map $\phi_{\ell'} \colon \o{t} \to \{t^i\}_{i=0}^{\infty}$ is 
said to be separating $\ell'$-support monomials of degree $\delta$,
if for any two monomials $\o{t}^{e_1}$ and $\o{t}^{e_2}$ which have
support bounded by $\ell'$ and degree bounded by $\delta$, 
$\phi_{\ell'}(\o{t}^{e_1}) \neq \phi_{\ell'}(\o{t}^{e_2})$. 
For a polynomial $\Dx$, let $\mu(D)$ be the maximum support 
of a monomial in $D$, i.e.\ $\mu(D) := \displaystyle\max_{\o{x}^e \in \Sp(D)} \supp(e)$.

\begin{lemma}[\cite{ASS13}]
\label{lem:sparse}
Let $V$ be a $\F$-vector space of dimension $k$.
Let $D(\o{x}) \in V[\o{x}]$ be a polynomial with
degree bound $\delta$. 
Let $\ell := 1 + 2 \min\{ \ceil{\log (k \cdot \sp(D))}, \mu(D) \}$ and $\phi_{\ell}$ be a
monomial map separating $\ell$-support monomials of degree $\delta$.
Then $D(\o{x}+\phi_{\ell}(\o{t}))$ has $\ell$-concentration
over $\F(t)$.
\end{lemma}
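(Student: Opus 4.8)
The plan is to prove Lemma~\ref{lem:sparse} by the same duality-plus-transfer-matrix strategy used for the multilinear case (Lemmas~\ref{lem:rowComb}--\ref{lem:invertible}), adapted to sparse, degree-bounded polynomials. Write $D=\sum_{b}u_b\,\o x^b$ with $u_b\in V$, and $D':=D(\o x+\phi_\ell(\o t))$, whose coefficients are $u'_a=\sum_{b\ge a}\binom{b}{a}u_b\,\phi_\ell(\o t)^{b-a}$. First the reductions. If the minimum defining $\ell$ is $\mu(D)$, then $\ell=1+2\mu(D)>\mu(D)$, so every monomial of $D$ already has support $<\ell$ and $D$ is $\ell$-concentrated without any shift; so assume $\ell=1+2\lceil\log(k\cdot\sp(D))\rceil$ and put $L:=\lceil\log(k\cdot\sp(D))\rceil$. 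Since $\o x\mapsto\o x+\phi_\ell(\o t)$ is an invertible substitution over $\F(t)$, we have $\Span_{\F(t)}\{u'_a\}=\F(t)\otimes_\F\Span_\F\{u_b\}$, of dimension $r:=\rk_\F\{u_b\}\le k$; hence it suffices to show $\rk_{\F(t)}\{u'_c:\supp(c)<\ell\}=r$.

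Next, the dual reformulation. Suppose $D'$ is \emph{not} $\ell$-concentrated. Then there is a linear functional $\lambda\colon V\to\F$ (one may take it over $\F$ after a standard genericity argument, e.g.\ a generic specialization of $t$, or passing to $\overline\F$) with $\lambda(u'_c)=0$ for all $c$ with $\supp(c)<\ell$, yet $\lambda$ not vanishing on all of $\Span_\F\{u_b\}$. Set $h:=\lambda(D)=\sum_b\lambda(u_b)\,\o x^b$, a \emph{nonzero} polynomial in $\F[\o x]$ with $\Sp(h)\subseteq\Sp(D)$, so $\sp(h)\le\sp(D)$ and $\deg h\le\delta$. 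Then $g:=h(\o x+\phi_\ell(\o t))=(\lambda\otimes\mathrm{id})(D')$ satisfies $[\o x^c]g=\lambda(u'_c)=0$ for all $\supp(c)<\ell$, i.e.\ $g$ has \emph{no monomial of support $<\ell$}. So the lemma reduces to the statement: no nonzero $h\in\F[\o x]$ with $\sp(h)\le\sp(D)$ and $\deg h\le\delta$ can have $h(\o x+\phi_\ell(\o t))$ supported entirely on monomials of support $\ge\ell$.

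For the core of the argument I would use a transfer-matrix/null-vector analysis exactly parallel to Lemmas~\ref{lem:rowComb}--\ref{lem:invertible}, but with rows and columns indexed only by the (at most $\sp(D)$) monomials actually occurring, each of degree $\le\delta$. Because $\phi_\ell$ \emph{separates all $(\le\ell)$-support monomials of degree $\le\delta$}, it induces a total order on these monomials, which makes the relevant (truncated) transfer matrix triangular, so that the lowest-degree-in-$t$ term of any linear combination is governed by a $0/1$ containment matrix as in Lemma~\ref{lem:rowComb}. A row-weight bound for that matrix (the sparse analogue of Lemma~\ref{lem:rowComb}: a nonzero combination of ``high-support rows'' has $\ge 2^{\lceil(\ell-1)/2\rceil}=2^{L}\ge k\cdot\sp(D)$ nonzero entries) then shows, exactly as in Lemma~\ref{lem:mark}, that after deleting the at most $k\cdot\sp(D)-1\le 2^{L}-1$ ``bad'' columns the rows indexed by support-$<\ell$ monomials stay linearly independent. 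Translated back, this says precisely that $g=h(\o x+\phi_\ell(\o t))$ \emph{must} have a nonzero coefficient on some monomial of support $<\ell$ — contradicting the previous paragraph. (Concretely: one locates, by a halving argument on $\Sp(h)$ together with the separating property, a monomial $\o x^a$ with $\supp(a)<\ell$ whose coefficient in $g$ is a sum of terms $\binom{b}{a}\lambda(u_b)\phi_\ell(\o t)^{b-a}$ with pairwise distinct $t$-exponents, hence nonzero.)

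This gives the lemma: no such $h$ exists, hence no such $\lambda$ exists, hence $D'$ is $\ell$-concentrated over $\F(t)$. The main obstacle is the combinatorial heart of the third paragraph: getting the support threshold down to $\ell\approx 2\log(k\sp(D))$ rather than $\log(\sp(D))$ — the factor $2$ is the ``doubling'' phenomenon (roughly, $L$ variables to pin down a distinguished monomial and another $L$ to allow for the free part of the low-support monomial one tests against) — and verifying that the triangularization/no-cancellation step goes through for degree-$\delta$ monomials using only that $\phi_\ell$ separates $(\le\ell)$-support monomials of degree $\le\delta$. The reduction to an $\F$-valued $\lambda$, the harmless inclusion $\Sp(h)\subseteq\Sp(D)$, and the handling of binomial coefficients in positive characteristic (work in characteristic $0$ or $>\delta$, or replace ordinary derivatives by Hasse derivatives) are all routine.
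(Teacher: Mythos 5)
The paper does not actually prove Lemma~\ref{lem:sparse}: it imports it verbatim from \cite[Lemma 16]{ASS13} and the only ``proof'' given is the one-line remark immediately following the statement, namely that shifting and the definition of concentration use only the \emph{additive} structure of the coefficient space, so the result proved in \cite{ASS13} over $\H_k(\F)$ carries over to any $k$-dimensional $\F$-vector space $V$, in particular to $\F^{w\times w}$. Your proposal therefore cannot ``take the same approach as the paper'' — you are attempting what the paper deliberately avoids, namely a from-scratch proof.

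As a reconstruction of the \cite{ASS13} argument, your duality route has two genuine gaps, both exactly where you wave them off as routine.

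First, the reduction to an $\F$-valued functional $\lambda$. Non-concentration of $D'$ is a statement over $\F(t)$, so the functional you obtain is $\F(t)$-linear, and the polynomial $h=\lambda(D)$ has coefficients $\lambda(u_b)\in\F(t)$, not $\F$. This destroys the ``pairwise distinct $t$-exponents, hence nonzero'' step at the end of your third paragraph: once the $\lambda(u_b)$ themselves carry powers of $t$, the terms $\lambda(u_b)\phi_\ell(\o t)^{b-a}$ can cancel even when the $\phi_\ell(\o t)^{b-a}$ are distinct. The parenthetical fixes you offer do not work as stated: a specialization $t\mapsto\tau$ changes the $u'_c$ simultaneously, so it does not hand you an $\F$-linear $\lambda$ vanishing on the low-support $u'_c(\tau)$; and passing to $\overline\F$ does not touch the problem, which is about $t$, not algebraic closure. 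Clearing denominators gives $h_0\in\F[t][\o x]$ with $\sp(h_0)\le\sp(D)$, but then you still have $t$-bearing coefficients. This is precisely why the actual argument (both in \cite{ASS13} and in the multilinear analogue the paper does prove, Lemma~\ref{lem:subcircuitl0} via Lemmas~\ref{lem:rowComb}--\ref{lem:invertible}) works directly with the $\F$-coefficient matrix $U$ and the transfer matrix, rather than with a dual functional.

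Second, the combinatorial core and the role of $k$. Your reduction produces a scalar polynomial $h$ with at most $\sp(D)$ monomials, and from that point on $k$ plays no role; if your third paragraph went through, it would deliver $\ell\approx 2\log\sp(D)$, strictly stronger than the claimed $\ell\approx 2\log(k\sp(D))$. That is suspicious, and it signals that the place where $k$ actually enters the \cite{ASS13} proof — the corank bound on the coefficient matrix $U$, analogous to the ``mark at most $k\le 2^{\ell_0}-1$ columns'' step of Lemma~\ref{lem:invertible} — has been lost in your reduction rather than captured by it. Relatedly, the ``sparse analogue of Lemma~\ref{lem:rowComb}'' you invoke is not stated or proved; $M_{n,\ell}$ is a $0/1$ containment matrix for multilinear monomials, whereas the transfer matrix for degree-$\delta$ exponents carries binomial coefficients and a different column structure, so Lemma~\ref{lem:rowComb} and Lemma~\ref{lem:mark} do not transfer verbatim. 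These two issues are not finishing touches; they are the substance of \cite[Lemma 16]{ASS13}, which is why the paper chose to cite it rather than reprove it.
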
 

The \cite{ASS13} version of the Lemma~\ref{lem:sparse} gave
 a concentration result
about sparse polynomials over $\H_k(\F)$. But observe that
the process of shifting and the definition of concentration 
only deal with the additive structure of $\H_k(\F)$,
and the multiplication structure is irrelevant. Hence,
the result is true over any
$\F$-vector space,
%$V$-algebra of dimension $k$,
in particular,
over the matrix algebra.  
By combining these observations, we have the following. 

\begin{lemma}
\label{lem:ROABPconc}
Let $D(\o{x}) = \prod_{i=1}^{d} D_i(\o{x}_i)$ be a polynomial 
in $\F^{w \times w}[\o{x}]$ with $\det(D) \neq 0$
such that for all $i \in [d]$,
 $D_i$ has degree bounded
by $\delta$, $\sp(D_i) \leq s$ and $\mu(D_i) \leq \mu$.
Let $\ell := 1 + 2 \min\{ \ceil{\log (w^2 \cdot s)}, \mu \}$
and $M :=  \poly(s^w (n \delta)^{\ell})$.
Then there is a set of $M$ monomial maps with degree bounded by
$M \log M$ 
such that for at least one of the maps $\phi$,
%\colon t_i \to t^{b_i}$,
$D' := D(\o{x}+\phi(\o{t}))$ has $\ell w^2$-concentration. 
\end{lemma}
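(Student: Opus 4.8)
The plan is to assemble Lemma~\ref{lem:bSConc}, Lemma~\ref{lem:ll'} and Lemma~\ref{lem:sparse} by exhibiting a \emph{single} monomial map $\phi\colon\o{t}\to\{t^i\}$ that does two jobs at once: (i) it makes the constant term $D'_{i\mathbf{0}}=D_i|_{\o{x}_i=\phi(\o{t}_i)}$ of every shifted factor invertible over $\F(t)$, so that the block-support concentration of Lemma~\ref{lem:bSConc} applies to $D'=\prod_i D'_i$; and (ii) it forces $\ell$-support concentration in each $D'_i$ via Lemma~\ref{lem:sparse}. Combining the two conclusions through the composition Lemma~\ref{lem:ll'} gives $\ell w^2$-concentration of $D'$, and the Kronecker bound (Lemma~\ref{lem:kronecker}) controls the number and degree of candidate maps.

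First I would record that $\det(D)=\prod_{i=1}^{d}\det(D_i)$, so $\det(D)\neq 0$ forces $\det(D_i)\neq 0$ for all $i$. Since $\sp(D_i)\le s$ and $D_i$ has degree $\le\delta$, the polynomial $\det(D_i)\in\F[\o{x}_i]$ has sparsity $\le s^w$ and (individual) degree $\le w\delta$; set $A_1:=\bigcup_{i=1}^{d}\Sp(\det(D_i))$, so $\abs{A_1}\le n s^w$. Let $A_2$ be the set of all $n$-variate monomials of support $\le\ell$ and degree $\le\delta$, so $\abs{A_2}\le (n(\delta+1))^{\ell}$, and put $A:=A_1\cup A_2$, with $a:=\abs{A}\le n s^w+(n(\delta+1))^{\ell}$ and maximum individual degree $\le w\delta$. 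By Lemma~\ref{lem:kronecker} there is a family of $M:=n\,(w\delta)\,a^2\,\log(w\delta+1)=\poly(s^w(n\delta)^{\ell})$ monomial maps $\phi\colon\o{t}\to\{t^i\}_{i=1}^{M\log M}$, at least one of which separates every monomial in $A$; fix such a $\phi$.

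Then I would verify the two properties of this $\phi$. For (i): determinant commutes with the substitution $\o{x}_i\mapsto\phi(\o{t}_i)$, so $\det(D'_{i\mathbf{0}})=\det(D_i)|_{\o{x}_i=\phi(\o{t}_i)}$; since $\phi$ separates the monomials of $\det(D_i)$ (they lie in $A_1\subseteq A$), there is no cancellation and $\det(D'_{i\mathbf{0}})$ is a nonzero element of $\F(t)$, i.e.\ $D'_{i\mathbf{0}}$ is invertible over $\F(t)$. Hence, working over the field $\F(t)$, Lemma~\ref{lem:bSConc} gives that $D'(\o{x})=\prod_{i=1}^{d}D'_i(\o{x}_i)$ has $w^2$-block-support concentration. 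For (ii): $\phi$ separates every $(\le\ell)$-support monomial of degree $\le\delta$ (all of $A_2\subseteq A$), and $\sp(D_i)\le s$, $\mu(D_i)\le\mu$, $\deg(D_i)\le\delta$; viewing $D_i$ over the $w^2$-dimensional $\F$-vector space $\F^{w\times w}$, the vector-space version of Lemma~\ref{lem:sparse} (which only uses the additive structure, as remarked) yields that $D'_i=D_i(\o{x}_i+\phi(\o{t}_i))$ has $\ell$-concentration over $\F(t)$, where $\ell=1+2\min\{\ceil{\log(w^2 s)},\mu\}$ — a map separating $(\le\ell)$-support monomials also suffices for the possibly smaller support that Lemma~\ref{lem:sparse} requires of each individual $D_i$, and $\ell_i$-concentration implies $\ell$-concentration.

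Finally, I would combine: $D'$ has $w^2$-block-support concentration and each $D'_i$ has $\ell$-support concentration, so Lemma~\ref{lem:ll'} gives that $D'$ has $\ell w^2$-concentration (over $\F(t)$), which, together with the bound $M=\poly(s^w(n\delta)^{\ell})$ and the degree bound $M\log M$ on the candidate maps, is exactly the claim. I expect the only real obstacle to be the organizational point of making one shift serve both purposes at once; this is handled uniformly by feeding the union $A=A_1\cup A_2$ into Lemma~\ref{lem:kronecker}, at the mild cost of an extra $\log(w\delta+1)$ and an $a^2$ factor in $M$. The remaining items — that $\det(D_i)$ has degree $\le w\delta$ and sparsity $\le s^w$, and that the determinant commutes with the monomial substitution — are routine verifications.
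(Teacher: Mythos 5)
Your proposal is correct and follows the same route as the paper: choose a single Kronecker-style map $\phi$ that simultaneously separates the monomials of each $\det(D_i)$ (making $D'_{i\mathbf{0}}$ invertible so Lemma~\ref{lem:bSConc} applies) and all low-support monomials (so Lemma~\ref{lem:sparse} gives $\ell$-concentration of each $D'_i$), then compose via Lemma~\ref{lem:ll'}. Your explicit accounting of $A_1$, $A_2$ and the bound $a\le ns^w+(n(\delta+1))^\ell$ just makes precise the $\poly(s^w(n\delta)^\ell)$ map count and $M\log M$ degree that the paper states.
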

\begin{proof}
Let $\phi \colon \o{t} \to \{t^i\}_{i=0}^{\infty}$ be a map 
such that it separates all the monomials in $\Sp(\det(D_i(\o{t}_i)))$,
for all $i \in [d]$. There are $ds^{2w}$ such monomial pairs. 
Also assume that $\phi$ separates all monomials of support bounded by $\ell$. 
There are $(n \delta)^{O(\ell)}$ such monomials. 
Hence, total number of 
monomial pairs which need to be separated are $s^{O(w)} + (n \delta)^{O(\ell)}$.
%where $a =O(ds^{2w} + n^{2 \ell} \delta^{2 \ell})$.
From Lemma~\ref{lem:kronecker}, 
we know that there is a set of $M$ monomial maps
with highest degree $M\log M$ such that at least one of the maps
$\phi$ separates the desired monomials, 
%$M = n a \log \delta $.
where $M = \poly(s^w (n \delta)^{\ell})$.
As the map $\phi$ separates all the monomials in $\Sp(\det(D_i(\o{t}_i)))$,
$\det(D_i(\phi(\o{t}_i))) \neq 0$ and hence, 
$D'_{i{\bf 0}}$ is invertible for all $i \in [d]$.
So, $D'(\o{x})$ has $w^2$-block-support concentration from Lemma~\ref{lem:bSConc}.

From Lemma~\ref{lem:sparse}, $D'_i(\o{x}_i)$ has $\ell$-concentration for all 
$i \in [d]$.
Hence, from Lemma~\ref{lem:ll'}, $D'(\o{x})$ has $\ell w^2$-concentration.  
\end{proof}

%\begin{proof}[Proof of Theorem~\ref{thm:ROABPHS}(restated)]
Now, we come back to the proof of Theorem~\ref{thm:ROABPHS} (restated in Section~\ref{sec:ROABP}).
We want to find a hitting set for $C(\o{x}) = D_0^T D D_{d+1}$, where
$D \in \F^{w \times w}[\o{x}]$ is the polynomial as described in Lemma~\ref{lem:ROABPconc} and 
$D_0 \in \F^w[\o{x}_0]$, $D_{d+1} \in \F^w[\o{x}_{d+1}]$.
%Let $g(\o{x})$ be the polynomial described in the theorem. 
Using $(w^2+2)$-block-support concentration 
of $C(\o{x})$ from Corollary \ref{cor:bSConc},
and arguing as in the proof of Lemma~\ref{lem:ROABPconc}, 
we show 
$\ell (w^2+2)$-concentration of $C(\o{x}+\phi(\o{t}))$,
where $\phi$ is a map which needs to separate $s^{O(w)} + (n \delta)^{O(\ell)}$-many
monomials pairs. 
From Lemma~\ref{lem:kronecker}, there is set of $\poly(s^w (n \delta)^{\ell})$-many
shifting maps with highest degree $\poly(s^w (n \delta)^{\ell})$ such that one of them
is the desired map.
Similar to Lemma~\ref{lem:hsFromlConc}, we can show that 
if $C(\o{x}+\phi(\o{t}))$ has $\ell (w^2+2)$-concentration and
has degree bound $\delta^{O(1)}$ then there is a hitting-set 
of size $(n \delta)^{O(\ell w^2)}$.
Each of these evaluations will be a polynomial in $t$ with highest degree $\poly(s^w (n \delta)^{\ell})$.
Hence, total time complexity becomes $\poly(s^w(n \delta)^{\ell w^2})$. 
%Hence, from Lemma~\ref{lem:kronecker} and 
%Lemma~\ref{lem:finalHS}, there is a hitting-set of size
%$O((M^2\log M) n^{\ell (w^2+2)} ) $, where 
%$M =  n \log \delta(d s^{2w} + (n \delta)^{2 \ell})$.
%$\ell(w^2+2)$-concentration simply means that 
%if $g(\o{x}) \neq 0$ then one of its coefficients
%with support smaller than $\ell(w^2+2)$, is nonzero.
%There are $O(n^{\ell (w^2+2)})$ such coefficients. 
%Hence, we get a hitting-set of size $O((M^2 \log M) n^{\ell (w^2+2)})$ 
%from Lemma~\ref{}.
%\end{proof}

Note that for constant width ROABP,
when $\mu(D_i)$ is bounded by a constant for each $0 \leq i \leq d+1$,
in particular when each $D_i$ is univariate, 
the parameter $\ell$ becomes constant and
the hitting-set becomes polynomial-time.

\section{Width-$2$ Read Once ABP}
\label{app:2ROABP}
In Section~\ref{sec:ROABP}, the crucial part in finding a hitting-set for an ROABP,
is the assumption that the matrix product $\Dx$ is invertible. 
Now, we will show that for width-$2$ ROABP this assumption is not required.
%Let us say the polynomial $D_0^T D D_{d+1}$, for $D_0,D_{d+1} \in \F^w[\o{x}]$
%and $D \in \F^{w \times w}[\o{x}]$, corresponds to 
%an {\em invertible ROABP} if the matrix $D$ is invertible. 
Via a factorization property of $2 \times 2$ matrices,
 we will show that PIT for width-$2$ sparse-factor ROABP
reduces to PIT for width-$2$ sparse-invertible-factor ROABP. 
%Consider the following lemma.

\begin{lemma}[$2 \times 2$ invertibility]
\label{lem:width2}
Let $C(\o{x}) =  D_0^T \left( \prod_{i=1}^d D_i  \right) D_{d+1}$ be a polynomial computed
by a width-$2$ sparse-factor ROABP.
Then we can write $\alpha(\o{x}) C(\o{x}) = C_1(\o{x})C_2(\o{x}) \dotsm C_{m+1}(\o{x})$,
for some nonzero $\alpha \in \F[\o{x}]$ and some $m \leq d$, 
where $C_i(\o{x})$ is a polynomial computed by a width-$2$ sparse-invertible-factor 
ROABP, for all $i \in [m+1]$. 
\end{lemma}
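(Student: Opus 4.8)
The statement I need to prove is Lemma~\ref{lem:width2}: for a width-$2$ sparse-factor ROABP computing $C(\o{x}) = D_0^T(\prod_{i=1}^d D_i)D_{d+1}$, there is a nonzero $\alpha(\o{x}) \in \F[\o{x}]$ and some $m \le d$ such that $\alpha(\o{x})C(\o{x}) = C_1(\o{x})\dotsm C_{m+1}(\o{x})$, where each $C_i$ is a width-$2$ \emph{sparse-invertible-factor} ROABP. Let me sketch my approach.

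**The approach.**

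The plan is to exploit the special structure of $2\times 2$ matrices: a $2\times 2$ matrix is either invertible (nonzero determinant) or has rank $\le 1$, and a rank-$\le 1$ matrix factors as an outer product $u v^T$ of a column vector and a row vector (both entrywise polynomials over the same variable set). So first I would look at each factor $D_i(\o{x}_i)$, $i \in [d]$, and ask whether $\det(D_i)$ is the zero polynomial. If $\det(D_i) \ne 0$ for all $i \in [d]$, the ROABP is \emph{already} sparse-invertible-factor and we are done (take $m=0$, $C_1 = C$, $\alpha = 1$). So the interesting case is when $\det(D_{i_0}) \equiv 0$ for some index $i_0 \in [d]$. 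Then $D_{i_0}(\o{x}_{i_0})$ has rank $\le 1$ as a matrix over $\F[\o{x}_{i_0}]$; after clearing denominators by multiplying by a suitable nonzero polynomial $\beta_{i_0}(\o{x}_{i_0})$ (the gcd/pivot needed to extract the rank-$1$ factorization over the polynomial ring rather than the fraction field), I can write $\beta_{i_0}(\o{x}_{i_0}) D_{i_0}(\o{x}_{i_0}) = u_{i_0}(\o{x}_{i_0})\, v_{i_0}(\o{x}_{i_0})^T$ with $u_{i_0}, v_{i_0} \in \F[\o{x}_{i_0}]^2$. Plugging this into the matrix product splits the chain at position $i_0$:
\begin{equation*}
\beta_{i_0}(\o{x}_{i_0})\, C(\o{x}) = \Bigl( D_0^T D_1 \dotsm D_{i_0-1}\, u_{i_0} \Bigr) \cdot \Bigl( v_{i_0}^T\, D_{i_0+1} \dotsm D_d\, D_{d+1} \Bigr),
\end{equation*}
which is a product of two shorter ROABPs, each of width $2$ (the scalar $v_{i_0}^T D_{i_0+1}\cdots$ is a width-$1$ head for the second block, but the internal factors are still $2\times 2$) and each still over disjoint variable blocks. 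Now I recurse: iterate this splitting on each of the two shorter factor chains, handling every factor whose determinant vanishes identically, collecting the clearing polynomials into a single product $\alpha(\o{x}) = \prod \beta_{i}(\o{x}_{i})$ (nonzero, since it is a product of nonzero polynomials over disjoint variable sets). Since each split strictly decreases the number of factors in the affected sub-chain, after at most $d$ splits (hence $m \le d$) the process terminates, leaving a product $C_1 \dotsm C_{m+1}$ where every remaining $2\times 2$ internal factor has nonzero determinant — i.e.\ each $C_i$ is a width-$2$ sparse-invertible-factor ROABP.

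**Loose ends and the main obstacle.**

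Two things need care. First, I must check that the pieces produced really are ROABPs of the required form: after a split, the "head" and "tail" vectors $D_0^T D_1 \dotsm u_{i_0}$ and $v_{i_0}^T D_{i_0+1}\dotsm D_{d+1}$ are $\F^2[\cdot]$-valued (or scalars), the intermediate matrices are the untouched $D_j$'s which are still $2\times 2$ and sparse over their own variable blocks, and crucially \emph{sparsity is preserved} — the outer-product factorization $\beta_{i_0} D_{i_0} = u_{i_0} v_{i_0}^T$ does not blow up the number of monomials beyond a polynomial factor, and likewise $\beta_{i_0}$ is sparse if $D_{i_0}$ is (its entries and minors are sparse). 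I would phrase this as: $\sp(u_{i_0}), \sp(v_{i_0}), \sp(\beta_{i_0})$ are all bounded in terms of $\sp(D_{i_0})$. Second — and this is the step I expect to be the real obstacle — extracting the rank-$1$ factorization \emph{over the polynomial ring} $\F[\o{x}_{i_0}]$ (not merely over the fraction field) with good sparsity control. Over the fraction field, $\det D_{i_0} = 0$ immediately gives rank $\le 1$ and hence columns proportional, so $D_{i_0} = u\, v^T$ with entries in $\F(\o{x}_{i_0})$; the work is in choosing $u, v$ and the clearing factor $\beta_{i_0}$ so that the factorization lives in $\F[\o{x}_{i_0}]$ and stays sparse. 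Concretely, if $D_{i_0} = \begin{pmatrix} a & b \\ c & e\end{pmatrix}$ with $ae = bc$, one can take (say) $u = (a, c)^T$, $v = (a, b)^T$ and $\beta_{i_0} = a$ — modulo degenerate cases where $a \equiv 0$, which force a different choice of pivot entry — and then argue $\sp$ is controlled. Handling the case analysis over which entry is nonzero, and confirming the denominator $\beta_{i_0}$ one must multiply through by is itself a genuine nonzero sparse polynomial from the block $\o{x}_{i_0}$, is the only genuinely fiddly part; everything else is bookkeeping on the recursion depth and disjointness of variable blocks.
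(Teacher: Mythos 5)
Your proof is correct and follows essentially the same route as the paper: pivot on a nonzero entry of the singular $2\times 2$ factor (e.g.\ $a$), write $a\, D_i = \binom{a}{c}(a\;\; b)$, multiply $C$ through by the pivot, split the chain there, and collect the pivots into $\alpha$, noting sparsity and nonvanishing follow from disjointness of the variable blocks. The paper simply does all the splits in one pass over the indices $\{i_1,\dots,i_m\}$ of non-invertible factors rather than recursing, but the decomposition and the bookkeeping are identical; the "obstacle" you flag at the end (rank-$1$ factorization over the polynomial ring with sparsity control) is resolved exactly by the pivot choice you already describe.
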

\begin{proof}
Let us say, for some $i \in [d]$, $D_i(\o{x}_i)$ is not invertible. 
Let $D_i = \left[ \begin{smallmatrix} a_i & b_i \\ c_i & d_i \end{smallmatrix} \right]$ with
$a_i,b_i,c_i,d_i \in \F[\o{x}_i]$ and $a_i d_i = b_i c_i$.
Without loss of generality, at least one of $\{a_i,b_i,c_i,d_i\}$ is nonzero. 
Let us say $a_i \neq 0$ (other cases are similar).
Then we can write,
$$ \begin{bmatrix} a_i & b_i \\ c_i & d_i \end{bmatrix} 
= \frac{1}{a_i} \begin{bmatrix} a_i \\ c_i  \end{bmatrix}
\begin{bmatrix} a_i & b_i \end{bmatrix} . 
$$ 
In other words, we can write
$\alpha_i D_i = A_i B_i^T $, where $A_i,B_i \in \F^2[\o{x}_i]$ and 
$0 \neq \alpha_i \in \{a_i,b_i,c_i,d_i\} $.
Note that $\sp(\alpha_i), \sp(A_i),\sp(B_i) \leq \sp(D_i)$.
Let us say the set of non-invertible $D_i$s is $\{D_{i_1}, D_{i_2}, \dots, D_{i_{m}} \}$.
Writing all of them in the above form we get, 
$$C(\o{x}) \prod_{j=1}^m \alpha_{i_j} = 
\prod_{j=1}^{m+1} C_j ,
$$
where 
\begin{equation*}
C_j := \begin{cases}
D_0^T \left( \prod_{i=1}^{i_1-1} D_i \right) A_{i_1} & \text{ if } j= 1, \\
 B_{i_{j-1}}^T \left(\prod_{i=i_{j-1}+1}^{i_{j}-1} D_i \right) A_{i_{j}}
& \text{ if } 2 \leq j \leq m, \\
B_{i_m}^T \left( \prod_{i=i_m+1}^{d} D_i \right) D_{d+1} &  \text{ if } j = m+1. 
\end{cases}
\end{equation*}
Clearly, for all $j \in [m+1]$, $C_j$ can be computed by a sparse-invertible-factor ROABP.
\end{proof}

Now, from the above lemma it is easy to construct a hitting-set. 
First we show a general result about hitting-sets
for a product of polynomials from some class.
\begin{lemma}[Lagrange interpolation]
\label{lem:lagrange}
Suppose $\Hit$ is a hitting-set for a class of polynomials $\C$.
Let $C(\o{x}) = C_1(\o{x}) C_2(\o{x}) \dotsm C_m(\o{x})$,
where $C_i \in \C$ and has degree bounded by $\delta$, for all $i \in [m]$.
There is a hitting-set of size $m \delta \abs{\Hit} +1$ for $C(\o{x})$.
\end{lemma}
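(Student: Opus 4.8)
The plan is to reduce the identity test for the product $C$ to that of a single univariate polynomial, by restricting $C$ to a low-degree curve through all of $\Hit$; this is the classical Lagrange-interpolation trick (hence the name of the lemma). The starting point is that $\F[\o{x}]$ is an integral domain, so $C \neq 0$ if and only if $C_i \neq 0$ for every $i \in [m]$.

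First I would construct the curve. Write $\Hit = \{a^{(1)}, \dots, a^{(N)}\}$ with $N := \abs{\Hit}$, and fix distinct field elements $\beta_1, \dots, \beta_N$ (passing to a finite extension of $\F$ if $\F$ is too small, which changes nothing about whether a polynomial vanishes). Let $L_j(y) \in \F[y]$ be the degree-$(N-1)$ Lagrange basis polynomials determined by $L_j(\beta_{j'}) = 1$ if $j = j'$ and $L_j(\beta_{j'}) = 0$ otherwise, and set $\gamma(y) := \sum_{j=1}^{N} L_j(y)\, a^{(j)}$, a tuple of $n$ univariate polynomials each of degree at most $N-1$ satisfying $\gamma(\beta_j) = a^{(j)}$ for all $j$.

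Next, for each $i \in [m]$ put $g_i(y) := C_i(\gamma(y)) \in \F[y]$. Since $\deg C_i \le \delta$ and each coordinate of $\gamma$ has degree at most $N-1$, we get $\deg g_i \le \delta(N-1)$. The key observation is that $g_i \not\equiv 0$ whenever $C_i \neq 0$: as $\Hit$ is a hitting-set for the class $\C$ and $C_i \in \C$, some $a^{(j)}$ has $C_i(a^{(j)}) \neq 0$, i.e.\ $g_i(\beta_j) \neq 0$; the converse $C_i = 0 \Rightarrow g_i \equiv 0$ is trivial. Hence $g(y) := C(\gamma(y)) = \prod_{i=1}^{m} g_i(y)$ has degree at most $m\delta(N-1)$, and $g \not\equiv 0$ exactly when $C \neq 0$.

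Finally, a nonzero univariate polynomial of degree at most $m\delta(N-1)$ has strictly fewer than $m\delta N + 1$ roots, so among any $m\delta N + 1$ distinct field elements $\theta_1, \dots, \theta_{m\delta N + 1}$ (extending $\F$ once more if necessary) at least one satisfies $g(\theta_r) = C(\gamma(\theta_r)) \neq 0$ when $C \neq 0$. Thus $\{\gamma(\theta_r) \mid r \in [m\delta N + 1]\} \subseteq \F^n$ is a hitting-set of size $m\delta\abs{\Hit} + 1$ for $C$. I do not expect a genuine obstacle here; the only point to watch is the field size, which is handled by the harmless scalar extensions above — a blackbox hitting-set for $\C$ stays valid over an extension, and every point produced is a bona fide element of $\F^n$ (over that extension).
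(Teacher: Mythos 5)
Your proof is correct and follows essentially the same Lagrange-interpolation-curve argument as the paper, down to the construction of the interpolating curve through $\Hit$ and the reduction to a univariate nonzeroness test. Your bookkeeping is in fact slightly more careful than the paper's (you correctly note the curve has coordinate degree $N-1$, not $N$), and your remarks about field extensions are a harmless but reasonable precaution that the paper omits.
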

\begin{proof}
Let $h = \abs{\Hit}$  and 
$\Hit = \{ \o{\alpha}_1, \o{\alpha}_2, \dots, \o{\alpha}_{h} \}$.
Let $B := \{\beta_i\}_{i=1}^{h}$ be a set of constants.
The Lagrange interpolation $\o{\alpha}(u)$ of the points in $\Hit$ is defined
as follows 
$$ \o{\alpha}(u) := \sum_{i=1}^{h} \frac{\prod_{j \neq i} (u - \beta_j) }{\prod_{j \neq i} (\beta_i - \beta_j) } \o{\alpha}_i . 
$$
The key property of the interpolation is that 
when we put $u = \beta_i$,
$\o{\alpha}(\beta_i) = \o{\alpha}_i$ for all $i \in [h]$.
For any $a \in [m]$, we know that 
$C_a(\o{\alpha}_i) \neq 0$, for some $i \in [h]$. 
Hence, $C_a(\o{\alpha}(u))$ as a polynomial in $u$ is nonzero 
because $C_a(\o{\alpha}(\beta_i)) = C_a(\o{\alpha}_i) \neq 0$.
So, we can say $C(\o{\alpha}(u)) \neq 0$ as a polynomial in $u$. 
Degree of $\o{\alpha}(u)$ is $h$. So, degree of $C(\o{\alpha}(u))$
in $u$ is bounded by $m \delta h$. 
We can put $(m \delta h +1)$-many distinct values of $u$ 
to get a hitting-set for $C(\o{\alpha}(u))$.
\end{proof}

Note that a hitting-set for $\alpha(\o{x}) C(\o{x})$ is also 
a hitting-set for $C(\o{x})$ if $\alpha$ is a nonzero polynomial.
Recall that we get a hitting-set for invertible ROABP from
 Theorem~\ref{thm:ROABPHS} (Section~\ref{sec:ROABP}). 
Lemma~\ref{lem:width2} tells us how to 
write a width-$2$ ROABP as a product of width-$2$ invertible ROABPs. 
Combining these results with Lemma~\ref{lem:lagrange}
we directly get the following. 

\begin{theorem}
\label{thm:ROABP22HS}
Let $C(\o{x}) = D_0^T(\o{x}_0) (\prod_{i=1}^{d} D_i(\o{x}_i)) D_{d+1}(\o{x}_{d+1})$
be a polynomial 
in $\F[\o{x}]$ computed by a width-$2$ ROABP
 such that for all $0 \leq i \leq d+1$,
 $D_i$ has degree bounded
by $\delta$, $\sp(D_i) \leq s$ and $\mu(D_i) \leq \mu$.
Let $\ell := 1 + 2 \min\{ \ceil{\log (4 \cdot s)}, \mu \}$
%and $M :=  n \log \delta(d s^{4} + (n \delta)^{2 \ell})$.
Then there is a hitting-set of size
%$O((d+1) \delta M^2 \log M n^{6 \ell })$ for $C(\o{x})$. 
$\poly((n \delta s)^{\ell})$.
\end{theorem}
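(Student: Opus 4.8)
The plan is to assemble Theorem~\ref{thm:ROABP22HS} from three pieces already established in the excerpt: the factorization of a width-$2$ ROABP as a product of width-$2$ sparse-invertible-factor ROABPs (Lemma~\ref{lem:width2}), the hitting-set for that class (Theorem~\ref{thm:ROABPHS}), and the Lagrange-interpolation trick for products of polynomials drawn from a class that admits a hitting-set (Lemma~\ref{lem:lagrange}).

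First I would apply Lemma~\ref{lem:width2} to $C(\o{x}) = D_0^T\left(\prod_{i=1}^d D_i\right)D_{d+1}$, obtaining a nonzero $\alpha(\o{x}) \in \F[\o{x}]$ and an $m \le d$ with $\alpha(\o{x})C(\o{x}) = C_1(\o{x})\dotsm C_{m+1}(\o{x})$, where each $C_j$ is computed by a width-$2$ sparse-invertible-factor ROABP. I would then carry the parameters through that construction: every factor matrix inside any $C_j$ is either an original $D_i$ or one of the rank-one pieces $A_i,B_i$, and Lemma~\ref{lem:width2} guarantees $\sp(A_i),\sp(B_i) \le \sp(D_i) \le s$, while degrees stay $\le \delta$ and supports stay $\le \mu$; moreover all intermediate factors of each $C_j$ are genuinely invertible, since every non-invertible $D_i$ was pulled out. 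Hence each $C_j$ falls in the class handled by Theorem~\ref{thm:ROABPHS} with $w=2$ and the same $\delta, s, \mu$, and there $\ell = 1 + 2\min\{\ceil{\log(4s)},\mu\}$ matches the $\ell$ of the present statement. Since the hitting-set produced by Theorem~\ref{thm:ROABPHS} depends only on the parameters $n,\delta,s,\mu,w$ and not on the individual circuit, a single set $\Hit$ of size $\poly((n\delta s)^{\ell})$ (the constant $w^2 = 4$ in the exponent being absorbed into $\poly$) hits every nonzero $C_j$ simultaneously.

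Next I would feed this into Lemma~\ref{lem:lagrange}, taking $\C$ to be the class of width-$2$ sparse-invertible-factor ROABPs with these parameter bounds, $\Hit$ the set just built, and the $m+1 \le d+1$ factors $C_j$, each of degree at most $(d+2)\delta = O(n\delta)$ (a product of at most $d+2 \le n+2$ matrices of degree $\le \delta$, using $d \le n$). Lemma~\ref{lem:lagrange} then yields a hitting-set of size $(m+1)\cdot O(n\delta)\cdot\abs{\Hit}+1 = \poly((n\delta s)^{\ell})$ for the product $\alpha(\o{x})C(\o{x})$. Finally, because $\alpha$ is a nonzero polynomial, $\alpha C \equiv 0$ iff $C \equiv 0$, so any point witnessing $\alpha C \ne 0$ also witnesses $C \ne 0$; hence the same set is a hitting-set for $C(\o{x})$ itself, of size $\poly((n\delta s)^{\ell})$.

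I do not anticipate a genuine obstacle: every step is an invocation of an earlier result, and the only care needed is bookkeeping --- checking that Lemma~\ref{lem:width2} preserves the sparsity, support and degree bounds (so that one value of $\ell$ serves all the $C_j$), that the hitting-set of Theorem~\ref{thm:ROABPHS} is uniform over the class (so one $\Hit$ suffices for all factors at once), and that the relevant degree bound for each $C_j$ in Lemma~\ref{lem:lagrange} is $O(n\delta)$. The one subtlety worth stating explicitly is that Lemma~\ref{lem:lagrange} is phrased for a product that is itself nonzero; this causes no trouble here, since a zero $C$ needs no hitting-set and a nonzero $C$ forces every factor $C_j$ to be nonzero, so the reduction is clean.
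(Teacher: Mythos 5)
Your proposal is correct and follows essentially the same path as the paper: factor $\alpha C$ via Lemma~\ref{lem:width2} into width-$2$ sparse-invertible-factor ROABPs, apply Theorem~\ref{thm:ROABPHS} (with $w=2$, so $w^2=4$ is absorbed into the $\poly$) to get a uniform hitting-set for that class, and then use Lemma~\ref{lem:lagrange} together with $\alpha \ne 0$ to conclude. The bookkeeping you flag --- preservation of $\delta, s, \mu$ under the rank-one factorization, uniformity of the hitting-set over the class, and the harmless nonzero-product caveat in Lemma~\ref{lem:lagrange} --- is exactly what the paper's one-line combination of those results implicitly relies on.
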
 

We remark again that when %$\mu$ is constant or in particular when 
all $D_i$s are constant-variate or linear polynomials,
the hitting-set is polynomial-time.

\section{Sum of set-multilinear circuits}
\label{sec:sumSetMult}
In this section, we will reduce the PIT for 
sum of constantly many set-multilinear depth-$3$ circuits, 
to the PIT for depth-$3$ circuits with $m$ base sets
having $\delta$ distance, where $m \delta = o(n)$. 
Thus, we get a subexponential time whitebox algorithm 
for this class (from Theorem~\ref{thm:baseSetsHS}). 
Note that a sum of constantly many set-multilinear depth-$3$ circuits
is equivalent to a depth-$3$ multilinear circuit such that 
the number of distinct partitions, induced by its product gates,
is constant.

We first look at the case of two partitions. 
For a partition $\P$ of $[n]$, 
let $\P|_B$ denote the restriction of $\P$ on a base set
$B \subseteq [n]$. 
For example, if $\P = \{ \{1,2\}, \{3,4\}, \{5,6, \dots, n\}\}$ and $B = \{1,3,4\}$
then $\P|_B = \{ \{1\}, \{3,4\} \}$.
Recall that $d(\P_1,\P_2, \dots, \P_c)$ denotes the {\em distance}
of the partition sequence $(\P_1,\P_2, \dots, \P_c)$ (Definition~\ref{def:distance}).  
For a partition sequence $(\P_1, \P_2, \dots \P_c)$, 
and a base set $B \subseteq [n]$, 
let $d_B(\P_1, \P_2, \dots, \P_c)$ denote the distance of the partition sequence
when restricted to the base set $B$, i.e.\
$d(\P_1|_B, \P_2|_B, \dots, \P_c|_B)$.

\begin{lemma}
\label{lem:twoPartitions}
For any two partitions $\{ \P_1, \P_2 \}$ of the set $[n] $,
there exists a partition of $[n]$,
into at most $2 \sqrt{n}$ base sets $\{B_1, B_2, \dots, B_{m} \}$
 $(m < 2 \sqrt{n})$,
such that for any $i \in [m]$,
either $d_{B_i}(\P_1, \P_2) = 1$ or $d_{B_i}(\P_2, \P_1) =1$.
\end{lemma}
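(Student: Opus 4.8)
The plan is to exploit an easy \emph{degeneration}: on a base set $B$ where $\P_1|_B$ is the one-block partition, every partition of $B$ refines $\P_1|_B$, so $d_B(\P_2,\P_1)=1$; and on a base set $B$ where $\P_1|_B$ is the all-singletons partition, $\P_1|_B$ refines every partition of $B$, so $d_B(\P_1,\P_2)=1$. Thus I want to chop $[n]$ into few pieces, on each of which one of these two extremes holds for $\P_1$. Fix the threshold $\tau:=\lceil\sqrt{n}\,\rceil$ and call a color $C\in\P_1$ \emph{big} if $|C|\ge\tau$. Since big colors are pairwise disjoint, there are at most $\lfloor n/\tau\rfloor\le\sqrt{n}$ of them; take each big color as its own base set. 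Let $R$ be the union of the remaining (small) $\P_1$-colors; each such color has size $\le\tau-1$, so I can distribute the elements of $R$ among base sets $B_1,\dots,B_q$ with $q\le\tau-1$ so that each $B_b$ contains at most one element of every $\P_1$-color --- for each small color just place its $\le\tau-1$ elements into distinct $B_b$'s. Keeping only the nonempty $B_b$'s, together with the big colors this is a partition of $[n]$ into $m\le\lfloor n/\tau\rfloor+(\tau-1)$ base sets.

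Next I would verify the distance condition of Definition~\ref{def:distance} on each piece. If $B=C$ for a big color $C$, then $\P_1|_B=\{C\}$, and its unique color $C=B$ equals the union of all colors of $\P_2|_B$; taking $\delta'=1$ in the definition gives $d_B(\P_2,\P_1)=1$. If $B=B_b$, then $\P_1|_B$ is a family of singletons, so every color of $\P_2|_B$ is a union of colors (singletons) of $\P_1|_B$, giving $d_B(\P_1,\P_2)=1$. In each case the circuit restricted to $B$ has $1$-distance, as wanted.

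I do not expect a genuine obstacle. The only point needing a little care is the count $m\le\lfloor n/\lceil\sqrt{n}\,\rceil\rfloor+\lceil\sqrt{n}\,\rceil-1$: here $\lfloor n/\lceil\sqrt{n}\,\rceil\rfloor\le\sqrt{n}$ and $\lceil\sqrt{n}\,\rceil-1<\sqrt{n}$, so $m<2\sqrt{n}$ whether or not $n$ is a perfect square. (One then iterates this ``split against one pair of partitions at a time'' argument to reduce a constant number $c$ of distinct partitions to a union of few low-distance pieces, which is what feeds Theorem~\ref{thm:baseSetsHS} to yield the subexponential whitebox PIT for a sum of $c$ set-multilinear depth-$3$ circuits.)
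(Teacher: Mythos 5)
Your proposal is correct and follows essentially the same approach as the paper's proof: split the $\P_1$-colors by size against the threshold $\approx\sqrt{n}$, take each big color as a base set (one-block restriction, so $d_B(\P_2,\P_1)=1$), and distribute the small colors' elements into $<\sqrt{n}$ transversal-like base sets (all-singleton restriction, so $d_B(\P_1,\P_2)=1$). The only cosmetic difference is that you phrase the transversal construction abstractly and are a touch more careful about $\lceil\sqrt{n}\rceil$ rounding; the paper constructs the transversals explicitly as $B'_a=\{y_{j,a}\mid |Y_j|\ge a\}$.
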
          
\begin{proof}
Let us divide the set of colors in the partition $\P_1$, into 
two types of colors: One with at least $\sqrt{n}$ elements and the other with less than
$\sqrt{n}$ elements. 
In other words, $\P_1 = \{X_1, X_2, \dots, X_{r} \} \cup \{ Y_1, Y_2, \dots, Y_{q}\}$
such that $\abs{X_i} \geq \sqrt{n}$ and $\abs{Y_j} < \sqrt{n}$,
 for all $i \in [r], \; j \in [q]$.
Let us make each $X_i$ a base set, i.e.\ $B_i = X_i$, $\forall i \in [r]$. 
As $\abs{X_i} \geq \sqrt{n}, \; \forall i \in [r]$, we get $r \leq \sqrt{n}$.
Now, for any $i \in [r]$, $\P_1|_{B_i}$ has only one color.
Hence, irrespective of what colors $\P_2|_{B_i}$ has, 
$d_{B_i}(\P_2, \P_1) = 1$, for all $i \in [r]$.
% (it follows from the definition of distance).

Now, for the other kind of colors, we will make base sets  
which have exactly one element from each color $Y_j$.
More formally, let $Y_j = \{y_{j,1}, y_{j,2}, \dots, y_{j, r_j} \}$, for all $j \in [q]$.
Let $r' = \max\{ r_1, r_2, \dots, r_q\}$ ($r' < \sqrt{n}$). 
Now define base sets $B'_1, B'_2, \dots, B'_{r'}$ such that
for any $a \in [r']$, $B'_a = \{y_{j,a} \mid j \in [q], \; \abs{Y_j} \geq a \} $. 
In other words, all those $Y_j$s which have at least $a$ elements, contribute
their $a$-th element to $B'_a$.
Now for any $a \in [r']$, $\P_1|_{B'_a} = \{ \{y_{j,a}\} \mid j \in [q], \; \abs{Y_j} \geq a \}$,
i.e.\ it has exactly one element in each color. 
Clearly, irrespective of what colors $\P_2|_{B'_a}$ has, 
$d_{B'_a} (\P_1,\P_2) = 1$, for all $a \in [r']$.

$\{B_1, B_2, \dots, B_r\} \cup \{B'_1, B'_2, \dots, B'_{r'}\}$ is our final set of base sets. 
Clearly,  they form a partition of $[n]$. 
The total number of base sets, $m = r + r' < 2 \sqrt{n}$.

\end{proof}

Now, we generalize this to any constant number of partitions, by induction. 

\begin{lemma}
\label{lem:cPartitions}
For any $c$ partitions $\{ \P_1, \P_2, \dots, \P_c \}$ of the set $[n]$,
there exists a partition of $[n]$,
into $m$ base sets $\{B_1, B_2, \dots, B_{m} \}$
with $m < 2^{c-1} \cdot n^{1 - (1/2^{c-1})}$
such that for any $i \in [m]$,
there exists a permutation of the partitions, 
$(\P_{i_1} , \P_{i_2}, \dots, \P_{i_c})$
with $d_{B_i} (\P_{i_1} , \P_{i_2}, \dots, \P_{i_c}) = 1$.
\end{lemma}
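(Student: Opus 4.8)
The plan is induction on $c$, with base case $c=2$ being exactly Lemma~\ref{lem:twoPartitions} (there $m < 2\sqrt n = 2^{2-1} n^{1-1/2^{2-1}}$). Assume the statement for $c-1$ partitions, and let $\P_1,\dots,\P_c$ be partitions of $[n]$; choose to split first along $\P_1$. As in the proof of Lemma~\ref{lem:twoPartitions}, call a color of $\P_1$ \emph{big} if it has $\ge\sqrt n$ elements and \emph{small} otherwise. Keep each big color $X$ as one piece; since the big colors are disjoint there are at most $\sqrt n$ of them. If the small colors are $Y_1,\dots,Y_q$, fix an enumeration of each and set $B'_a := \{\,a\text{-th element of }Y_j : |Y_j|\ge a\,\}$ for $1\le a\le r'$, where $r' := \max_j|Y_j| < \sqrt n$; these $B'_a$ partition $\bigcup_j Y_j$, so $\sum_a|B'_a|\le n$ and there are fewer than $\sqrt n$ of them.

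The reason to split along $\P_1$ is that its restriction to every piece is extreme: on a big-color piece $X$ one has $\P_1|_X=\{X\}$ (coarsest), while $B'_a$ meets every color of $\P_1$ in at most one point, so $\P_1|_{B'_a}$ is discrete (finest). Now recurse: apply the induction hypothesis to the $c-1$ partitions $\P_2,\dots,\P_c$ restricted to each piece (restrictions of distinct partitions may coincide, which only helps), subdividing every piece into base sets on each of which some ordering of $\P_2|_\cdot,\dots,\P_c|_\cdot$ has distance $1$. For a base set $B$ inside one of these pieces the same dichotomy holds ($\P_1|_B=\{B\}$ if $B$ lies in a big-color piece, discrete otherwise), so extend the length-$(c-1)$ refinement chain on $B$ to a length-$c$ one by inserting $\P_1|_B$: append it at the end when it is coarsest (every partition refines $\{B\}$), prepend it at the front when it is discrete (the discrete partition refines every partition). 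Either way the resulting ordering of $\P_1|_B,\dots,\P_c|_B$ has distance $1$.

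For the count, let $g_c(n)$ be the number of base sets the procedure produces on $n$ elements; by the induction hypothesis $g_{c-1}(N)\le 2^{c-2}N^{\beta}$ with $\beta := 1-2^{-(c-2)}$. Summing $g_{c-1}$ over the pieces and using H\"older's inequality $\sum_i|P_i|^{\beta}\le P^{1-\beta}\bigl(\sum_i|P_i|\bigr)^{\beta}$ ($P$ the number of pieces) separately over the at-most-$\sqrt n$ big-color pieces and the fewer-than-$\sqrt n$ small-color pieces, each group of total size at most $n$, yields $g_c(n)\le 2\cdot 2^{c-2}(\sqrt n)^{1-\beta}n^{\beta}=2^{c-1}n^{(1+\beta)/2}=2^{c-1}n^{1-2^{-(c-1)}}$; the inequality is in fact strict since $r'<\sqrt n$ strictly (or the small-color group is empty), which gives $m=g_c(n)<2^{c-1}n^{1-1/2^{c-1}}$, the claimed bound.

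I expect the main obstacle to be the bookkeeping of orderings: one must check that at every recursion level the partition being reinserted really is extreme on the current base set, so that prepending or appending it preserves the refinement chain. This is precisely what is guaranteed by having carved the base sets at each level out of the colors of the partition processed at that level (big colors collapse to $\{B\}$; the ``one element per color'' sets become discrete). The remaining points — non-integral $\sqrt n$ (use $\lceil\sqrt n\rceil$), pieces of size $1$, or coinciding restrictions — are routine.
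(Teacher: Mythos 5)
Your proof is correct and follows essentially the same approach as the paper: split along the colors of $\P_1$ into big pieces (where $\P_1$ restricts to the trivial coarse partition) and small-color transversals (where $\P_1$ restricts to the discrete partition), recurse on $\P_2,\dots,\P_c$, and insert $\P_1$ at the coarse or fine end of the resulting refinement chain on each base set. The only cosmetic difference is the counting: you apply one Hölder/concavity bound per group (big-color pieces, small-color pieces, each fewer than $\sqrt n$ in number and of total size at most $n$), whereas the paper splits the small-color pieces further into two sub-cases before applying Jensen to the pieces of size at least $\sqrt n$ and a crude bound to the rest; your organization is slightly cleaner but the underlying estimate is identical.
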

\begin{proof}
Let $f(c, n) := 2^{c-1} \cdot n^{1 - (1/2^{c-1})} $.
The proof is by induction on the number of partitions. 

{\em Base case:} For $c = 2$, $f(c,n)$ becomes 
$2 \sqrt{n}$. Hence, the statement follows from Lemma~\ref{lem:twoPartitions}.

{\em Induction hypothesis:} The statement is true for any $c-1$ partitions. 
 
{\em Induction step:} Like in Lemma~\ref{lem:twoPartitions}, we divide the 
set of colors in $\P_1$ into two types of colors. 
Let $\P_1 = \{X_1, X_2, \dots, X_{r} \} \cup \{ Y_1, Y_2, \dots, Y_{q}\}$
such that $\abs{X_i} \geq \sqrt{n}$ and $\abs{Y_j} < \sqrt{n}$,
 for all $i \in [r], \; j \in [q]$.
Let us set $B_i = X_i$ and let $n_i := \abs{B_i}$, $\forall i \in [r]$ . 
%As $\abs{X_i} \geq \sqrt{n}, \; \forall i \in [r]$, we get $r \leq \sqrt{n}$.
Our base sets will be further subsets of these $B_i$s. 
For a fixed $i \in [r]$, let us define $\P'_h = \P_h|_{B_i} $, as a partition of the set $B_i$,
for all $h \in [c]$.
Clearly, $\P'_1$ has only one color. 
Now, we focus on the partition sequence $(\P'_2, \P'_3, \dots, \P'_c)$.
From the inductive hypothesis, 
there exists a partition of $B_i$ into $m_i$ base sets
$\{ B_{i,1}, B_{i,2}, \dots, B_{i,m_i} \}$ ($m_i \leq f(c-1, n_i)$) such that 
for any $u \in [m_i]$, there exists a permutation of 
$(\P'_2, \P'_3, \dots, \P'_c)$, given by $(\P'_{i_2}, \P'_{i_3}, \dots, \P'_{i_c})$,  
with $d_{B_{i,u}} (\P'_{i_2}, \P'_{i_3}, \dots, \P'_{i_c}) = 1$.
As $\P'_1$ has only one color, so does $\P'_1|_{B_{i,u}}$. 
Hence, $d_{B_{i,u}} (\P'_{i_2}, \P'_{i_3}, \dots, \P'_{i_c}, \P'_1)$ is also $1$.
From this, we easily get $d_{B_{i,u}} (\P_{i_2}, \P_{i_3}, \dots, \P_{i_c}, \P_1) = 1$.
The above argument can be made for all $i \in [r]$.

Now, for the other colors, 
we proceed as in Lemma~\ref{lem:twoPartitions}.
Let $Y_j = \{y_{j,1}, y_{j,2}, \dots, y_{j, r_j} \}$, for all $j \in [q]$.
Let $r' = \max\{ r_1, r_2, \dots, r_q\}$ ($r' < \sqrt{n}$). 
Now define sets $B'_1, B'_2, \dots, B'_{r'}$ such that
for any $a \in [r']$, $B'_a = \{y_{j,a} \mid j \in [q], \; \abs{Y_j} \geq a \} $. 
In other words, all those $Y_j$s which have at least $a$ elements, contribute
their $a$-th element to $B'_a$.
Let $n'_a := \abs{B'_a}$, for all $a \in [r']$.
Our base sets will be further subsets of these $B'_a$s. 
For a fixed $a \in [r']$, let us define $\P'_h = \P_h|_{B'_a} $, 
as a partition of the set $B'_a$,
for all $h \in [c]$.
Clearly, $\P'_1$ has exactly one element in each of its colors. 
Now, we focus on the partition sequence $(\P'_2, \P'_3, \dots, \P'_c)$.
From the inductive hypothesis, 
there exists a partition of $B'_a$ into $m'_a$ base sets
$\{ B'_{a,1}, B'_{a,2}, \dots, B'_{a,m'_a} \}$ ($m'_a \leq f(c-1, n'_a)$) such that 
for any $u \in [m'_a]$, there exists a permutation of 
$(\P'_2, \P'_3, \dots, \P'_c)$, given by $(\P'_{i_2}, \P'_{i_3}, \dots, \P'_{i_c})$,  
with $d_{B'_{a,u}} (\P'_{i_2}, \P'_{i_3}, \dots, \P'_{i_c}) = 1$.
As $\P'_1$ has exactly one element in each of its colors, so does $\P'_1|_{B'_{a,u}}$. 
Hence, $d_{B'_{a,u}} (\P'_1, \P'_{i_2}, \P'_{i_3}, \dots, \P'_{i_c})$ is also $1$.
From this, we easily get $d_{B'_{a,u}} (\P_1, \P_{i_2}, \P_{i_3}, \dots, \P_{i_c}) = 1$.
The above argument can be made for all $a \in [r']$.

Our final set of base sets will be $\{B_{i,u} \mid i \in [r], \; u \in [m_i] \} \cup
\{ B'_{a,u} \mid a \in [r'], \; u \in [m'_a] \}$. 
As argued above, when restricted to any of these base sets, the given partitions 
have a sequence, which has distance $1$. 
Now, we need to bound the number of these base sets, 
$$m = \sum_{i \in [r]} m_i + \sum_{a \in [r']} m'_a .$$
From the bounds on $m_i$ and $m'_a$, we get
$$m \leq \sum_{i \in [r]} f(c-1, n_i) + \sum_{a \in [r']} f(c-1, n'_a) .$$
Recall that $n_i \geq \sqrt{n}$. 
We break the second sum, in the above equation,
into two parts.
Let $R_1 = \{a \in [r'] \mid n'_a \geq \sqrt{n} \}$ 
and $R_2 = \{ a \in [r'] \mid n'_a < \sqrt{n} \}$.
%Let $R_3 := R_1 \cup [r]$. 
%Let us merge first part of the second sum with the first sum and write, 
\begin{equation}
\label{eq:threeParts}
m \leq \sum_{i \in [r]} f(c-1, n_i) +  \sum_{a \in R_1} f(c-1, n'_a) + \sum_{a \in R_2} f(c-1, n'_a).
\end{equation}

%where $n_i \geq \sqrt{n}, \; \forall i \in R_3$. 
Let us first focus on the third sum. 
Note that $\abs{R_2} \leq r' < \sqrt{n}$.
For $a \in R_2$, $n'_a < \sqrt{n}$ and hence 
$f(c-1, n'_a) < f(c-1, \sqrt{n}) = 
2^{c-2} \cdot n^{1/2 - (1/2^{c-1})}$.
So, 
\begin{equation}
\label{eq:sum3}
\sum_{a \in R_2} f(c-1, n'_a) < \sqrt{n} \cdot 2^{c-2} \cdot n^{1/2 - (1/2^{c-1})} 
= 2^{c-2} \cdot n^{1 - (1/2^{c-1})} .
\end{equation}

Now, we focus on first two sums in Equation~\ref{eq:threeParts}. As, $n_i \geq \sqrt{n}, \; \forall i \in [r]$ and 
$n'_a \geq \sqrt{n}, \; \forall a \in R_1$, we combine these two sums (with an abuse of notation)
and write the sum as follows,
$$ \sum_{i \in [r'']} f(c-1, n_i) ,$$
where $r'' = r + \abs{R_1}$, and $n_i \geq \sqrt{n}, \; \forall i \in [r''] $. 
As each $n_i \geq \sqrt{n}$, we know $r'' < \sqrt{n}$ (as $\sum n_i \leq n$). 

Observe that $f(c-1, z)$, as a function of $z$, is a concave function (its derivative is monotonically decreasing, when $z > 0$). From the properties of a concave function, we know,
$$ \frac{1}{r''}\sum_{i \in [r'']} f(c-1, n_i) \leq 
       f \left( c-1, \frac{1}{r''} \sum_{ i \in [r'']} n_i \right).
$$
Now, $\sum_{i \in [r'']} n_i \leq n$ and $f(c-1, z)$ is an increasing function (when $z >0$). Hence, 
$$ \frac{1}{r''}\sum_{i \in [r'']} f(c-1, n_i) \leq 
       f \left( c-1, \frac{1}{r''} n \right).
$$
Equivalently,

\begin{eqnarray*} 
\sum_{i \in [r'']} f(c-1, n_i) &\leq &
      r''  \cdot 2^{c-2} \cdot (n/r'')^{1 - (1/2^{c-2})} \\
	&=& 2^{c-2} \cdot n^{1- (1/2^{c-2})} \cdot (r'')^{1/2^{c-2}} \\
	& < & 2^{c-2} \cdot n^{1- (1/2^{c-2})} \cdot n^{1/2^{c-1}} \\
	& = & 2^{c-2} \cdot n^{1- (1/2^{c-1})} 
\end{eqnarray*}

Using this with Equation~\ref{eq:sum3} and substituting in Equation~\ref{eq:threeParts},
we get
$$ m < 2^{c-1} \cdot n^{1- (1/2^{c-1})} .$$
\end{proof}

Now, we combine these results with our hitting-sets for 
depth-$3$ circuits having $m$ base sets with $\d$-distance. 

\begin{theorem}
Let $C(\o{x})$ be a $n$-variate polynomial, which is a sum of $c$ 
set-multinear depth-$3$ circuits, each having top fan-in $k$. 
Then there is a $n^{O(2^{c-1} n^{1 - \epsilon} \log k) }$-time whitebox test for $C$,
where $\epsilon := 1/2^{c-1}$.
\end{theorem}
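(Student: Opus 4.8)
The plan is to reduce, in polynomial (whitebox) time, the instance $C$ to an instance of a depth-$3$ multilinear circuit with $m$ base sets having $\delta$-distance for $\delta = 1$ and $m = O\!\left(2^{c-1} n^{1-\epsilon}\right)$, and then to invoke Theorem~\ref{thm:baseSetsHS}.

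First I would observe that a sum of $c$ set-multilinear depth-$3$ circuits, each of top fan-in $k$, is itself a multilinear depth-$3$ circuit of top fan-in at most $ck$ whose product gates induce at most $c$ distinct partitions, say $\P_1, \P_2, \dots, \P_c \in \Part([n])$ (one per summand). Since the algorithm is whitebox, these partitions are read off directly from the circuit. Next, I would apply Lemma~\ref{lem:cPartitions} to $\{\P_1, \dots, \P_c\}$; its proof is constructive, so the output partition of $[n]$ into base sets $\{B_1, \dots, B_m\}$ with $m < 2^{c-1} n^{1 - (1/2^{c-1})}$ is computable in polynomial time, and for each $i \in [m]$ there is a permutation $(\P_{i_1}, \dots, \P_{i_c})$ with $d_{B_i}(\P_{i_1}, \dots, \P_{i_c}) = 1$. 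Equivalently, the restriction $C|_{(\o{x}_j = \mathbf{0} \;\forall j \neq i)}$ has its product gates orderable into a partition sequence of distance $1$, i.e.\ $C$ has $m$-base-sets-$1$-distance with the (known) base sets $\{B_1, \dots, B_m\}$.

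Finally, I would feed this into Theorem~\ref{thm:baseSetsHS} with $\delta = 1$, top fan-in $\le ck$, and $m = O\!\left(2^{c-1} n^{1-\epsilon}\right)$ base sets. This yields a hitting-set of size $n^{O(m \delta \log(ck))} = n^{O(2^{c-1} n^{1-\epsilon} \log k)}$ (the $\log c$ contribution being absorbed for constant $c$, since $\epsilon = 1/2^{c-1}$), which is the claimed running time; as the base-set decomposition and the entire reduction are produced whitebox, this is a whitebox PIT of the stated complexity.

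The genuine content here is already contained in Lemma~\ref{lem:cPartitions} (the $\sqrt{\cdot}$-type recursion that splits big colors off as base sets and threads small colors through "diagonal" base sets) and in Theorem~\ref{thm:baseSetsHS}; the only point requiring care in the present argument is the bookkeeping that "distance $1$ after \emph{some} permutation of the partitions, restricted to $B_i$" is precisely the hypothesis consumed by Theorem~\ref{thm:baseSetsHS}, whose definition of $\delta$-distance already quantifies over orderings of the product gates. I do not expect any serious obstacle beyond this.
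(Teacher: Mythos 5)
Your proposal is correct and follows essentially the same route as the paper: view the sum as a single depth-$3$ multilinear circuit of top fan-in $ck$ with at most $c$ distinct induced partitions, apply Lemma~\ref{lem:cPartitions} to produce the base-set decomposition (constructively, hence whitebox), and invoke Theorem~\ref{thm:baseSetsHS} with $\delta=1$. Your added remark about matching "distance $1$ after some permutation" to the definition of $\delta$-distance is a small bookkeeping point the paper leaves implicit, and it does go through because the $\delta$-distance definition already quantifies over orderings of the product gates.
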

\begin{proof}
As mentioned earlier, the polynomial $C(\o{x})$ can be viewed as 
computed by a depth-$3$ multilinear circuit, such that 
its product gates induce at most $c$-many distinct partitions. 
From Lemma~\ref{lem:cPartitions}, we can partition the variable set into
 $m$ base sets, such that for each of these base sets, the partitions can be sequenced
to have distance $1$, where $m := 2^{c-1} n^{1 - \epsilon}$.
Hence, the polynomial $C$ has $m$ base sets with $1$-distance and top fan-in $ck$.
Moreover, from the proof of Lemma~\ref{lem:cPartitions}, it is clear that 
such base sets can be computed in $n^{O(c)}$-time. 
From Theorem~\ref{thm:baseSetsHS}, we know there is $n^{O(m \log (ck))}$-time
whitebox test for such a circuit. 
 Substituting
the value of $m$, we get the result. 
\end{proof}

\end{document}